\definecolor{dark-red}{rgb}{0.4,0.15,0.15}
\definecolor{dark-blue}{rgb}{0.15,0.15,0.4}
\definecolor{medium-blue}{rgb}{0,0,0.5}
\definecolor{gray}{rgb}{0.5,0.5,0.5}
\definecolor{color-Ig}{rgb}{0.15,0.7,0.15}
\newcommand{\NP}{\ensuremath{\mathsf{NP}}\xspace}
\newcommand{\BPP}{\ensuremath{\mathsf{BPP}}\xspace}
\renewcommand{\P}{\ensuremath{\mathsf{P}}\xspace}
\newcommand{\Acal}{\mathcal{A}}
\newcommand{\Ocal}{\mathcal{O}}
\newcommand{\Rcal}{\mathcal{R}}
\newcommand{\FPT}{{\sf FPT}\xspace}
\renewcommand{\deg}{{\sf deg}\xspace}
\newcommand{\dec}{{\sf dec}\xspace}
\newcommand{\fixme}[1]{}
\newcommand{\fixmeperso}[1]{}
\newcommand{\lop}{{\sf lop}\xspace}
\newcommand{\opt}{{\sf opt}_{\Pi}\xspace}
\renewcommand{\P}{{\sf P}\xspace}
\newcommand{\val}{{\sf val}_{\Pi}\xspace}
\newcommand{\ub}{{\sf u}\xspace}
\newcommand{\probl}[3]{
  \begin{flushleft}
    \fbox{
      \begin{minipage}{.98\textwidth}
        \noindent {\sc #1}\\
        {\bf Input:} #2\\
        {\bf Question:} #3
      \end{minipage}}
  \end{flushleft}
}
\renewcommand{\O}{\mathcal{O}}
\newcommand{\yes}{{\sf yes}\xspace}
\newcommand{\no}{{\sf no}\xspace}
\newcommand{\MSOone}{{\sf MSO}$_1$\xspace}
\newcommand{\PPT}{{\sf PPT}\xspace}
\newcommand{\todo}[1][]{%
  \ifx/#1/%
    \textcolor{red}{TODO!}%
  \else%
    \textcolor{red}{todo: #1}%
  \fi%
}
\newcommand{\Oh}{\ensuremath{\mathcal{O}}\xspace}
\theoremstyle{plain}
\newtheorem{observation}[theorem]{Observation}
\newtheorem{conjecture}[theorem]{Conjecture}
\title{Introducing \lop-kernels: a framework for kernelization lower bounds}
\author{J\'ulio Ara\'ujo}{Departamento de Matem\'atica, Universidade Federal do Cear\'a, Fortaleza, Brazil}{julio@mat.ufc.br}{https://orcid.org/0000-0001-7074-2753}{CNPq-Pq 304478/2018-0, CAPES-PrInt 88887.466468/2019-00 and CAPES-STIC-AmSud 88881.569474/2020-01.}
\author{Marin Bougeret}{LIRMM, Universit\'e de Montpellier, CNRS, Montpellier, France}{marin.bougeret@lirmm.fr}{https://orcid.org/0000-0002-9910-4656}{}
\author{Victor Campos}{Departamento de Computa\c c\~ao, Universidade Federal do Cear\'a, Fortaleza, Brazil}{victoitor@ufc.br}{https://orcid.org/0000-0002-2730-4640}{FUNCAP - PNE-011200061.01.00/16.}
\author{Ignasi Sau}{LIRMM, Universit\'e de Montpellier, CNRS, Montpellier, France}{ignasi.sau@lirmm.fr}{https://orcid.org/0000-0002-8981-9287}{DEMOGRAPH (ANR-16-CE40-0028), ESIGMA (ANR-17-CE23-0010), ELIT (ANR-20-CE48-0008-01), and French-German Collaboration ANR/DFG Project UTMA (ANR-20-CE92-0027).}
\authorrunning{J\'ulio Ara\'ujo, Marin Bougeret, Victor Campos, and Ignasi Sau} 
\keywords{Parameterized complexity, polynomial kernel, kernelization lower bound, maximum minimal vertex cover, Erd\H{o}s-Hajnal property, induced subgraphs.}
\begin{document}

\maketitle

\begin{abstract}
In the {\sc Maximum Minimal Vertex Cover} ({\sc MMVC}) problem, we are given a graph $G$ and a positive integer $k$, and the objective is to decide whether $G$ contains a minimal vertex cover of size at least $k$.
Motivated by the kernelization of {\sc MMVC} with parameter $k$, our main contribution is to introduce a simple general framework to obtain kernelization lower bounds for a certain type of kernels for optimization problems, which we call \emph{\lop-kernels}. Informally, this type of kernels is required to preserve large optimal solutions in the reduced instance, and captures the vast majority of existing kernels in the literature.

As a consequence of this framework, we show that the trivial quadratic kernel for {\sc MMVC} is essentially optimal, answering a question of Boria et al.~[Discret. Appl. Math.~2015], and that the known cubic kernel for {\sc Maximum Minimal Feedback Vertex Set} is also essentially optimal.  We present further applications for \textsc{Tree Deletion Set} and for \textsc{Maximum Independent Set} on $K_t$-free graphs.

Back to the \textsc{MMVC} problem, given the (plausible) non-existence of subquadratic kernels for {\sc MMVC} on general graphs, we provide subquadratic kernels on $H$-free graphs for several graphs $H$, such as the bull, the paw, or the complete graphs, by making use of the Erd\H{o}s-Hajnal property. Finally, we prove that {\sc MMVC} does not admit polynomial kernels parameterized by the size of a minimum vertex cover of the input graph, even on bipartite graphs, unless ${\sf NP} \subseteq {\sf coNP} / {\sf poly}$.
\end{abstract}

\newpage

\section{Introduction}
\label{sec:intro}



A \emph{vertex cover} in a graph $G$ is a subset of vertices containing at least one endpoint of every edge. In the associated optimization problem, called \textsc{Minimum Vertex Cover}, the objective is to find, given an input graph $G$, a vertex cover in $G$ of minimum size. This problem has been one of the leitmotifs of the area of parameterized complexity~\cite{CyganFKLMPPS15,DF13}, serving as a test bed for many of the most fundamental techniques. An instance of a \emph{parameterized problem} is of the form $(x,k)$, where $x$ is the total input (typically, a graph) and $k$ is a positive integer called the \emph{parameter}. The crucial notion is that of \emph{fixed-parameter tractable} algorithm, \FPT for short, which is an algorithm deciding whether $(x,k)$ is a positive instance in time $f(k) \cdot |x|^{\Ocal(1)}$, where $f$ is a computable function depending only on $k$. In the parameterized \textsc{Vertex Cover} problem, we are given a graph $G$ and an integer parameter $k$, and the objective is to decide whether $G$ contains a vertex cover of size at most $k$. One of the main fields within parameterized complexity is \emph{kernelization}~\cite{Book-kernels}, where the objective is to decide whether an instance $(x,k)$ of a parameterized problem can be transformed in polynomial time into an equivalent instance $(x',k')$ whose total size is bounded by a function of $k$; the reduced instance is called a \emph{kernel}, and finding kernels of small size, typically polynomial or even linear in $k$ in the best case, is one of the most active areas of parameterized complexity.

Considering the ``max-min'' version of minimization problems, that is, maximizing the size of a {\sl minimal} solution of the corresponding problem, is a natural approach that has been applied to several problems such as \textsc{Dominating Set}~\cite{BazganBCFJKLLMP18,DubloisLP21} (whose ``max-min'' version is called \textsc{Upper Domination}), \textsc{Feedback Vertex Set}~\cite{DubloisHGLM20}, or \textsc{Hitting Set}~\cite{Damaschke11,MMBS-paper}. The initial motivation of this article is the ``max-min'' version of \textsc{Minimum Vertex Cover}, called \textsc{Maximum Minimal Vertex Cover}, or just \textsc{MMVC} for short.

\medskip
\noindent \textbf{Previous work}. In his habilitation, Fernau~\cite{FernauHDR} presented \FPT algorithms for  \textsc{MMVC} as well as some results about its kernelization parameterized by the solution size $k$. It is easy to note, as observed in~\cite{FernauHDR},  that the problem admits a kernel with at most $k^2$ vertices: if some vertex has degree at least $k$, we can safely answer ``\yes'' (cf. \autoref{lem:extension-nbh-is} for a proof); otherwise, the maximum degree is at most $k-1$, and it follows that every instance without isolated vertices (which may be safely removed) that has at least $k^2$ vertices is a \yes-instance, hence we have a trivial kernel with at most $k^2$ vertices. Fernau~\cite{FernauHDR}  presented a kernel with at most $4k$ vertices for \textsc{MMVC} restricted to planar instances using the algorithmic version of the Four Color Theorem~\cite{RobertsonSST97}, and claimed in~\cite[Corollary 4.25]{FernauHDR} a kernel with at most $2k$ vertices on general graphs using spanning trees. Unfortunately, this latter kernelization algorithm is incorrect, as we discuss in \autoref{ap:Fernau}.

Boria et al.~\cite{BoriaCP15} initiated a study of the complexity of  \textsc{MMVC} and presented a number of results, in particular a polynomial-time approximation algorithm with ratio $n^{1/2}$ on $n$-vertex graphs, and showed that, unless $\P = \NP$, no polynomial-time approximation algorithm with ratio $n^{1/2 - \varepsilon}$ exists for any $\varepsilon > 0$. They also presented \FPT algorithms for \textsc{MMVC} for several choices of the parameters such as the treewidth, the size of a maximum matching, or the size of a minimum vertex cover of the input graph. The authors asked explicitly whether kernels of size $o(k^2)$ exist for \textsc{MMVC} parameterized by $k$.

Zehavi~\cite{Zehavi17} presented tight \FPT algorithms, under the Strong Exponential Time Hypothesis, for \textsc{MMVC} and its weighted version parameterized by the size of a minimum vertex cover. Recently, Bonnet and Paschos~\cite{BonnetP18} and Bonnet et al.~\cite{BonnetLP18} considered the inapproximability of \textsc{MMVC} in subexponential time.

Note that the \textsc{MMVC} problem is the dual of the well-studied \textsc{Minimum Independent Dominating Set} problem (to see this, note that the complement of any minimal vertex cover is an independent dominating set), which has applications in wireless and ad-hoc networks~\cite{HurinkN08}. We refer to the survey of Goddard and Henning~\cite{GoddardH13}.

\medskip
\noindent \textbf{Our results and techniques}. The starting motivation of this article is the kernelization of the \textsc{MMVC} problem, which has been almost unexplored so far in the literature. This initial motivation has resulted in a general framework that can be applied to a broad class of optimization problems in order to derive kernelization lower bounds.

Namely,
motivated by the question of Boria et al.~\cite{BoriaCP15} about the existence of subquadratic kernels for \textsc{MMVC}, we introduce a generic framework to obtain kernelization lower bounds for a ``certain type'' of kernels for parameterized maximization or minimization problems (in particular, for \textsc{MMVC}), based on a hypothesis that guarantees an inapproximability result, typically $\P \neq\NP$. Informally, by ``certain type'' we mean kernelization algorithms that, in polynomial time, either decide the instance (by answering ``\yes'' or ``\no'') or produce an equivalent instance of the considered problem in which the value of an optimal solution is ``preserved'', in the sense that it may drop only by the drop suffered by the parameter; see \autoref{sec:lop-max-def} for the formal details for the case of maximization problems. We call such kernels \emph{large optimal preserving kernels}, or \emph{\lop-kernels} for short. Even if this type of kernels may seem restrictive, in particular we are not aware of any ``non-artificial'' kernel for a maximization problem, such as those that have become nowadays standard~\cite{Book-kernels}, which is {\sl not} a \lop-kernel. (We do have such an example for a minimization problem, as discussed later.)
The core idea of our approach is to show that a \lop-kernel yields a polynomial-time approximation algorithm whose ratio depends on the size (and most importantly, on the degree) of the kernel, and to use known inapproximability results to obtain the desired lower bound.

We present our framework of \lop-kernels separately for maximization (\autoref{sec:framework-max}) and minimization (\autoref{sec:framework-min}) problems. Even if both versions are similar, they are {\sl not} totally symmetric, and a number of technical differences pop up;
we discuss them in detail as they appear in \autoref{sec:framework-min}. Our general result is stated in \autoref{thm:lop-kernelsG} and \autoref{thm:lop-kernelsGmin} for maximization and minimization problems, respectively. In order be able to apply our framework to an optimization problem, we need it to be ``well-behaved'', a mild condition defined in \autoref{sec:framework-max} and \autoref{sec:framework-min} that, for instance, for vertex-optimization problems is weaker than their decision version being in \NP. Also, our results distinguish the existence of constructive or non-constructive approximation algorithms. Since our framework seems to particularly fit vertex-optimization problems, we present the particular cases of \autoref{thm:lop-kernelsG} and \autoref{thm:lop-kernelsGmin} for vertex-maximization and vertex-minimization problems in \autoref{thm:lop-kernels} and \autoref{thm:lop-kernelsmin}, respectively. In order to ease the application of our results to concrete problems, we provide in \autoref{cor:lop-kernels} and \autoref{cor:lop-kernelsmin} the ``contrapositive'' versions of \autoref{thm:lop-kernels} and \autoref{thm:lop-kernelsmin}, respectively.

\medskip
\noindent
\emph{Applications of our framework}. Combining \autoref{cor:lop-kernels} with the $\O(n^{\frac{1}{2}-\varepsilon})$-inapproximability result for {\sc MMVC} by Boria et al.~\cite{BoriaCP15} immediately rules out (cf. \autoref{cor:lop1}) the existence of a \lop-kernel for {\sc MMVC} with $\Ocal(k^{2 - \varepsilon})$ vertices for any $\varepsilon > 0$, unless $\P = \NP$. Thus, while \autoref{cor:lop1} does not completely rule out the existence of subquadratic kernels for \textsc{MMVC}, it tells that, if such a kernel exists, it should consist of ``non-standard'' reduction rules.

Interestingly, our framework has consequences beyond the {\sc MMVC} problem. One of them concerns the {\sc Maximum Minimal Feedback Vertex Set} ({\sc MMFVS}) problem, defined in the natural way. Dublois et al.~\cite{DubloisHGLM20} recently provided a cubic kernel for {\sc MMFVS} parameterized by the solution size, and proved that the problem does not admit an $\O(n^{\frac{2}{3}-\varepsilon})$-approximation algorithm for any $\varepsilon>0$, unless $\P = \NP$. By applying \autoref{cor:lop-kernels}  we directly  obtain (\autoref{cor:lop2}) that the cubic kernel of Dublois et al.~\cite{DubloisHGLM20} is ``essentially'' optimal.

Another application of our results deals with the \textsc{Tree Deletion Set} problem. In this case, the fact that this problem does not admit a polynomial-time $\O(n^{1-\varepsilon})$-approximation for any $\varepsilon > 0$ unless $\P \neq\NP$~\cite{Yannakakis79} implies, together with \autoref{cor:lop-kernelsmin}, that {\sc Tree Deletion Set} parameterized by the solution size does not admit a polynomial \lop-kernel, unless $\P = \NP$ (\autoref{cor:lop3}). However, \textsc{Tree Deletion Set} {\sl does} admit a polynomial kernel with $\O(k^4)$ vertices~\cite{GiannopoulouLSS16}.
Therefore, this polynomial kernel cannot be a \lop-kernel, and so far it constitutes the only non-artificial example of non-\lop-kernel that we are aware of.

Our last application concerns the \textsc{Maximum Independent Set} problem restricted to $K_t$-free graphs. In particular, we show (\autoref{cor:lop4}) that  a \lop-kernel with $\Ocal(k^{t -1- \varepsilon})$ vertices for {\sc Maximum Independent Set} on $K_t$-free graphs would improve the best known approximation ratio $n^{\frac{t-2}{t-1}}$ that follows from Ramsey's theorem~\cite{Ramsey}. Finally, generalizing a conjecture of Bonnet et al.~\cite{BonnetTTW20}, we conjecture that for every fixed graph $H$, the {\sc Maximum Independent Set} problem restricted to $H$-free graphs admits a polynomial \lop-kernel.


\medskip
\noindent \emph{Comparison with other frameworks.} Compared to existing frameworks to obtain lower bounds on kernelization, such as cross-compositions~\cite{BodlaenderDFH09,BodlaenderJK14}, weak compositions~\cite{DellM12,DellM14,HermelinW12}, polynomial parameter transformations~\cite{BodlaenderTY11, Binkele-RaibleFFLSV12}, or techniques to obtain lower bounds on the coefficients of linear kernels~\cite{ChenFKX07},
or that relate approximation and kernelization~\cite{GuoKK11,LokshtanovPRS17,BiswasRS20,Kratsch12,Abu-KhzamBCF14}, our approach has the advantages that it is quite simple, straightforward to apply, and relies on the same hypothesis on which the corresponding inapproximability result is based, typically the standard hypothesis that $\P \neq \NP$.  On the negative side, it has the following two drawbacks. The first one is that, in order to obtain a non-trivial lower bound on the kernel size,  it can only be applied to  problems which are quite hard to approximate, for example within a factor $\Ocal(n^{r - \varepsilon})$ for some constant $r >0$, as it is the case of \textsc{MMVC} and \textsc{MMFVS}. The second, and probably most important, drawback of our techniques is that they are able to rule out the existence of what we call \lop-kernels of certain sizes, but smaller non-standard kernels that do not preserve the value of large optimal solutions might, a priori, still exist (as it is the case for \textsc{Tree Deletion Set}, as discussed above). Hence, since our framework  seems to be orthogonal to existing ones, we think that it adds to the above list of techniques to obtain kernelization lower bounds.

\medskip
\noindent \emph{Other results on the kernelization of {\sc MMVC}.}
Coming back to the \textsc{MMVC} problem parameterized by the solution size, given the above negative result on general graphs, we identify graph classes where  \textsc{MMVC} is still \NP-hard and admits a subquadratic kernel. In particular, we deal with graph classes defined by excluding an {\sl induced} subgraph $H$ that satisfies the \emph{Erd\H{o}s-Hajnal property}~\cite{ErdosH89}, that is, for which there exists a constant $\delta>0$ such that every $H$-free graph on $n$ vertices contains either a clique or an independent set of size $n^{\delta}$. In particular, we present a kernel for \textsc{MMVC} with $\Ocal(k^{7/4})$ vertices on the well-studied class of bull-free graphs (\autoref{thm:kernel-bull-free}), with $\Ocal(k^{\frac{2t-3}{t-1}})$ vertices on $K_t$-free graphs graphs for every $t \geq 3$ (\autoref{thm:kernel-Kt-free}), and with $\Ocal(k^{5/3})$ vertices on paw-free graphs (\autoref{thm:kernel_paw}). To the best of our knowledge, this is the first time that the Erd\H{o}s-Hajnal property is used to obtain polynomial kernels (we would like to note that it was used by Kratsch et al.~\cite{KratschPRR14} to obtain kernelization lower bounds).

Our strategy to obtain these subquadratic kernels on $H$-free graphs is as follows. By the high-degree rule mentioned above, given an instance $(G,k)$, we may assume that the maximum degree of $G$ is at most $k-1$. We find greedily a minimal vertex cover $X$ of $G$. If $|X| \geq k$ we are done, so we may assume that $|X| \leq k-1$, hence the goal is to bound the size of $S:= V(G) \setminus X$. Using that $G[X]$ is also $H$-free, the Erd\H{o}s-Hajnal property implies (\autoref{lem:EH-partition}) that $X$ can be partitioned in polynomial time into a sublinear (in $k$) number of independent sets and cliques. Since $S$ is an independent set and we may assume that $G$ has no isolated vertices, in order to bound $|S|$ by a subquadratic function of $k$, it is enough to show that, for each of the sublinearly many cliques or independent sets $Y$ that partition $X$, its neighborhood in $S$ has size $\Ocal(k)$. This is easy if $Y$ is an independent set: if $|N_S(Y)| \geq k$ we can conclude that $(G,k)$ is a \yes-instance (\autoref{lem:extension-nbh-is}), so we may assume that $|N_S(Y)| \leq k-1$. The case where $Y$ is a clique is more interesting, and we need ad-hoc arguments depending on each particular excluded induced subgraph $H$.

We also present several positive results for \textsc{MMVC} restricted to other particular graph classes, such as $K_{1,t}$-free graphs (\autoref{lem:kernel_K1t}), graph classes with bounded chromatic number (\autoref{lem:polyChi}), or graph classes with bounded cliquewidth (\autoref{obs:kernel_MSO}).

Finally, we show (\autoref{thm:no-poly-kernel})
that \textsc{MMVC}, parameterized by the size of a minimum vertex cover (or of a maximum matching) of the input graph, does not admit a polynomial kernel unless ${\sf NP} \subseteq {\sf coNP} / {\sf poly}$, even restricted to bipartite graphs.
This result complements the \FPT algorithms for \textsc{MMVC} under these parameterizations given by Boria et al.~\cite{BoriaCP15} and Zehavi~\cite{Zehavi17}, and shows that, in what concerns the existence of polynomial kernels for \textsc{MMVC}, the most natural structural parameters smaller than the solution size are not large enough to yield polynomial kernels (note that the treewidth of any graph is at most one more than its vertex cover number, hence our result rules out the existence of polynomial kernels for \textsc{MMVC} parameterized by treewidth as well). The proof consists of a polynomial parameter transformation from \textsc{Monotone Sat}  parameterized by the number of variables. In particular, our reduction yields also the \NP-hardness of  \textsc{MMVC} on bipartite graphs, which provides an alternative proof to the one of Boliac and Lozin~\cite{BoliacL03} via the \NP-hardness  of \textsc{Minimum Independent Dominating Set} on bipartite graphs.

\medskip
\noindent \textbf{Organization}. In \autoref{sec:prelim} we provide some basic preliminaries about graphs, the {\sc MMVC} problem, parameterized complexity, and approximation algorithms.
In \autoref{sec:framework-max} (resp. \autoref{sec:framework-min}) we present our framework to obtain kernelization lower bounds for maximization (minimization) problems. In both sections, the contents are split into three subsections: we start with the general definitions, then we focus on the particular and relevant case of vertex-optimization problems, and then we present the general results for what we call ``well-behaved'' optimization problems. In \autoref{sec:applications-lop} we present several applications of the framework of \lop-kernels for concrete problems, and in \autoref{ap:Fernau} we discuss the flaw in the linear kernel for \textsc{MMVC} claimed by Fernau~\cite{FernauHDR}. \autoref{sec:subquadratic-kernels-MMVC} is devoted to the subquadratic kernels for \textsc{MMVC} on particular graph classes, as well as to other positive results for \textsc{MMVC}. Our reduction to rule out
 the existence of polynomial kernels for \textsc{MMVC} parameterized by the size of a minimum vertex cover (or a maximum matching) is presented in \autoref{sec:nopolykernel}.  We conclude the article in \autoref{sec:concl} with a discussion and some directions for further research.



%
%

\section{Preliminaries}
\label{sec:prelim}



\noindent\textbf{Graphs and functions.} We use standard graph-theoretic notation, and we refer the reader to~\cite{Diestel12} for any undefined notation. For an integer $p \geq 1$, we let $[p]$ be the set containing all integers $i$ with $1 \leq i \leq p$. We use $\uplus$ to denote  the disjoint union. We will only consider finite undirected graphs without loops nor multiple edges, and we denote an edge between two vertices $u$ and $v$ by $\{u,v\}$. A subgraph $H$ of a graph $G$ is \emph{induced} if $H$ can be obtained from $G$ by deleting a set of vertices $D = V(G) \setminus S$, and we denote $H = G[S]$. Given a graph $H$, a graph $G$ is \emph{$H$-free} if it does not contain any induced subgraph isomorphic to $H$.  If ${\cal H}$ is a collection of graphs, a graph $G$ is \emph{${\cal H}$-free} if it is $H$-free for every $H \in {\cal H}$. For a graph $G$ and a set $S \subseteq V(G)$, we use the notation $G \setminus S =G[V(G) \setminus S]$, and for a vertex $v \in V(G)$, we abbreviate $G \setminus \{v\}$ as $G \setminus v$. A vertex $v$ is \emph{complete} to a set $S \subseteq V(G)$ if $v$ is adjacent  to every vertex in $S$.

The \emph{open} (resp. \emph{closed}) \emph{neighborhood} of a vertex $v$ in a graph $G$ is denoted by $N(v)$ (resp. $N[v]$), whenever the graph $G$ is clear from the context. For vertex sets $X,Y \subseteq V(G)$, we define $N[X] = \bigcup_{v \in X}N[v]$, $N(X) = N[X] \setminus X$, $N_Y[X] = N[X] \cap Y$, and $N_Y(X) = N_Y[X] \setminus X$. The \emph{degree} of a vertex $v$ in a graph $G$ is defined as $|N(v)|$, and we denote it by $\deg_G(v)$, or just $\deg(v)$ of the graph is clear from the context.
For an integer $t \geq 1$, we denote by $P_t$ (resp. $I_t$, $K_t$) the path (resp. edgeless graph, complete graph) on $t$ vertices.
For two integers $a,b\geq 1$, we denote by $K_{a,b}$ the bipartite graph with parts of sizes $a$ and
$b$. 

A \emph{clique} (resp. \emph{independent set}) in a graph $G$ is a set of vertices that are pairwise adjacent (resp. not adjacent). A graph property is \emph{hereditary} if whenever it holds for a graph $G$, it holds for all its induced subgraphs as well. Note that the properties of being an edgeless graph, a complete graph, or an independent set are hereditary.  We denote by $\Delta(G)$ (resp. $\omega(G)$ the maximum vertex degree (resp. clique size) of a graph $G$.

A \emph{vertex cover} of a graph $G$ is a set of vertices containing at least one endpoint of every edge, and it is \emph{minimal} if no proper subset of it is a vertex cover. One of the concrete problems that we study in this paper is formally stated as follows. We state it as a decision problem, since most of our results consider its parameterization by the solution size $k$.

\vspace{-.5cm}
\probl{Maximum Minimal Vertex Cover (MMVC)}{A graph $G$ and a positive integer $k$.}{Does $G$ contain a minimal vertex cover of size at least $k$?}

For a graph $G$, we denote by  ${\sf mmvc}(G)$ the maximum size of a minimal vertex cover of $G$. The following observation has been already used in previous work~\cite{BoriaCP15,Zehavi17}.
\begin{observation}\label{obs:characterization-mvc}
Let $G$ be a graph. A set $X \subseteq V(G)$ is a minimal vertex cover of $G$ if and only if $X$ is a vertex cover of $G$ and, for every vertex $v \in X$, $N(v) \nsubseteq X$.
\end{observation}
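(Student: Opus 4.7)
The plan is to prove both directions of the equivalence directly from the definitions, using only the fact that a vertex cover must contain at least one endpoint of every edge. Since the statement is purely combinatorial and local to the neighborhood of each vertex in $X$, no auxiliary construction is required.

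For the forward direction, I would assume that $X$ is a minimal vertex cover and argue by contradiction: suppose some $v \in X$ satisfies $N(v) \subseteq X$. The natural candidate to contradict minimality is $X' := X \setminus \{v\}$, and the task reduces to checking that $X'$ still covers every edge. An edge $e$ not incident to $v$ is already covered by $X$ via some endpoint distinct from $v$, which remains in $X'$. An edge $e = \{v,u\}$ incident to $v$ satisfies $u \in N(v) \subseteq X$, and since $G$ has no loops we have $u \neq v$, so $u \in X'$. Thus $X'$ would be a vertex cover strictly contained in $X$, contradicting minimality.

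For the reverse direction, I would assume that $X$ is a vertex cover with $N(v) \nsubseteq X$ for every $v \in X$, and show that $X$ is minimal. The key observation is that the hypothesis provides, for each $v \in X$, a ``private witness'' $u_v \in N(v) \setminus X$; the edge $\{v,u_v\}$ can then be covered only through $v$, since $u_v \notin X$ and therefore $u_v \notin Y$ for any $Y \subseteq X$. Consequently, every vertex cover $Y \subseteq X$ must contain each $v \in X$, forcing $Y = X$, so no proper subset of $X$ is a vertex cover.

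I do not anticipate any substantive obstacle: this is a routine characterization whose whole content is that removing a vertex $v$ from a vertex cover preserves the cover property exactly when all neighbors of $v$ remain in the cover. The only mild care needed is to invoke the absence of loops to guarantee $u_v \neq v$ in both directions, so that the witnessing argument goes through.
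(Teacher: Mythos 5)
Your proof is correct and complete; both directions are argued exactly as one would expect, and you correctly note the only delicate points (loop-freeness, and that the hypothesis $N(v)\nsubseteq X$ automatically supplies the witness $u_v$, in particular excluding isolated vertices from $X$). The paper itself states this as an observation without proof, citing earlier work, so there is nothing to compare against beyond confirming that your argument is the standard one.
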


The next lemma provides a useful way to conclude that we are dealing with a \yes-instance in the kernelization algorithms presented in \autoref{sec:subquadratic-kernels-MMVC}.

\begin{lemma}\label{lem:extension-nbh-is}
Let $G$ be a graph and let $S \subseteq V(G)$ be an independent set. There exists a minimal vertex cover of $G$ containing $N(S)$.
\end{lemma}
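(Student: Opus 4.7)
The plan is to construct the desired minimal vertex cover explicitly by extending $N(S)$ with a minimal vertex cover of the ``remaining'' graph, and then verify minimality using \autoref{obs:characterization-mvc}.

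First, I would set $G' = G - N(S)$ and observe two useful facts. Since $S$ is independent, we have $S \cap N(S) = \emptyset$, so $S \subseteq V(G')$. Moreover, every neighbor in $G$ of a vertex of $S$ lies in $N(S)$ by definition, so the vertices of $S$ are isolated in $G'$. Then I would pick any inclusion-wise minimal vertex cover $C$ of $G'$. Because an isolated vertex belongs to no edge, no minimal vertex cover can contain one, so $C \cap S = \emptyset$. Define $X = N(S) \cup C$; by construction $X \cap S = \emptyset$.

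Next I would verify that $X$ is a vertex cover of $G$. Any edge $e = \{u,v\}$ of $G$ with an endpoint in $S$ has its other endpoint in $N(S) \subseteq X$. Any other edge has both endpoints in $V(G) \setminus (S \cup N(S))$, hence is an edge of $G'$, and is therefore covered by $C \subseteq X$.

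Finally, I would apply \autoref{obs:characterization-mvc} to show minimality, handling the two sides of the union separately. For $v \in N(S)$, by definition $v$ has a neighbor $s \in S$; since $S \cap X = \emptyset$, we get $N(v) \not\subseteq X$. For $v \in C$, minimality of $C$ in $G'$ gives an edge $\{v,u\}$ of $G'$ with $u \notin C$; as $u \in V(G') = V(G)\setminus N(S)$ and $u \notin C$, we have $u \notin X$, whence $N(v) \not\subseteq X$. The only subtle point, and the one I would double-check carefully, is that the vertices of $S$ indeed become isolated after removing $N(S)$, since this is what simultaneously prevents $C$ from accidentally intersecting $S$ and ensures that the witnesses certifying minimality for vertices of $N(S)$ lie outside $X$.
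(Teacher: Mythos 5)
Your proof is correct, but it goes in the opposite direction from the paper's. The paper argues top-down: since $S$ is independent, $V(G)\setminus S$ is a vertex cover, so it contains an inclusion-wise minimal vertex cover $X$ disjoint from $S$; then every $v\in N(S)$ must lie in $X$, since otherwise the edge from $v$ to its neighbour in $S$ would be uncovered. You instead build the cover bottom-up as $X = N(S)\cup C$ with $C$ a minimal cover of $G - N(S)$, and then verify both the covering property and minimality via \autoref{obs:characterization-mvc}. Your argument is sound (the key observations — that $S$ becomes isolated in $G'$, hence $C\cap S=\emptyset$, and that witnesses for minimality of $C$ in $G'$ avoid $N(S)$ and hence avoid $X$ — all check out), and it has the mild advantage of exhibiting the cover more explicitly; the price is a longer case analysis. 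The paper's version gets the same conclusion in three lines by letting the greedy minimization do the work and only arguing that $N(S)$ cannot be discarded.
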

\begin{proof}
Note that, since $S$ is an independent set,  $V(G) \setminus S$ is a vertex cover of $G$. Hence, there exists a minimal vertex cover $X$ of $G$ such that $X \subseteq V(G) \setminus S$. We claim that $N(S) \subseteq X$. Suppose for the sake of contradiction that there exists a vertex $v \in N(S)$ such that $v \notin X$. Since $v$ has a neighbor $u$ in $S$ and $S \cap X = \emptyset$, the edge $\{u,v\}$ would not be covered by $X$.
\end{proof}

Note that, in particular, \autoref{lem:extension-nbh-is} implies that if $(G,k)$ is an instance of the {\sc Maximum Minimal Vertex Cover} problem and $v \in V(G)$ is a vertex of degree at least $k$, then we can conclude that  $(G,k)$ is a \yes-instance. This will allow us to assume, in our kernelization algorithms,  that $\Delta(G) \leq k-1$.

\medskip
\noindent
\textbf{Parameterized complexity.} We refer the reader to~\cite{DF13,CyganFKLMPPS15} for basic background on parameterized complexity, and we recall here only some basic definitions used in this article. A \emph{parameterized problem} is a language $L \subseteq \Sigma^* \times \mathbb{N}$.  For an instance $I=(x,k) \in \Sigma^* \times \mathbb{N}$, $k$ is called the \emph{parameter}.

A parameterized problem is \emph{fixed-parameter tractable} (\textsf{FPT}) if there exists an algorithm $\mathcal{A}$, a computable function $f$, and a constant $c$ such that given an instance $I=(x,k)$, $\mathcal{A}$   (called an \textsf{FPT} \emph{algorithm}) correctly decides whether $I \in L$ in time bounded by $f(k) \cdot |I|^c$. For instance, the \textsc{Vertex Cover} problem parameterized by the size of the solution is \textsf{FPT}.

%

For an instance $(x, k)$ of a parameterized problem $Q$, a \emph{kernelization algorithm} is an algorithm $\mathcal{A}$ that, in polynomial time, generates from $(x, k)$ an equivalent instance $(x', k')$ of $Q$ such that $|x'| + k' \leq f(k)$, for some computable function $f : \mathbb{N} \rightarrow \mathbb{N}$, where $|x'|$ denotes the size of $x'$. If $f(k)$ is bounded from above by a polynomial of the parameter, we say that $Q$ admits a \emph{polynomial kernel}. In particular, if $f(k)$ is bounded by a linear (resp. quadratic) function, then we say that $Q$ admits a \emph{linear} (resp. \emph{quadratic}) kernel.

A \emph{polynomial  parameter transformation}, abbreviated as \PPT, is an algorithm  that, given an instance $(x, k)$ of a parameterized problem $A$, runs in time polynomial  in $|x|$ and outputs an instance $(x', k')$ of a parameterized problem $B$ such that $k'$ is bounded from above by a polynomial on $k$ and $(x, k)$ is positive if and only if $(x',k')$ is positive. If a parameterized problem $A$ does not admit a polynomial kernel unless ${\sf NP} \subseteq {\sf coNP} / {\sf poly}$ and there exists a \PPT from $A$ to a parameterized problem $B$, then $B$ does not admit a polynomial kernel unless ${\sf NP} \subseteq {\sf coNP} / {\sf poly}$ either~\cite{CyganFKLMPPS15}.

\medskip
\noindent
\textbf{Approximation algorithms.}  We refer the reader to~\cite{WilliamsonS-book} for background on approximation algorithms, and we define here only some non-standard notions used in this article.

As, when dealing with graph problems, one typically measures the size of kernels in terms of the number of vertices or edges (and not in the classical bit-size of the instance), we introduce an arbitrary notion of size as follows. Given a optimization problem $\Pi$, we say that a non-negative integer-valued function $|\cdot|$ is a \emph{size function} if, given an instance $I$ of $\Pi$, $|I|$ can be computed in polynomial time in the classical bit-size, and $|I|$ is upper-bounded by a polynomial in the classical bit-size.

For an optimization problem $\Pi$, an instance $I$ of $\Pi$, and a feasible solution $s$ of $\Pi$ in $I$, we denote by $\val(I,s)$ the value of the objective function of $\Pi$ for $s$.  We restrict ourselves to optimization problems $\Pi$ whose objective functions for feasible solutions take non-negative integer values. For a maximization (resp. minimization) problem $\Pi$
and an instance $I$ of $\Pi$, we denote by $\opt(I)$ the maximum (resp. minimum) of $\val(I,s)$ over all feasible solutions $s$ of $\Pi$ in $I$.

A maximization (resp. minimization) problem $\Pi$ is a \emph{vertex-maximization} (resp. \emph{vertex-minimization}) problem if their instances consist of a graph $G$, and the objective is to find a vertex set $S \subseteq V(G)$ of maximum (resp. minimum) size satisfying some conditions. For instance, \textsc{Maximum Independent Set} and \textsc{Minimum Vertex Cover} are typical examples of vertex-maximization and vertex-minimization problems, respectively.

For kernelization purposes, we need to consider the \emph{decision} version of optimization problems. For a maximization (resp. minimization) problem $\Pi$ whose instances are of the form $I$, we denote by $\Pi_{\dec}$ the decision problem whose instances are of the form $(I,k)$, where $k$ is a non-negative integer, and where $(I,k)$ is a \emph{\yes-instance} of $\Pi_{\dec}$ if  $\opt(I) \geq k$ (resp. $\opt(I) \leq k$), and a \emph{\no-instance} otherwise.

Since we aim at establishing a link between the existence of certain kernels and approximation algorithms, we need to take care of {\sl constructibility} issues. The standard definition of a kernelization algorithm~\cite{Book-kernels} does not involve constructing a solution of the considered problem. On the other hand, the standard definition of an approximation algorithm~\cite{WilliamsonS-book} {\sl does} take into account the construction of the corresponding solution. Hence, in order to establish such a connection, we need to consider slightly ``non-standard'' definitions of these objects.

Namely, we distinguish between constructive and non-constructive approximation algorithms, so that the kernelization lower bounds that we present are able to rule out constructive or non-constructive kernels (cf. the second paragraph of \autoref{sec:framework-max}). To this end, we say that an algorithm for a maximization (resp. minimization) decision problem $\Pi_{\dec}$  \emph{constructively decides} an instance $(I,k)$ if, whenever it holds that $\opt(I) \geq k$ (resp. $\opt(I) \leq k$), the algorithm outputs a feasible solution $s$ such that $\val(I,s) \geq k$ (resp. $\val(I,s) \leq k$).

When using the term ``approximation algorithm'' with ratio $\rho \geq 1$ for a maximization (resp. minimization) problem $\Pi$, we assume, unless stated otherwise, that it is \emph{constructive}, that is, that the algorithm, given an instance $I$ of $\Pi$, outputs a feasible solution $s$ of $\Pi$ in $I$ such that $\opt(I)/\val(I,s) \leq \rho$ (resp. $\val(I,s) / \opt(I) \leq \rho$). Note that the approximation ratio $\rho$ is, in general, a non-negative integer-valued function that depends on $I$.

We define a \emph{value-approximation algorithm} with ratio $\rho \geq 1$ for a maximization (resp. minimization) problem $\Pi$ as an algorithm that, given an instance $I$ of $\Pi$, returns a non-negative integer $k$ such that $1 \leq \opt(I)/k \leq \rho$ (resp. $1 \leq k / \opt(I) \leq \rho$). Again, here $\rho$ is, in general, a non-negative integer-valued function that depends on $I$. Note that a value-approximation algorithm is not only not required to construct a feasible solution with value $k$, but also not required to guarantee that such a solution exists.


\section{A framework for ruling out certain polynomial kernels: the case of maximization problems}
\label{sec:framework-max}

In this section we introduce our generic framework to obtain lower bounds on the size of a certain type of polynomial kernels, which we call \emph{\lop-kernels} (see \autoref{def:lop-kernel}), for a broad class of maximization problems that we proceed to introduce. Informally, the framework is based on simple and self-contained arguments proving that a ``small'' \lop-kernel implies the existence of a ``good'' approximation algorithm. Then, the contrapositive of this statement implies that inapproximability results can be turned into \lop-kernel lower bounds.

It worth mentioning here that most of the inapproximability results in the literature hold for the {\sl value}-approximation algorithms as defined  at the end of \autoref{sec:prelim}, that is, for algorithms that are {\sl not} required  to construct in polynomial time an appropriate solution of the corresponding problem, but only to report a value within the appropriate range. In order to guarantee that it is also possible to use our framework when only the non-existence of {\sl constructive} approximation algorithms is known, we introduce a variant of \lop-kernels, called \emph{constructive \lop-kernels}, such that their existence implies the existence of a constructive approximation algorithm. However, in a first reading, we recommend to skip all technical details concerning constructibility.


We say that a maximization problem $\Pi$  is \emph{well-behaved} if it comes equipped with a size function (as defined in \autoref{sec:prelim}) and it satisfies the following condition, which we denote by $C^{\max}$:
\begin{quotation}
  \noindent There exists an algorithm that, given as input a real number $c$ and an instance  $I$ of $\Pi$ such that $\opt(I) \le c$,  runs in polynomial time for every fixed $c$ and either decides that $\opt(I)=0$ and provides a feasible solution $s$ with $\val(I,s)=0$, or provides a feasible solution~$s$ with $\val(I,s) > 0$.
\end{quotation}

Observe that most of the classical maximization problems are well-behaved, and in particular any vertex-maximization problem whose decision version belongs to $\NP$ is well-behaved, as we can enumerate all subsets of vertices of size at most $c$, and for each of them verify in polynomial time if it is a feasible solution.
Given a well-behaved maximization problem $\Pi$,
we say that a function $\ub: \mathds{N} \to \mathds{N}$ is an \emph{upper bound function} for $\Pi$ if for any instance $I$ of $\Pi$, it holds that $\opt(I) \le \ub(|I|)$, where $|\cdot|$ is the size function of $\Pi$.
Throughout the paper, we assume that the notions of size used in both the size of kernels and the upper bound function are the same.

The reminder of this section is organized as follows. In \autoref{sec:lop-max-def} we present the definition of \lop-rules and \lop-kernels for well-behaved maximization problems, and we prove a general technical result, namely \autoref{lem:lop-dual}.
In \autoref{sec:res-max} we present the connection between \lop-kernels and approximation algorithms for vertex-maximization problems, and in \autoref{sec:res-max-gnl}
 we generalize it to arbitrary well-behaved maximization problems.

\subsection{Definition of \lop-rules and \lop-kernels}
\label{sec:lop-max-def}

\begin{definition}\label{def:lop}
  A \emph{large optimal preserving} reduction rule, or \emph{\lop-rule} for short, for a well-behaved maximization problem $\Pi$, is a polynomial-time algorithm $R$ that, given an instance
  $(I,k)$ of $\Pi_{\dec}$, computes another instance $(I',k')$ of $\Pi_{\dec}$ with $0 \leq k' \leq k$ and such that
  \begin{enumerate}
    \item if $(I,k)$ is a \no-instance of $\Pi_{\dec}$, then $(I',k')$ is a \no-instance of $\Pi_{\dec}$, and
    \item\label{point3}  if $(I,k)$ is a  \yes-instance of $\Pi_{\dec}$, then $\opt(G') \geq \opt(G) - (k-k')$, implying that $(I',k')$ is a \yes-instance of $\Pi_{\dec}$.
  \end{enumerate}
  A \lop-rule $R$ is \emph{constructive} if, given $I$ and any solution $s'$ of $I'$ of such that $\val(I',s') \ge k'$, it constructs (in polynomial time) a solution $s$ of $I$ such that $\val(I,s) \ge k$.
\end{definition}

Note that Property~\ref{point3} in \autoref{def:lop} is stronger than the implication ``if $(I,k)$ is a \yes-instance  of $\Pi_{\dec}$, then $(I',k')$ is a \yes-instance  of $\Pi_{\dec}$'', which would yield the definition of a classical kernelization algorithm~\cite{CyganFKLMPPS15,DF13}.
Indeed, when we consider how this latter implication is generally proved in safeness proofs of classical kernels, one of the following scenarios often occurs:
\begin{enumerate}[(a)]
\item\label{case1} For every solution $s$ of $I$ there exists a solution $s'$ of $I'$ with $\val(I',s') \ge \val(I,s)-(k-k')$.
\item\label{case2} For every solution $s$ of $I$ with $\val(I,s) \ge k$, there exists a solution $s'$ of $I'$ with $\val(I',s') \ge \val(I,s)-(k-k')$.
\item\label{case3} If there exists a solution $s$ of $I$ with $\val(I,s) \ge k$, then there exists a solution $s'$ of $I'$ with $\val(I',s') \ge  k'$.
\end{enumerate}

In Case~(\ref{case1}), the rule preserves all optimal values, as it implies that $\opt(G') \ge \opt(G)-(k-k')$.
In Case~(\ref{case2}), the rule preserves only large optimal values, as it implies that if $\opt(G) \ge k$, then $\opt(G') \ge \opt(G)-(k-k')$, implying Property~\ref{point3} above.
Note that if $\opt(G) < k$, then $\opt(G')$ and $\opt(G)$ are not necessarily related.  This justifies our choice for ``large optimal preserving'' rules.
Case~(\ref{case3}) corresponds to the weaker and classical implication ``if $(I,k)$ is a \yes-instance of $\Pi_{\dec}$, then $(I',k')$ is a \yes-instance of $\Pi_{\dec}$''.

The following observation is an immediate consequence of the definition of a \lop-rule.

\begin{observation}\label{prop:lop-chain}
  \lop-rules can be composed. Formally, consider two \lop-rules $R_1$ and $R_2$.
  Then, the rule $R$ that, given a instance $(I,k)$ of $\Pi_{\dec}$, returns $R_2(R_1(I,k))$, is also a \lop-rule. Moreover, if $R_2$ and $R_1$ are constructive, then $R$ is also constructive.
\end{observation}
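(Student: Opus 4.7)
The plan is to unfold the definition of a \lop-rule for $R_1$ and $R_2$ separately and then trace through the composition. Let me write $(I',k') = R_1(I,k)$ and $(I'',k'') = R_2(I',k')$, so that $R(I,k) = (I'',k'')$. The required properties of $R$ split into four checks: polynomial running time, the inequality $0 \le k'' \le k$, preservation of \no-instances, and the large-optimal-preserving inequality. Constructibility is then a separate bookkeeping step.

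First I would observe that polynomial running time and $0 \le k'' \le k$ are immediate: both $R_1$ and $R_2$ are polynomial-time algorithms, their outputs have size polynomial in the inputs, and the chain $0 \le k'' \le k' \le k$ follows by applying the parameter bound of \autoref{def:lop} twice. Next, for \no-instance preservation I would contrapose: if $(I,k)$ is a \no-instance of $\Pi_{\dec}$, then property~1 applied to $R_1$ gives that $(I',k')$ is a \no-instance, and a second application to $R_2$ gives that $(I'',k'')$ is a \no-instance.

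The heart of the argument is the large-optimal-preserving property. Suppose $(I,k)$ is a \yes-instance of $\Pi_{\dec}$. Property~\ref{point3} for $R_1$ gives that $(I',k')$ is also a \yes-instance and $\opt(I') \ge \opt(I) - (k - k')$. Since $(I',k')$ is a \yes-instance, I can apply property~\ref{point3} for $R_2$ to conclude that $(I'',k'')$ is a \yes-instance and $\opt(I'') \ge \opt(I') - (k' - k'')$. Chaining these two inequalities yields
\[
\opt(I'') \;\ge\; \opt(I) - (k-k') - (k'-k'') \;=\; \opt(I) - (k - k''),
\]
which is exactly the inequality required by property~\ref{point3} for $R$.

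For the constructive addendum, assume both $R_1$ and $R_2$ are constructive, and suppose we are given a solution $s''$ of $I''$ with $\val(I'',s'') \ge k''$. Applying the constructive property of $R_2$ (run on input $(I',k')$) yields in polynomial time a solution $s'$ of $I'$ with $\val(I',s') \ge k'$; then applying the constructive property of $R_1$ (run on input $(I,k)$) to $s'$ yields a solution $s$ of $I$ with $\val(I,s) \ge k$. The composition of two polynomial-time lifts remains polynomial-time, so $R$ is constructive. I do not foresee a real obstacle here: the only point one has to be careful about is not accidentally invoking property~\ref{point3} of $R_2$ without having first established that $(I',k')$ is a \yes-instance, which is precisely what $R_1$'s property~\ref{point3} provides.
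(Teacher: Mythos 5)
Your proof is correct and is exactly the routine verification the paper has in mind: the paper states this observation without proof, calling it an immediate consequence of \autoref{def:lop}, and your unfolding (chaining the parameter bounds, the two no-instance implications, the two instances of Property~\ref{point3} with the telescoping $(k-k')+(k'-k'')=k-k''$, and the two constructive lifts) is precisely that argument. You also correctly flag the one genuine subtlety, namely that Property~\ref{point3} of $R_1$ must first certify that $(I',k')$ is a \yes-instance before Property~\ref{point3} of $R_2$ may be invoked.
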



A typical example of a \lop-rule for a vertex-maximization problem is when we can identify a ``dominant'' set of vertices that can be safely included into a solution. More precisely,
 consider a rule that, given a graph $G$, finds a subset $T \subseteq V(G)$ and a graph $G'$ such that there exists an optimal solution $S^{\star}$ in $G$ such that $S^{\star} = T \cup S'$, where $S'$ is a solution in $G'$, and  for every solution $S'$ in $G'$, $S' \cup T$ is a solution in $G$.
Such a rule is a (constructive) \lop-rule, as we even fall into Case~(\ref{case1}) described above.

Even if we are not aware of known reduction rules for vertex-maximization problems that are {\sl not} \lop-rules, we can artificially devise such an example. For instance, for the \textsc{MMVC} problem, given an instance $(G,k)$, if there is a vertex that has more than $k$ neighbors of degree one, we can safely delete all but  any $k$ of them to obtain a reduced graph $G'$, and leave $k$ unchanged. Note that this rule falls into Case~(\ref{case3}) above, since by \autoref{lem:extension-nbh-is} both $G$ and $G'$ are \yes-instances of \textsc{MMVC}, but it does not satisfy Property~\ref{point3} in \autoref{def:lop}, since ${\sf mmvc}(G)$ may be arbitrarily larger than ${\sf mmvc}(G')$.

If we defined a \lop-kernel as an  algorithm consisting only of \lop-rules, we would exclude from being a \lop-kernel, for instance, a rule that detects a \yes-instance as in the above paragraph. This justifies the next definition, where we also allow \lop-kernels to decide instances.



\begin{definition}\label{def:lop-kernel}
 Let $\Pi$ be a well-behaved maximization problem and let $s: \mathds{N} \to  \mathds{N}$ be a computable function. A \emph{\lop-kernel of size $s$} for $\Pi$ parameterized by the solution size  is
 a polynomial-time algorithm that takes as input an instance $(I,k)$ of $\Pi_{\dec}$, and either
  \begin{itemize}
    \item decides that $(I,k)$ is a \yes-instance or a \no-instance, or
    \item outputs a reduced instance $(I',k')$ by applying a sequence of \lop-rules to $(I,k)$, with $|I'| \leq s(k)$.
  \end{itemize}
A \lop-kernel is \emph{constructive} if, in the first case, it constructively decides $(I,k)$ (but in the second case it may not use constructive rules).
\end{definition}

As it is common in kernels to exhaustively apply reduction rules, and \emph{then} to either decide the reduced instance or to output it, let us introduce and discuss the following definition\footnote{This definition was used in the conference version of this paper~\cite{MMVC_IPEC}.}.

\begin{definition}\label{def:lop-kernel-v2}
 Let $\Pi$ be a well-behaved maximization problem and let $s: \mathds{N} \to  \mathds{N}$ be a computable function. A \emph{\lop-kernel$^\star$ of size $s$} for $\Pi$ parameterized by the solution size  is
 a polynomial-time algorithm that takes as input an instance $(I,k)$ of $\Pi_{\dec}$, computes an instance $(I',k')$ by applying a (possibly empty) sequence of \lop-rules to $(I,k)$, and either
  \begin{itemize}
    \item decides that $(I',k')$ is a \yes-instance or a \no-instance, or
    \item outputs $(I',k')$, with $|I'| \leq s(k)$.
  \end{itemize}
A \lop-kernel$^\star$ is \emph{constructive} if, in the first case, it constructively decides $(I',k')$ and, in the second case, it only uses constructive \lop-rules.
\end{definition}

Firstly, observe that a \lop-kernel$^\star$ (resp. constructive \lop-kernel$^\star$) is a \lop-kernel (resp. constructive \lop-kernel).
Indeed, if a \lop-kernel$^\star$ decides $(I',k')$, then, as the definition of \lop-rules implies that the reduced instance $(I',k')$ is equivalent to $(I,k)$, it also decides $(I,k)$.
Moreover, if a \lop-kernel$^\star$ is constructive and decides that $(I',k')$ is a \yes-instance by providing a solution $s'$ with $\val(I',s') \ge k'$,
then, as the rules are constructive, and according to \autoref{prop:lop-chain}, we can build in polynomial time a solution $s$ with $\val(I,s) \ge k$, and thus
constructively decide $(I,k)$.
Secondly, observe that a \lop-kernel is a \lop-kernel$^\star$, but that a constructive \lop-kernel is not necessarily a constructive \lop-kernel$^\star$.
The conclusion of this discussion is that for the non-constructive versions, both definitions are equivalent, and for the constructive versions, \lop-kernels are slightly more general.
As many inapproximability results even hold for the non-constructive version of approximation, we suggest the reader to stick to the non-constructive version, and thus to chose any of the two definitions.
The only case where it could make a difference would be for a problem $\Pi$ for which inapproximability results are only known for ruling out constructive approximation algorithms. Then, \autoref{cor:lop-kernels} will turn this inapproximability
into a kernel lower bound even for constructive \lop-kernels, and not only for constructive \lop-kernels$^\star$.
This justifies why we consider  henceforth only \lop-kernels.

Our next objective is to prove that a \lop-kernel yields an approximation algorithm.
For this, we need the following definition, which is inspired by a similar notion introduced by Hochbaum and Shmoys~\cite{HochbaumS87}, and referred to as \emph{$f$-relaxed decision procedure} in~\cite{WilliamsonS-book}. 

 \begin{definition}\label{def:dual-approx}
   Let $\Pi$ be a well-behaved maximization problem and let $f: \mathds{N} \to \mathds{N}$  be a function.
     An \emph{$f$-dual-approximation algorithm for $\Pi$}  is a polynomial-time algorithm that, given an instance $(I,k)$ of $\Pi_{\dec}$, concludes one of the following:
   \begin{itemize}
   \item $\opt(I) \geq k$.
   \item $\opt(I) < f(k)$.
   \end{itemize}
 An $f$-dual approximation algorithm is \emph{constructive} if, whenever it concludes that $\opt(I) \geq k$, it provides a solution  $s$ with $\val(I, s) \ge k$.
 \end{definition}

 In the next lemma we prove that a \lop-kernel of size $s$ yields an $f$-dual-approximation algorithm (where $f$ depends on $s$), which in turn yields a classical approximation algorithm whose ratio depends on $s$. To provide some insight on the statement of the next lemma, keep in mind that for vertex-maximization problems, the upper bound function $u$ is typically the identity function.

 \begin{lemma}\label{lem:lop-dual}
	Let $\Pi$ be a well-behaved maximization problem with a non-decreasing upper bound function $\ub$ and let $s: \mathds{N} \to \mathds{N}$ be a computable function.
	If $\Pi$ admits a \lop-kernel of size $s$, then $\Pi$ admits an $f$-dual-approximation algorithm where $f(k) := \ub(s(k)) + k + 1$.
	Moreover, if the \lop-kernel is constructive, then the $f$-dual-approximation algorithm is also constructive.
\end{lemma}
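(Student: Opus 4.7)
The plan is to simulate the \lop-kernel on the input instance $(I,k)$ and, from each of its three possible behaviors, extract one of the two conclusions required by \autoref{def:dual-approx}. The key observation is that the additive preservation property of \lop-rules, combined with the upper bound function $\ub$, lets me bound $\opt(I)$ by $\ub(s(k)) + k$ whenever the kernel outputs a reduced instance, regardless of whether the original instance is \yes\ or \no; this falls strictly below $f(k) = \ub(s(k)) + k + 1$.

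If the \lop-kernel decides that $(I,k)$ is a \yes-instance, I output ``$\opt(I) \geq k$''. In the constructive setting, a constructive \lop-kernel additionally provides a feasible solution $s$ with $\val(I,s) \geq k$, so the constructibility clause is discharged automatically. If the \lop-kernel decides \no, then $\opt(I) < k$, and since $f(k) > k$, I simply report ``$\opt(I) < f(k)$'' with nothing to construct.

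The remaining case is when the kernel returns a reduced instance $(I',k')$ with $|I'| \leq s(k)$ produced by a (possibly empty) sequence of \lop-rules. By \autoref{prop:lop-chain} this composition is itself a \lop-rule, and so its preservation property gives $\opt(I') \geq \opt(I) - (k - k')$ whenever $(I,k)$ is a \yes-instance. Combining this with $\opt(I') \leq \ub(|I'|) \leq \ub(s(k))$, which uses monotonicity of $\ub$ and the size guarantee on $I'$, and with $k' \leq k$, I obtain $\opt(I) \leq \ub(s(k)) + k < f(k)$ in the \yes-subcase; in the \no-subcase, $\opt(I) < k < f(k)$ is immediate. Hence I report ``$\opt(I) < f(k)$'' unconditionally in this case, with no solution required. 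The whole procedure runs in polynomial time since the kernel does, and construction of a solution is demanded only in the first case above, so a constructive \lop-kernel immediately yields a constructive $f$-dual-approximation algorithm. I do not foresee any substantial obstacle: the argument is bookkeeping built on the additive preservation property of \lop-rules and the monotonicity of $\ub$, and condition $C^{\max}$ is not needed for this particular lemma.
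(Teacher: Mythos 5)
Your proof is correct and follows essentially the same route as the paper's: run the kernel, answer ``$\opt(I)\ge k$'' if it decides \yes, and otherwise bound $\opt(I)$ by $\ub(s(k))+k<f(k)$ via \autoref{prop:lop-chain} and Property~\ref{point3} of \autoref{def:lop}. Your explicit split of the reduced-instance case into \yes- and \no-subcases (since Property~\ref{point3} only applies to \yes-instances) is a small point the paper's write-up glosses over, but it is the same argument.
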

\begin{proof}
	Let $k\in \mathds{N}$, $(I,k)$ be an instance of $\Pi_{\dec}$, and $\Rcal$ be a \lop-kernel of size $s$ for $\Pi$.
	We describe an $f$-dual approximation algorithm $\Acal$ which takes as input $I$ and $k$, starts by running $\Rcal$ with input $(I, k)$, and continues based on its possible output.
	If $\Rcal$ decides that $(I,k)$ is a \yes-instance, then $\opt(I)\ge k$, and $\Acal$ returns $\opt(I)\ge k$ as well.
        Notice that if $\Rcal$ is constructive, then it provides a solution $s$ of $I$ such that $\val(I,s) \ge k$, and $\Acal$ returns this solution as well.
	Otherwise, we claim that it is safe for $\Acal$ to return $\opt(I) \le  f(k)$.
	Indeed, if $\Rcal$ decides that $(I,k)$ is a \no-instance, then $\opt(I)< k$, implying $\opt(I) < \ub(s(k)) + k = f(k)$.
	Finally, suppose that $\Rcal$ outputs an equivalent instance $(I',k')$ obtained from $(I, k)$ using only \lop-rules and such that $|I'| \leq s(k)$.
	By using \autoref{prop:lop-chain} we can assume that $(I',k')$ is obtained from $(I,k)$ by a single \lop-rule, and Property~\ref{point3} in \autoref{def:lop} implies that $\opt(I) \le \opt(I')+(k-k') \le \ub(|I'|) + k  \leq   \ub(s(k)) + k < f(k)$, where we have used the fact that $\ub$ is non-decreasing.
\end{proof}



Let us now turn to our main results relating the size of \lop-kernels to the existence of approximation algorithms.
To keep statements as simple as possible, we first provide in \autoref{sec:res-max} results that correspond to the specialized versions for vertex-maximization problems of the general results presented in \autoref{sec:res-max-gnl}.

\subsection{Connection between \lop-kernels and approximation algorithms for vertex-maximization problems}
\label{sec:res-max}

In this subsection we deal with vertex-maximization problems. The following lemma is a folklore result~\cite{WilliamsonS-book}, but as it is generally tuned for a particular function $f$ appearing in the considered context, we need to restate it in a general form.

\begin{lemma}\label{lem:dual-approx}
  Let $\Pi$ be a vertex-maximization problem whose decision version is in \NP, and $f: \mathds{N} \to \mathds{N}$ be a computable function.
  \begin{enumerate}
  \item For every real number $c > 1$, if $\Pi$ admits an $f$-dual-approximation algorithm with $f(k) = \Ocal(k^c)$, then $\Pi$ admits a polynomial-time value-approximation algorithm with ratio $\O(n^{\frac{c-1}{c}})$ on $n$-vertex graphs.
  \item For every real number $\beta \ge 1$, if $\Pi$ admits an $f$-dual-approximation algorithm with $f(k) = \beta k+1$, then $\Pi$ admits a polynomial-time value-approximation algorithm with ratio $\beta+\varepsilon$
    for every real number $\varepsilon > 0$.
  \end{enumerate}
  Moreover, if the $f$-dual-approximation is constructive, then the corresponding approximation algorithm is also constructive.
\end{lemma}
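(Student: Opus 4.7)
The plan is to convert the $f$-dual-approximation algorithm $\Acal$ into a value-approximation algorithm by using $\Acal$ as an oracle and locating the largest value it certifies. Given an instance $I$ with $n=|V(G)|$, I would run $\Acal(I,k)$ for each $k\in\{1,\dots,n+1\}$---polynomially many polynomial-time calls---and let $k^{*}$ be the largest $k$ for which $\Acal$ outputs ``$\opt(I)\ge k$'' (and set $k^{*}=0$ if no such $k$ exists). Since the call $\Acal(I,k^{*}+1)$ must then have answered ``$\opt(I)<f(k^{*}+1)$'', the definitions immediately yield the sandwich
\[
k^{*}\;\le\;\opt(I)\;<\;f(k^{*}+1).
\]
In the corner case where $k^{*}$ is bounded by an absolute constant, $\opt(I)$ is bounded by a constant as well, so I would use the hypothesis $\Pi_{\dec}\in\NP$ to enumerate all vertex subsets of that constant size (polynomially many), verify feasibility in polynomial time, and return the best such set together with its value.

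\textbf{Part 1.} Assume $f(k)=\mathcal{O}(k^{c})$. For $k^{*}\ge 1$, the sandwich gives $\opt(I)/k^{*}=\mathcal{O}((k^{*})^{c-1})$; combined with the trivial bound $\opt(I)\le n$, the approximation ratio is at most $\min\{\,n/k^{*},\ \mathcal{O}((k^{*})^{c-1})\,\}$. Viewed as a function of $k^{*}\in[1,n]$, this minimum is maximized when the two expressions coincide, i.e.\ at $k^{*}=\Theta(n^{1/c})$, where both equal $\mathcal{O}(n^{(c-1)/c})$. The remaining case $k^{*}=0$---which forces $\opt(I)<f(1)=\mathcal{O}(1)$---falls under the enumeration described above, yielding $\opt(I)$ exactly.

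\textbf{Part 2 and constructibility.} Assume $f(k)=\beta k+1$ and fix $\varepsilon>0$. Set $C=\lceil(\beta+1)/\varepsilon\rceil$. If $k^{*}>C$, the sandwich gives
\[
\opt(I)/k^{*}\;<\;\bigl(\beta(k^{*}+1)+1\bigr)/k^{*}\;=\;\beta+(\beta+1)/k^{*}\;\le\;\beta+\varepsilon,
\]
so returning $k^{*}$ suffices. If instead $k^{*}\le C$, then $\opt(I)<f(C+1)$ is a constant and the enumeration step returns $\opt(I)$ exactly. The constructive variant comes essentially for free: a constructive $\Acal$, whenever it answers ``$\opt(I)\ge k^{*}$'', already supplies a feasible solution of value at least $k^{*}$, and the enumeration step trivially produces an explicit optimal set. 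The main subtlety---which I expect to be the only real technical point---is this corner case: when $k^{*}$ is small the sandwich is too loose to yield the stated ratio, and the \NP hypothesis (foreshadowing the general ``well-behaved'' condition $C^{\max}$ used in the next subsection) is what lets us brute-force past it in polynomial time.
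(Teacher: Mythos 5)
Your proposal is correct and follows essentially the same route as the paper's proof: binary-search-free linear scan over $k$ to find the largest certified value $k_0$, the sandwich $k_0 \le \opt(I) < f(k_0+1)$, a case split at $k_0 \approx n^{1/c}$ for Part 1, a threshold constant for Part 2, and brute-force enumeration (justified by the \NP{} hypothesis) in the small-$k_0$ corner case. The only differences are cosmetic choices of constants.
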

\begin{proof}
	Let $A$ be an $f$-dual-approximation algorithm for $\Pi$. We proceed to construct a polynomial-time approximation algorithm for $\Pi$ with the claimed ratio. We consider the two statements of the lemma separately.
	

        \paragraph*{Case 1: $f(k) = \Ocal(k^c)$.}
        Given an $n$-vertex graph $G$ as instance of $\Pi$, we find $k_0 \in \{0,\dots,n\}$ defined as the largest positive integer $k$ such that algorithm~$A$ returns that $\opt(G) \geq k$.
	Note that $k_0$ can be found in polynomial time by performing at most $n+1$ calls to algorithm~$A$. If there is no such $k_0$, or if $k_0=0$, then $\opt(G) < \max(f(0),f(1)) = \Ocal(1)$, and since the decision version of $\Pi$ is in \NP, we can find an optimal solution in polynomial time by verifying all vertex subsets of size at most $\max(f(0),f(1))$.
        Otherwise, that is, if $k_0 \geq 1$, our approximation algorithm returns $k_0$, or if is constructive it returns a solution $S_0$ (a subset of vertices here) such that $|S_0| \ge k_0$.
	Let us prove that it provides the claimed approximation ratio.
        We distinguish two subcases depending on the value of $k_0$.
 	Suppose first that $k_0 \geq n^{1/c}$.
 	Since $\opt(G) \leq n$, in this case we get that
	\begin{equation*}\label{eq:approx1}
	\frac{\opt(G)}{k_0}\ \leq\ \frac{n}{n^{1/c}}\ =\ n^{\frac{c-1}{c}}.
	\end{equation*}

 	Otherwise, it holds that $k_0 < n^{1/c}$.
 	By the definition of $k_0$ we have $\opt(G) < f(k_0 +1) = \Ocal ( (k_0 +1)^{c}) = \Ocal ( (k_0) ^{c})$. Thus, in this case we get that
	 \begin{equation*}\label{eq:approx2}
	\frac{\opt(G)}{k_0}\ =\ \frac{\Ocal((k_0) ^{c})}{k_0}\ = \ \Ocal\left( (k_0) ^{c-1}\right)\ = \ \Ocal\left(n^{\frac{c-1}{c}}\right).
	\end{equation*}
	 Since in both cases we have a ratio of $\Ocal(n^{\frac{c-1}{c}})$, the lemma follows in Case 1.

         \paragraph*{Case 2: $f(k) = \beta k+1$.}
         Let $\varepsilon > 0$ be a arbitrary real number, let $\varepsilon' = \frac{\varepsilon}{\beta}$, and let us provide a polynomial-time approximation algorithm with ratio  $\beta(1+\varepsilon')=\beta+\varepsilon$. 
         As in Case 1, we start by finding $k_0$, defined as the largest positive integer $k$ such that algorithm~$A$ returns that $\opt(G) \geq k$.
         By definition of $k_0$ we have $\opt(G) < f(k_0 +1) = \beta(k_0+1)+1 \le \beta(k_0+2)$.
         If $k_0 < \frac{2}{\varepsilon'}$, then $\opt(G)$ is constant, and again by enumerating all subsets of size at most $\beta(\frac{2}{\varepsilon'}+2)$ we find an optimal solution.
         Otherwise, we return $k_0$, or if is constructive it returns a solution $S_0$ (a subset of vertices here) such that $|S_0| \ge k_0$.
         We have $\opt(G) \le \beta(k_0+2) \le \beta(1+\varepsilon')k_0$, concluding Case 2 of the proof.
\end{proof}

As a vertex-maximization problem whose decision version is in \NP is a well-behaved problem, the hypothesis of \autoref{lem:lop-dual} is satisfied (taking the identity function as upper bound function),
and thus the following theorem is immediate by pipelining \autoref{lem:lop-dual} and \autoref{lem:dual-approx}.

\begin{theorem}\label{thm:lop-kernels}
  Let  $\Pi$ be a vertex-maximization problem whose decision version is in \NP.
  \begin{enumerate}
  \item For every real number $c> 1$, if $\Pi$  admits a \lop-kernel with $\Ocal(k^c)$ vertices, then it admits a polynomial-time value-approximation algorithm with ratio $\O(n^{\frac{c-1}{c}})$ on $n$-vertex graphs.
  \item For every real number $\beta \ge 1$, if $\Pi$  admits a \lop-kernel with $\beta k$ vertices, then for any real number $\varepsilon > 0$, it admits a polynomial-time value-approximation algorithm with ratio $(\beta+1+\varepsilon)$.
  \end{enumerate}
  Moreover, if the \lop-kernel is constructive, then the corresponding approximation algorithm is also constructive.
\end{theorem}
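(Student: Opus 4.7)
The plan is to pipeline \autoref{lem:lop-dual} and \autoref{lem:dual-approx}, as both lemmas are already at our disposal. First, I would observe that a vertex-maximization problem $\Pi$ whose decision version is in \NP is well-behaved: given a real number $c$ and an instance $I$ with $\opt(I) \le c$, we can enumerate all vertex subsets of size at most $\lfloor c \rfloor$ and, for each, invoke the \NP-verifier to check in polynomial time (for fixed $c$) whether it is a feasible solution, thereby either producing a solution $s$ with $\val(I,s) > 0$ or certifying $\opt(I) = 0$. Moreover, since every feasible solution is a subset of $V(G)$, we may take $\ub(n) := n$ as a non-decreasing upper bound function. Hence the hypotheses of \autoref{lem:lop-dual} are met.

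Applying \autoref{lem:lop-dual}, a \lop-kernel of size $s(k)$ yields an $f$-dual-approximation algorithm with $f(k) = \ub(s(k)) + k + 1 = s(k) + k + 1$. For statement~1, plugging $s(k) = \O(k^c)$ with $c > 1$ gives $f(k) = \O(k^c)$, and the first case of \autoref{lem:dual-approx} converts this into a polynomial-time value-approximation algorithm with ratio $\O(n^{(c-1)/c})$. For statement~2, plugging $s(k) = \beta k$ gives $f(k) = (\beta + 1) k + 1$, so the second case of \autoref{lem:dual-approx} applied with parameter $\beta' := \beta + 1$ yields, for every $\varepsilon > 0$, a polynomial-time value-approximation algorithm with ratio $(\beta + 1) + \varepsilon$.

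The constructibility addendum is inherited automatically, since both \autoref{lem:lop-dual} and \autoref{lem:dual-approx} explicitly preserve constructibility along the pipeline: a constructive \lop-kernel produces a constructive $f$-dual-approximation, which in turn produces a constructive approximation algorithm.

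There is no real obstacle here: the only subtlety is checking the well-behavedness hypothesis of \autoref{lem:lop-dual} for vertex-maximization problems in \NP, and once this is done the theorem follows by a mechanical substitution of parameters into the two lemmas.
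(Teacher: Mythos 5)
Your proposal is correct and follows exactly the paper's argument: verify well-behavedness of a vertex-maximization problem in \NP, take the identity as the upper bound function, and pipeline \autoref{lem:lop-dual} with \autoref{lem:dual-approx}, with the same parameter substitutions ($f(k)=s(k)+k+1$, and $\beta'=\beta+1$ in the linear case). Nothing to add.
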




As the framework of \lop-kernels is mainly defined as a tool to get \lop-kernel lower bounds from inapproximability, let us explicitly formulate the contrapositive of \autoref{thm:lop-kernels}.
Note that, when applying it to a concrete problem $\Pi$, the inapproximability of $\Pi$ will rely on some complexity assumption, typically $\P \neq\NP$.
\begin{corollary}\label{cor:lop-kernels}
  Let  $\Pi$ be a vertex-maximization problem whose decision version is in \NP.
  \begin{enumerate}
  \item For every real number $r \in (0,1)$, if $\Pi$ does not admit a polynomial-time value-approximation algorithm with ratio $\O(n^{r})$ on $n$-vertex graphs, then
    $\Pi$ parameterized by the solution size does not admit a \lop-kernel with $\Ocal(k^{\frac{1}{1-r}})$ vertices.
  \item For every real number $\beta > 1$, if $\Pi$ does not admit a polynomial-time value-approximation algorithm with ratio $\beta$, then
     $\Pi$ parameterized by the solution size does not admit a \lop-kernel with $(\beta-1-\varepsilon)k$ vertices for any real number $\varepsilon > 0$.
  \end{enumerate}
  Moreover, if the non-existence of approximation algorithms only holds for constructive approximation algorithms, then the lower bound
  only holds for constructive \lop-kernels.
\end{corollary}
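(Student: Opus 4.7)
The plan is simply to invoke the contrapositive of \autoref{thm:lop-kernels}, since \autoref{cor:lop-kernels} is presented as its explicit reformulation. Both parts reduce to elementary algebra, so the whole proof amounts to checking that the exponents and constants match.

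For Part~1, I would assume toward a contradiction that $\Pi$ admits a \lop-kernel with $\Ocal(k^c)$ vertices, where $c = \frac{1}{1-r}$. Applying \autoref{thm:lop-kernels}(1) directly yields a polynomial-time value-approximation algorithm with ratio $\Ocal(n^{(c-1)/c})$. The key calculation is that
\[
\frac{c-1}{c} \;=\; 1 - \frac{1}{c} \;=\; 1 - (1-r) \;=\; r,
\]
so the ratio is $\Ocal(n^{r})$, contradicting the hypothesis.

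For Part~2, I would suppose instead that $\Pi$ admits a \lop-kernel with $\beta' k$ vertices for $\beta' := \beta - 1 - \varepsilon$, for some fixed $\varepsilon>0$. Applying \autoref{thm:lop-kernels}(2) with this $\beta'$, for every $\varepsilon' > 0$ we obtain a polynomial-time value-approximation algorithm with ratio $\beta' + 1 + \varepsilon' = \beta - \varepsilon + \varepsilon'$. Choosing any $\varepsilon' < \varepsilon$ (for instance $\varepsilon' = \varepsilon/2$) produces a ratio strictly smaller than $\beta$, again contradicting the hypothesis. Strictly speaking, since the theorem states ``for every real number $\varepsilon>0$,'' one should be mindful that the big-$\Ocal$ notation and the ``$\beta k$'' (rather than ``$(\beta+o(1))k$'') formulation in part~2 of \autoref{thm:lop-kernels} match exactly what is needed, so the argument uses the exact form of the upper-bound statement, not a strict inequality.

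Finally, for the constructive clause, the very same reasoning carries through: a constructive \lop-kernel gives, by the ``moreover'' part of \autoref{thm:lop-kernels}, a constructive value-approximation algorithm of the corresponding ratio, contradicting the assumed constructive inapproximability. The only potential obstacle I anticipate is purely cosmetic, namely keeping track of which quantifier ordering is needed for $\varepsilon$ and $\varepsilon'$ in Part~2 so that the contradiction is logically sound; but since one may fix $\varepsilon$ first and then select $\varepsilon'$ afterwards, this is straightforward.
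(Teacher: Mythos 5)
Your proposal is correct and matches the paper exactly: the paper presents \autoref{cor:lop-kernels} as the immediate contrapositive of \autoref{thm:lop-kernels} with no further proof, and your substitutions $c = \frac{1}{1-r}$ (so that $\frac{c-1}{c}=r$) and $\beta' = \beta-1-\varepsilon$ with $\varepsilon' < \varepsilon$ are precisely the intended bookkeeping.
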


\subsection{Connection between \lop-kernels and approximation algorithms for well-behaved maximization problems}
\label{sec:res-max-gnl}

The following lemma and theorem are the versions of \autoref{lem:dual-approx} and \autoref{thm:lop-kernels}, respectively, in the more general setting of arbitrary well-behaved maximization problems.

%

\begin{lemma}\label{lem:dual-approxG}
  Let $\Pi$ be a well-behaved maximization problem, $a \in \mathds{R}^+$, $\ub: \mathds{N} \to \mathds{N}$, and $f: \mathds{N} \to \mathds{N}$ be functions such that $\ub(n) = \O(n^a)$,
    $\ub$ is polynomial-time computable,  and $f$ is computable. Suppose that $\Pi$ has $\ub$ as upper bound function and that it admits an $f$-dual-approximation.
        \begin{enumerate}
          \item If $f(k) = \O(k^d)$ for some real number $d > 1$, then  $\Pi$ admits a polynomial-time value-approximation algorithm with ratio $\O(n^{\frac{a(d-1)}{d}})$, where $n$ is the size of the input.
	  \item If $f(k)=\lambda k^d+k+1$ for some real numbers $d \le 1$ and $\lambda > 0$, then $\Pi$ admits a polynomial-time value-approximation algorithm with ratio
            $\lambda 2^{d} + 3$.
        \end{enumerate}
	Moreover, if the $f$-dual-approximation algorithm is constructive, then the corresponding approximation algorithm is also constructive.
\end{lemma}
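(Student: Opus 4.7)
The plan is to mirror the proof of \autoref{lem:dual-approx} but replace the vertex-based search range $\{0,\ldots,n\}$ by $\{0,\ldots,\ub(|I|)\}$, which remains polynomial in $|I|$ because $\ub(n) = \Ocal(n^a)$ and $\ub$ is polynomial-time computable. Given an instance $I$, the algorithm first computes $\ub(|I|)$ and then invokes the assumed $f$-dual-approximation algorithm $A$ on $(I,k)$ for $k = 0,1,\ldots,\ub(|I|)$ (a binary search works too), letting $k_0$ be the largest such $k$ for which $A$ returns ``$\opt(I)\ge k$''. Since $A(I,0)$ succeeds trivially, $k_0\ge 0$ is always well-defined, and whenever $A$ is constructive we store the solution $s$ produced at the successful call $k=k_0$, so that $\val(I,s)\ge k_0$. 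To handle the corner case $k_0=0$, observe that the maximality of $k_0$ forces $\opt(I) < f(1)$, a constant depending only on $f$ (and, in Case~2, on $\lambda$ and $d$); we then invoke the hypothesis $C^{\max}$ with $c := f(1)$ to either certify $\opt(I)=0$ (and return the trivial solution) or produce a solution of value at least $1$, yielding ratio at most $f(1)$, which is a constant absorbed by the claimed bounds.

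For Case~1, assume $k_0\ge 1$ and split according to whether $k_0 \ge |I|^{a/d}$ or $k_0 < |I|^{a/d}$. In the first sub-case, using $\opt(I) \le \ub(|I|) = \Ocal(|I|^a)$, we get $\opt(I)/k_0 \le \Ocal(|I|^a)/|I|^{a/d} = \Ocal(|I|^{a(d-1)/d})$. In the second sub-case, the maximality of $k_0$ together with $f(k)=\Ocal(k^d)$ yields $\opt(I) < f(k_0+1) = \Ocal(k_0^d)$, so $\opt(I)/k_0 = \Ocal(k_0^{d-1}) = \Ocal(|I|^{a(d-1)/d})$, where the last equality uses $d>1$, so $x\mapsto x^{d-1}$ is non-decreasing. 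Returning $k_0$ (or the stored $s$ in the constructive case) thus achieves the stated ratio.

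For Case~2, the key algebraic observation is that for $k_0\ge 1$ and $d\le 1$ we have $(k_0+1)^d \le (2k_0)^d = 2^d k_0^d \le 2^d k_0$, where the last step uses $k_0^d\le k_0$. Plugging this into $\opt(I) < f(k_0+1) = \lambda(k_0+1)^d + (k_0+1) + 1$ gives $\opt(I) < \lambda 2^d k_0 + k_0 + 2$, so that $\opt(I)/k_0 < \lambda 2^d + 1 + 2/k_0 \le \lambda 2^d + 3$. Constructibility is preserved throughout since, as explained in the first paragraph, whenever $A$ is constructive we keep either the solution from the successful call at $k_0$ or the one delivered by $C^{\max}$. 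The main obstacle I anticipate is pinning down the exact constant $\lambda 2^d + 3$ in Case~2 while controlling the boundary values $k_0\in\{0,1\}$, where the naive manipulations of $(k_0+1)^d$ are too loose and the constant-value regime must be closed off using $C^{\max}$.
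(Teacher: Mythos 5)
Your proof is correct and follows essentially the same route as the paper's: the same choice of $k_0$ ranging over $\{0,\dots,\ub(|I|)\}$, the same use of condition $C^{\max}$ to dispose of the constant-value regime $k_0=0$, and the same two-subcase analysis ($k_0 \ge |I|^{a/d}$ versus $k_0 < |I|^{a/d}$) in Case~1. The only divergence is the endgame of Case~2, where the paper shows that $h(x)=\bigl(\lambda(x+1)^d+x+2\bigr)/x$ is decreasing for $x>0$ and evaluates at $x=1$, whereas you reach the identical constant $\lambda 2^{d}+3$ via the direct inequality $(k_0+1)^d\le 2^{d}k_0$ (whose first step needs $d\ge 0$ for monotonicity of $x\mapsto x^d$, but that is the only regime in which $f$ is meaningful and the only one arising in the applications), so both arguments are equally valid.
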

\begin{proof}

	Let $A$ be an $f$-dual-approximation algorithm for $\Pi$. For both cases in the statement in the lemma, we proceed to construct a polynomial-time approximation algorithm for $\Pi$ with the claimed ratio.
	


Given an instance $I$ of $\Pi$, we find $k_0 \in \{0,\dots,\ub(n)\}$ (recall that $n=|I|$)
  defined as the largest positive integer $k$ such that algorithm~$A$ returns that $\opt(G) \geq k$.
	Note that $k_0$ can be found in polynomial time as $|I|$ is polynomial-time computable and its value $n$ is polynomially upper-bounded in the classical bit-size of the instance, and that $\ub(n)$ can be computed in polynomial time as well. If there is no such $k_0$, or if $k_0=0$, then $\opt(G) < \max(f(0),f(1)) = \Ocal(1)$, and since
        $\Pi$ is well-behaved, we can, given $\max(f(0),f(1))$, decide in polynomial time if either $\opt(I)=0$ and provide a solution $s$ with $\val(I,s)=0$, or provide a solution $s$ with $\val(I,s) > 0$.
        In both cases we even have a constructive constant-factor approximation, and as $\opt(I) < f(1)=\lambda+2$, we get the claimed ratio in both cases.
        Otherwise, that is, when $k_0 \geq 1$, our approximation algorithm returns $k_0$, or if is constructive it returns a solution $s_0$ such that $\val(I,s_0) \ge k_0$.
	Let us prove that it provides the claimed approximation ratio.
        We now distinguish the two cases claimed in the statement of the lemma.

%

        \paragraph*{Case 1: $f(k) = \O(k^d)$.}
        Suppose first that $k_0 \geq n^{a/d}$.
        In this case we have
	\begin{equation*}
	\frac{\opt(I)}{k_0}\ \le\ \frac{\ub(n)}{k_0}\ =\ \frac{\O(n^a)}{n^{a/d}}\ =\ \O\left(n^{\frac{a(d-1)}{d}}\right).
	\end{equation*}
	Otherwise, it holds that $k_0 < n^{a/d}$.
        By the definition of $k_0$ we have $\opt(G) < f(k_0 +1)  = \Oh((k_0 + 1)^{d})= \Oh((k_0)^{d})$.
        Thus, in this case we get that
	\begin{equation*}
	\frac{\opt(I)}{k_0}\ =\ \frac{\Oh((k_0)^{d})}{k_0}\ = \ \Oh\left( (k_0) ^{d-1}\right)\ = \ \O\left(n^{\frac{a(d-1)}{d}}\right).
	\end{equation*}
         Since in both cases we have a ratio of $\Ocal(n^{\frac{a(d-1)}{d}})$, the lemma follows in Case 1.

         \paragraph*{Case 2: $f(k)=\lambda k^d+k+1$.}

         We have
         \begin{equation*}
	   \frac{\opt(I)}{k_0}\ \le \ \frac{f(k_0 + 1)}{k_0}\ \le \ \frac{\lambda (k_0 + 1)^{d} + k_0+2 }{k_0}
	 \end{equation*}
	
	Let
		\begin{equation*}
		h(x) = \frac{\lambda (x + 1)^{d} + x + 2}{x}
		\end{equation*}
	and note that the approximation ratio is at most $h(k_0)$.
	
	To bound $h(k_0)$, we proceed to show that $h(x)$ is decreasing in $x$ when $x > 0$ and obtain the desired approximation ratio of $h(k_0) \le h(1) = \lambda 2^{d} + 3$.
	To show that $h(x)$ is indeed decreasing in $x$ when $x > 0$, note that
	\begin{equation*}\label{eq:somethingsomethingdarkside}
	\frac{\partial h(x)}{\partial x} = \frac{\lambda (x + 1)^{d}}{x^2} \cdot \left(d\frac{x}{x + 1}-1\right) - \frac{1}{x^2},
	\end{equation*}
	which is negative when $d \le 1$ and $x > 0$.
\end{proof}

The next theorem follows immediately by pipelining \autoref{lem:lop-dual} and \autoref{lem:dual-approxG}.  Namely, starting with the hypothesis of \autoref{thm:lop-kernelsG}, we first apply \autoref{lem:lop-dual} and then \autoref{lem:dual-approxG} with $d = ac$ and $\lambda = \alpha \beta^a$.


\begin{theorem}\label{thm:lop-kernelsG}
  Let  $\Pi$ be a well-behaved maximization problem, $a, c \in \mathds{R}^+$, $\ub: \mathds{N} \to \mathds{N}$, and $s: \mathds{N} \to \mathds{N}$ be functions such  that $\ub(n) = \O(n^a)$, $s(k) = \O(k^c)$, $\ub$ is non-decreasing, and $s$ and $\ub$ are polynomial-time computable.
Suppose that $\Pi$ has $\ub$  as upper bound function and that it admits a \lop-kernel of size $s$, according to the same size function $|\cdot|$ associated with $\Pi$.
\begin{enumerate}
\item If $ac > 1$, then  $\Pi$ admits a polynomial-time value-approximation algorithm with ratio $\O(n^{\frac{ac-1}{c}})$,
where $n$ is the size of the instance.
 \item If $ac \le 1$ and $\alpha, \beta\in \mathds{R}^+$ are such that $\ub(n) \le \alpha n^a$ and $s(k) \le \beta k^c$, then $\Pi$ admits a polynomial-time value-approximation algorithm with ratio
   $\alpha \beta^a 2^{ac} + 3$.
\end{enumerate}
Moreover, if the \lop-kernel is constructive, then the corresponding approximation algorithm is also constructive.
\end{theorem}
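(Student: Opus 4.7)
The plan is to pipeline the two technical results already established in this subsection, namely \autoref{lem:lop-dual} (which converts a \lop-kernel into an $f$-dual-approximation algorithm) and \autoref{lem:dual-approxG} (which converts an $f$-dual-approximation into a value-approximation). As a first step, I will apply \autoref{lem:lop-dual} to the assumed \lop-kernel of size $s$: since $\Pi$ is well-behaved, $\ub$ is non-decreasing, and both $\ub$ and $s$ are polynomial-time computable, the lemma immediately yields an $f$-dual-approximation algorithm for $\Pi$ with $f(k) = \ub(s(k)) + k + 1$. Constructivity, if present in the original kernel, propagates through this step via the ``moreover'' clause of \autoref{lem:lop-dual}.

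The second and main step is to feed the resulting $f$ into \autoref{lem:dual-approxG}. The crucial calculation is that the composition $\ub \circ s$ satisfies $\ub(s(k)) = \O(k^{ac})$, which is a direct consequence of $\ub(n) = \O(n^a)$ and $s(k) = \O(k^c)$. I will then split the argument according to the two cases in the statement. In Case~1, where $ac > 1$, I set $d := ac$ so that $f(k) = \O(k^d) + k + 1 = \O(k^d)$, and invoking the first case of \autoref{lem:dual-approxG} produces a value-approximation with ratio $\O(n^{a(d-1)/d}) = \O(n^{(ac-1)/c})$, exactly as claimed. In Case~2, where $ac \le 1$, I will use the explicit constants to bound $\ub(s(k)) \le \alpha(\beta k^c)^a = \alpha\beta^a k^{ac}$, so that $f(k) \le \lambda k^d + k + 1$ with $\lambda := \alpha\beta^a$ and $d := ac \le 1$. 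The second case of \autoref{lem:dual-approxG} then delivers a value-approximation with ratio $\lambda 2^d + 3 = \alpha\beta^a 2^{ac} + 3$, matching the claim. Constructivity is preserved in both cases by the final sentences of the two invoked lemmas.

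There is essentially no obstacle here, as the two lemmas were crafted precisely to be composed in this way. The only bookkeeping worth being careful about is in Case~2, where I will need to apply the explicit inequalities $\ub(n) \le \alpha n^a$ and $s(k) \le \beta k^c$ rather than their $\O$-notation versions, in order to instantiate \autoref{lem:dual-approxG} with the exact constant $\lambda = \alpha\beta^a$ and thereby recover the stated ratio $\alpha\beta^a 2^{ac} + 3$ rather than an unspecified $\O(1)$ bound.
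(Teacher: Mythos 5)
Your proposal is correct and follows exactly the paper's argument: the paper likewise obtains the theorem by pipelining \autoref{lem:lop-dual} with \autoref{lem:dual-approxG}, instantiated with $d = ac$ and $\lambda = \alpha\beta^a$. The details you supply (the identity $a(ac-1)/(ac) = (ac-1)/c$ in Case~1, the use of the explicit bounds $\ub(n) \le \alpha n^a$ and $s(k) \le \beta k^c$ in Case~2, and the propagation of constructivity through both lemmas) are precisely the bookkeeping the paper leaves implicit.
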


To provide some insight on the formulas used in the statement of \autoref{thm:lop-kernelsG}, and especially on the role of the upper bound function $\ub(n) = \O(n^a)$, one can typically think of a graph problem where the output is a subset of edges, and where the size of an instance is the number of vertices of input graph.
In that case, we  have $a=2$, and thus a \lop-kernel of size (in terms of number of vertices) $\O(k^c)$ would only imply an $\O(n^{\frac{2c-1}{c}})$-approximation algorithm, which is worse
than the ratio $\O(n^{\frac{c-1}{c}})$ obtained in \autoref{thm:lop-kernels}, where $a=1$.
On the other hand, the ratio can also sometimes be slightly better, as there may exist problems with upper bound function $\ub(n) = \O(n^a)$ for some $a <1$.
Moreover, observe that for problems where $a\le 1$, the second item covers the case of linear kernels, which corresponds to $c=1$.


By taking the contrapositive of \autoref{thm:lop-kernelsG}, we obtain the following more general version of \autoref{cor:lop-kernels}.


\begin{corollary}\label{cor:lop-kernelsG}
  Let  $\Pi$ be a well-behaved maximization problem with a non-decreasing and polynomial-time computable upper bound function $\ub(n) = \O(n^a)$ for $a \in \mathds{R}^+$.
  In what follows, the size of the instance, denoted by $n$, and the size of the kernel are defined according to the same size function $|\cdot|$ associated with~$\Pi$.
  \begin{enumerate}
  \item For every real number $r \in (0,1)$, if $\Pi$ does not admit a polynomial-time value-approximation algorithm with ratio $\O(n^{r})$, then
    $\Pi$ parameterized by the solution size does not admit a \lop-kernel of size $\Ocal(k^{\frac{1}{a-r}})$.
  \item Suppose that $\ub(n) \le \alpha n^a$ for some $\alpha \in \mathds{R}^+$.
    For every real number $\beta > 1$, if $\Pi$ does not admit a polynomial-time value-approximation algorithm with ratio $\beta$, then
    $\Pi$ parameterized by the solution size does not admit a \lop-kernel of size $\beta' k^{c'}$ for any real numbers $\beta'$, $c'$ such that
    $ac' \le 1$ and $\alpha \beta^{'a}2^{ac'}+3 \le \beta$.
  \end{enumerate}
  Moreover, if the non-existence of approximation algorithms only holds for constructive approximation algorithms, then the lower bound
  only holds for constructive \lop-kernels.
\end{corollary}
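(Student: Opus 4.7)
The plan is to derive both items of the corollary directly by contraposition from \autoref{thm:lop-kernelsG}, matching its two cases with the two cases stated here. The corollary is really just a convenient repackaging of the theorem, so no new mathematical content is needed beyond verifying that the exponents and constants transform correctly.

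For item~1, I would suppose for contradiction that $\Pi$ admits a \lop-kernel of size $\Ocal(k^c)$ with $c := \frac{1}{a-r}$. Since $r \in (0,1)$ and the quantity $\frac{1}{a-r}$ is only meaningful when $a > r$, we have $ac = \frac{a}{a-r} > 1$, so item~1 of \autoref{thm:lop-kernelsG} applies and produces a polynomial-time value-approximation algorithm with ratio $\Ocal(n^{(ac-1)/c})$. A short computation gives
\[
\frac{ac-1}{c} \;=\; a - \frac{1}{c} \;=\; a - (a-r) \;=\; r,
\]
so the ratio is $\Ocal(n^{r})$, contradicting the assumption that no such algorithm exists.

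For item~2, I would again suppose for contradiction that $\Pi$ admits a \lop-kernel of size $\beta' k^{c'}$ with $ac' \le 1$ and $\alpha \beta'^{a} 2^{ac'} + 3 \le \beta$. Here item~2 of \autoref{thm:lop-kernelsG} applies and yields, using the same constants $\alpha$ (from $\ub(n) \le \alpha n^a$) and $\beta'$ (from the size bound), a polynomial-time value-approximation algorithm with ratio at most $\alpha \beta'^{a} 2^{ac'} + 3 \le \beta$, again contradicting the hypothesis.

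The ``moreover'' addendum about constructivity is immediate: both contradictions above only invoke the corresponding constructive conclusions of \autoref{thm:lop-kernelsG}, which in turn only require the \lop-kernel to be constructive. The sole step that needs care is the arithmetic $(ac-1)/c = r$ in item~1, and perhaps a remark that $a > r$ is implicitly assumed; this is the only place where a reader might be slowed down, and I would flag it explicitly. Beyond that there is no obstacle, since the heavy lifting has already been carried out in \autoref{lem:lop-dual} and \autoref{lem:dual-approxG}.
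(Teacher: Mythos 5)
Your proposal is correct and matches the paper exactly: the paper obtains \autoref{cor:lop-kernelsG} simply by taking the contrapositive of \autoref{thm:lop-kernelsG}, and your arithmetic $(ac-1)/c = r$ for $c = \frac{1}{a-r}$ (together with the implicit requirement $a > r$) and the direct substitution in item~2 are precisely the verifications needed.
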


\section{A framework for ruling out certain polynomial kernels: the case of minimization problems}
\label{sec:framework-min}

In this section we adapt the framework of \lop-kernels introduced in \autoref{sec:framework-max}
to minimization problems. The definitions and results for minimization problems are very close to those for maximization problems,  but there are a number of subtle differences that we will discuss as they appear.

We say that a minimization problem $\Pi$ is \emph{well-behaved} if it comes equipped with a size function (as defined in \autoref{sec:prelim}) and it satisfies the following condition, which we denote by $C^{\min}$:

\begin{quotation}
  \noindent There exists a polynomial-time algorithm that, given as input  an instance $I$ of $\Pi$, decides if $\opt(I)=0$, and in this case provides a solution $s$ where $\val(I,s)=0$, or
otherwise provides any solution $s$.
\end{quotation}

Note that condition $C^{\min}$ above is {\sl not} the symmetric version of condition $C^{\max}$ defined at the beginning of \autoref{sec:framework-max}.



Given a well-behaved minimization problem $\Pi$,
we say that a function $\ub: \mathds{N} \to \mathds{N}$ is an \emph{upper bound function} for $\Pi$ if for any instance $I$ of $\Pi$ and any solution $s$ of $I$, we have $\val(I,s) \le \ub(|I|)$, where $|\cdot|$ is the size function of $\Pi$. Note that this notion of upper bound function differs from the one given for maximization problems.
Again, throughout the paper, we assume that the notions of size used in both the size of kernels and the upper bound function are the same.

The reminder of this section is organized similarly to \autoref{sec:framework-max}. Namely, in \autoref{sec:lop-defs-for-min} we present the definition of \lop-rules and \lop-kernels for well-behaved minimization problems, and we prove a general technical result, namely \autoref{lem:lop-dualmin}.
In \autoref{sec:res-min} we present the connection between \lop-kernels and approximation algorithms for vertex-minimization problems, and in \autoref{sec:res-min-gnl}
 we generalize it to arbitrary well-behaved minimization problems.

\subsection{Definition of \lop-rules and \lop-kernels}
\label{sec:lop-defs-for-min}

The following definition should be compared to \autoref{def:lop}.

\begin{definition}\label{def:lop-min}
  A \emph{large optimal preserving} reduction rule, or \emph{\lop-rule} for short, for a well-behaved minimization problem $\Pi$, is a polynomial-time algorithm $R$ that, given an instance
  $(I,k)$ of $\Pi_{\dec}$, computes another instance $(I',k')$ of $\Pi_{\dec}$ with $0 \leq k' \leq k$ and such that
  \begin{enumerate}
    \item if $(I,k)$ is a \yes-instance of $\Pi_{\dec}$, then $(I',k')$ is a \yes-instance of $\Pi_{\dec}$, and
    \item\label{point3min}  if $(I,k)$ is a  \no-instance of $\Pi_{\dec}$, then $\opt(G') \geq \opt(G) - (k-k')$, implying that $(I',k')$ is a \no-instance of $\Pi_{\dec}$.
  \end{enumerate}
  A \lop-rule $R$ is \emph{constructive} if, for any solution $s'$ of $I'$, it constructs (in polynomial time) a solution $s$ of $I$ such that $\val(I,s) \le \val(I',s')+(k-k')$.
\end{definition}

Note that Property~\ref{point3min} in \autoref{def:lop-min} is stronger than the implication ``if $(I,k)$ is a \no-instance  of $\Pi_{\dec}$, then $(I',k')$ is a \no-instance  of $\Pi_{\dec}$'', which would yield the definition of a classical kernelization algorithm.
Observe also that the constructibility condition implies Property~\ref{point3min}, unlike in the maximization case. 
Indeed, when we consider how this latter implication is generally proved in safeness proofs of classical kernels, we generally prove its contrapositive, and one of the following scenarios often occur:
\begin{enumerate}[(a)]
\item\label{case1min} For every solution $s'$ of $I'$ there exists a solution $s$ of $I$ with $\val(I,s) \le \val(I',s')+(k-k')$.
\item\label{case2min} For every solution $s'$ of $I'$ with $\val(I',s') \le k'$, there exists a solution $s$ of $I$ with $\val(I,s) \le \val(I',s')+(k-k')$.
\item\label{case3min} If there exists a solution $s'$ of $I'$ with $\val(I',s') \le k'$, then there exists a solution $s$ of $I$ with $\val(I,s) \le  k$.
\end{enumerate}

In Case~(\ref{case1min}), the rule preserves all optimal values, as it implies that $\opt(G') \ge \opt(G)-(k-k')$, and note that it implies Property~\ref{point3min} in \autoref{def:lop-min}. Compared to the maximization case, Case~(\ref{case3min}) still implies only the classical implication ``if $(I,k)$ is a \no-instance of $\Pi_{\dec}$, then $(I',k')$ is a \no-instance of $\Pi_{\dec}$'', but note that Case~(\ref{case2min}) no longer implies Property~\ref{point3min}. This is one of reasons why, in our opinion, the framework of \lop-kernels seems to be more natural when applied to maximization problems.


The following observation is again an immediate consequence of the definition of a \lop-rule.

\begin{observation}\label{prop:lop-chain-min}
  \lop-rules can be composed. Formally, consider two \lop-rules $R_1$ and $R_2$.
  Then, the rule $R$ that, given a instance $(I,k)$ of $\Pi_{\dec}$, returns $R_2(R_1(I,k))$, is also a \lop-rule. Moreover, if $R_2$ and $R_1$ are constructive, then $R$ is also constructive.
\end{observation}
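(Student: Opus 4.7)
The plan is to set $(I_1,k_1) := R_1(I,k)$ and $(I',k') := R_2(I_1,k_1)$, and verify directly that $R$ satisfies both parts of \autoref{def:lop-min}. Since each $R_i$ runs in polynomial time and produces an instance of polynomial size, the composition $R$ also runs in polynomial time; chaining the two parameter bounds gives $0 \le k' \le k_1 \le k$. The \yes-preserving direction is immediate: if $(I,k)$ is a \yes-instance, then condition~1 applied to $R_1$ yields that $(I_1,k_1)$ is a \yes-instance, and condition~1 applied to $R_2$ then yields that $(I',k')$ is a \yes-instance.

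The slightly nontrivial step is the second condition. Assume $(I,k)$ is a \no-instance, so $\opt(I) > k$. Condition~2 of $R_1$ then gives $\opt(I_1) \ge \opt(I) - (k-k_1) > k_1$, so that $(I_1,k_1)$ is itself a \no-instance of $\Pi_{\dec}$; this is precisely what is required in order to invoke condition~2 of $R_2$, which yields $\opt(I') \ge \opt(I_1) - (k_1-k')$. Combining the two inequalities gives
\[
\opt(I') \;\ge\; \opt(I) - (k-k_1) - (k_1-k') \;=\; \opt(I) - (k-k'),
\]
as required. For the constructive part, given any solution $s'$ of $I'$, the constructive algorithm for $R_2$ produces in polynomial time a solution $s_1$ of $I_1$ with $\val(I_1,s_1) \le \val(I',s') + (k_1-k')$, and then the constructive algorithm for $R_1$ produces a solution $s$ of $I$ with $\val(I,s) \le \val(I_1,s_1) + (k-k_1) \le \val(I',s') + (k-k')$, matching the constructive requirement for $R$.

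There is no real obstacle beyond carefully tracking the intermediate quantities; the only subtle point is noting that condition~2 of $R_2$ can legitimately be invoked because the intermediate instance $(I_1,k_1)$ inherits the \no-instance property from $(I,k)$ via condition~2 of $R_1$, so that the telescoping inequality on the optima indeed yields Property~\ref{point3min} of \autoref{def:lop-min} for the composed rule.
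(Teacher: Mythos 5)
Your proof is correct and simply spells out the routine verification that the paper leaves implicit (the paper states this observation as an ``immediate consequence'' of \autoref{def:lop-min} without giving a proof). Your telescoping of the two inequalities in Property~\ref{point3min}, including the remark that the intermediate instance $(I_1,k_1)$ inherits the \no-instance property so that condition~2 of $R_2$ may be invoked, and the chaining of the constructive guarantees, is exactly the intended argument.
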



A typical example of a \lop-rule for a vertex-minimization problem is when, for some problem $\Pi$, we can identify a ``dominant'' set of vertices that can be safely included into a solution. More precisely, consider a rule that, given a graph $G$, finds a subset $T \subseteq V(G)$ and a graph $G'$ such that there exists an optimal solution $S^{\star}$ in $G$ such that $S^{\star} = T \cup S'$, where $S'$ is a solution in $G'$, and  for every solution $S'$ in $G'$, $S' \cup T$ is a solution in $G$.
Such a rule is indeed a (constructive) \lop-rule.

Even if almost all classical known reduction rules for minimization problems are \lop-rules~\cite{Book-kernels,CyganFKLMPPS15}, here is a simple example a non-\lop-rule.
Consider the \textsc{Vertex Cover} problem, and suppose that, given an instance $(G,k)$, we find in $G$ a matching $M$ of size $k+1$.
The rule just outputs $(G',k')=(M,k)$, hence preserving the fact that $(G,k)$ is a \no-instance.
However, this rule does not satisfy Property~\ref{point3min} in \autoref{def:lop-min},
since the size of a minimum vertex cover of $G$
may be arbitrarily large compared to $k$, hence the inequality  $\opt(G') \geq \opt(G)$ may not hold. In \autoref{sec:applications-lop} we discuss a (much more involved) reduction rule for a vertex-minimization problem, namely \textsc{Tree Deletion Set}, which is not a \lop-rule either.

As in the maximization case, if we defined a \lop-kernel as an  algorithm consisting only of \lop-rules, we would exclude from being a \lop-kernel, for instance, the algorithm consisting of the rule that detects a \no-instance of \textsc{Vertex Cover} as in the above paragraph. This justifies the next definition, where we also allow \lop-kernels to decide instances, and that should be compared to \autoref{def:lop-kernel}.



\begin{definition}\label{def:lop-kernelmin}
 Let $\Pi$ be a well-behaved minimization problem and let $s: \mathds{N} \to  \mathds{N}$ be a computable function. A \emph{\lop-kernel of size $s$} for $\Pi$ parameterized by the solution size  is
 a polynomial-time algorithm that takes as input an instance $(I,k)$ of $\Pi_{\dec}$, and either
  \begin{itemize}
    \item decides that $(I,k)$ is a \yes-instance or a \no-instance, or
    \item outputs a reduced instance $(I',k')$ by applying a sequence of \lop-rules to $(I,k)$, with $|I'| \leq s(k)$.
  \end{itemize}
A \lop-kernel is \emph{constructive} if, in the first case, it constructively decides $(I,k)$, and, in the second case, it only uses constructive \lop-rules.
\end{definition}

Note that the constructibility condition in \autoref{def:lop-kernelmin} differs from that in \autoref{def:lop-kernel} for maximization problems. We need this stronger property in the proof of \autoref{lem:lop-dualmin}.





Our next objective is to prove that a \lop-kernel yields the existence of a polynomial-time approximation algorithm.
For this, we need the following definition, which is the version of \autoref{def:dual-approx} for minimization problem.

 \begin{definition}\label{def:dual-approxmin}
   Let $\Pi$ be a well-behaved minimization problem and let $f: \mathds{N} \to \mathds{N}$. An \emph{$f$-dual-approximation algorithm for $\Pi$}  is a polynomial-time algorithm that, given an instance $(I,k)$ of $\Pi_{\dec}$, concludes one of the following:
   \begin{itemize}
   \item $\opt(I) \leq f(k)$.
   \item $\opt(I) > k$.
   \end{itemize}
 An $f$-dual-approximation algorithm is \emph{constructive} if, whenever it concludes that $\opt(I) \leq f(k)$, it provides a solution  $s$ with $\val(I, s) \le f(k)$.
 \end{definition}

 In the next lemma we prove that a \lop-kernel of size $s$ yields an $f$-dual-approximation algorithm (where $f$ depends on $s$), which in turn yields a classical approximation algorithm whose ratio depends on $s$.
 As in the maximization case, to provide some insight on the statement of the next lemma, keep in mind that for vertex-minimization problems, the upper bound function $u$ is typically the identity function. Note that in the next lemma, the derived function $f(k)$ differs slightly from that of \autoref{lem:lop-dual}; this is due to technical reasons motivated by the fact that there the maximization and minimization versions of our framework are not totally symmetric.


 \begin{lemma}\label{lem:lop-dualmin}
	Let $\Pi$ be a well-behaved minimization problem with a non-decreasing upper bound function $\ub$ and let $s: \mathds{N} \to \mathds{N}$ be a computable function.
	If $\Pi$ admits a \lop-kernel of size $s$, then $\Pi$ admits an $f$-dual-approximation algorithm where $f(k) := \ub(s(k)) + k$.
	Moreover, if the \lop-kernel is constructive, then the $f$-dual-approximation algorithm is also constructive.
\end{lemma}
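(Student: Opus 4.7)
The plan is to mirror the proof of \autoref{lem:lop-dual} with the inequalities reversed and with extra care for constructibility. Given an instance $(I,k)$ of $\Pi_{\dec}$ and a \lop-kernel $\Rcal$ of size $s$, I will describe an $f$-dual-approximation algorithm $\Acal$ that runs $\Rcal$ on $(I,k)$ and then branches on its three possible outputs, as prescribed by \autoref{def:lop-kernelmin}. The two easy cases are when $\Rcal$ decides the instance directly: if $\Rcal$ answers \yes, then $\opt(I) \le k \le f(k)$ and $\Acal$ returns that conclusion, accompanied by the witness solution produced by $\Rcal$ whenever $\Rcal$ is constructive; if $\Rcal$ answers \no, then $\opt(I) > k$ and $\Acal$ returns that conclusion, which requires no witness by \autoref{def:dual-approxmin}.

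The main case is when $\Rcal$ outputs a reduced instance $(I',k')$ with $|I'| \le s(k)$, obtained from $(I,k)$ by a sequence of \lop-rules. Using \autoref{prop:lop-chain-min}, I will collapse this sequence into a single \lop-rule, which is constructive whenever $\Rcal$ is. Invoking condition $C^{\min}$ on $I'$ yields in polynomial time some solution $s'$ of $I'$, and the upper bound function gives $\val(I',s') \le \ub(|I'|) \le \ub(s(k))$, where the monotonicity of $\ub$ is used. In the non-constructive setting, Property~\ref{point3min} of \autoref{def:lop-min} implies that if $(I,k)$ is a \no-instance then $\opt(I) \le \opt(I') + (k-k') \le \ub(s(k)) + k = f(k)$, while if $(I,k)$ is a \yes-instance then trivially $\opt(I) \le k \le f(k)$; in either subcase, $\Acal$ safely returns $\opt(I) \le f(k)$. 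In the constructive setting, the constructibility clause of \autoref{def:lop-min} applied to $s'$ produces a solution $s$ of $I$ with $\val(I,s) \le \val(I',s') + (k-k') \le \ub(s(k)) + k = f(k)$, which $\Acal$ returns together with the answer $\opt(I) \le f(k)$.

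The main obstacle will be aligning the constructibility layers correctly, because \autoref{def:dual-approxmin} in the constructive case demands an explicit solution of value at most $f(k)$ and not merely a bound on $\opt(I)$. This is exactly the reason why \autoref{def:lop-kernelmin} requires every individual rule of the composed sequence to be constructive, and it is precisely the constructive clause of \autoref{prop:lop-chain-min} that lets one turn an arbitrary solution of $I'$ into a solution of $I$ of the required value. Once this point is properly handled, the remaining verification reduces to the elementary inequalities $|I'| \le s(k)$, $\val(I',s') \le \ub(|I'|)$, and $0 \le k' \le k$, together with the non-decreasing property of $\ub$.
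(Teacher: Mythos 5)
Your proposal is correct and follows essentially the same route as the paper's proof: branch on the three outputs of the \lop-kernel, collapse the rule sequence via \autoref{prop:lop-chain-min}, split the reduced-instance case according to whether $\opt(I) \le k$ or $\opt(I) > k$ and apply Property~\ref{point3min} together with the bound $\opt(I') \le \ub(|I'|) \le \ub(s(k))$, and in the constructive case use $C^{\min}$ to obtain an arbitrary solution $s'$ of $I'$ and lift it through the constructive rule. You also correctly pinpoint the reason the minimization-side constructibility clause must apply to arbitrary solutions of $I'$, which is exactly the subtlety the paper flags after \autoref{def:lop-kernelmin}.
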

\begin{proof}
	Let $k\in \mathds{N}$, $(I,k)$ be an instance of $\Pi_{\dec}$, and $\Rcal$ be a \lop-kernel of size $s$ for $\Pi$.
	We describe an $f$-dual-approximation algorithm $\Acal$ which takes as input $I$ and $k$, starts by running $\Rcal$ with input $(I, k)$, and continues based on its possible output.
        If $\Rcal$ decides that $(I,k)$ is a \no-instance, then $\opt(I)> k$, and $\Acal$ returns $\opt(I)> k $.
        If $\Rcal$ decides that $(I,k)$ is a \yes-instance, then $\opt(I)\le k$, and $\Acal$ returns $\opt(I)\le k \le f(k)$ as well.
        Notice that if $\Rcal$ is constructive, then it provides a solution $s$ of $I$ such that $\val(I,s) \ge k$, and $\Acal$ returns this solution as well.

        Finally, suppose that $\Rcal$ outputs an equivalent instance $(I',k')$ obtained from $(I, k)$ using only \lop-rules and such that $|I'| \leq s(k)$.
        By using \autoref{prop:lop-chain-min} we can assume that $(I',k')$ is obtained from $(I,k)$ by a single \lop-rule.
        Let us start by the non-constructive case, in which $\Acal$ returns $\opt(I) \le f(k)$.
        If $\opt(I) \le k$, then we are done as $k \le f(k)$.
        If $\opt(I) > k$, Property~\ref{point3min} in \autoref{def:lop-min} implies that $\opt(I) \le \opt(I')+(k-k') \le \ub(|I'|) + k  \leq   \ub(s(k)) + k = f(k)$, where we have used that $\ub$ is non-decreasing.
	Let us now turn to the constructive case. As $\Pi$ is well-behaved and verifies $C^{\min}$, we can compute in polynomial time a solution $s'$ (of any cost), and according to the definition
        of $\ub$ we have $\val(I',s') \le \ub(|I'|) \le \ub(s(k))$, where we have used again that $\ub$ is non-decreasing. Finally, as the rule is constructive, we can construct in polynomial time a solution $s$ such that $\val(I,s) \le \val(I',s')+k \le f(k)$,
        and algorithm $\Acal$ returns this solution as well.
\end{proof}



Let us now turn to our main results relating the size of \lop-kernels with the existence of approximation algorithms. As in \autoref{sec:framework-max}, to keep statements as simple as possible, we provide in \autoref{sec:res-min} results that correspond to the specialized versions for vertex-maximization problems of results in \autoref{sec:res-min-gnl}.

\subsection{Connection between \lop-kernels and approximation algorithms for vertex-minimization problems}
\label{sec:res-min}

In this subsection we deal with vertex-minimization problems. The following lemma, which should be compared to \autoref{lem:dual-approx}
Note that the hypothesis in the second item of \autoref{lem:dual-approx} is slightly different from the one below, and that the obtained approximation ratios are also slightly different.


\begin{lemma}\label{lem:dual-approx-min}
  Let $\Pi$ be a vertex-minimization problem whose decision version is in \NP, $c > 1$ and $\beta \ge 1$ be real numbers, and $f: \mathds{N} \to \mathds{N}$ be a computable function.
  \begin{enumerate}
  \item If $\Pi$ admits an $f$-dual-approximation algorithm where $f(k) = \Ocal(k^c)$, then $\Pi$ admits a polynomial-time value-approximation algorithm with ratio $\O(n^{\frac{c-1}{c}})$ on $n$-vertex graphs.
  \item If $\Pi$ admits a $f$-dual-approximation algorithm where $f(k) = \beta k$, then $\Pi$ admits a polynomial-time value-approximation algorithm with ratio $\beta$.
  \end{enumerate}
  Moreover, if the $f$-dual-approximation is constructive, then the corresponding approximation algorithm is also constructive.
\end{lemma}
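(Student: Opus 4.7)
The plan is to mirror the proof of \autoref{lem:dual-approx}, with the roles of ``largest'' and ``smallest'' swapped. Let $A$ be the given $f$-dual-approximation algorithm. Given an $n$-vertex input graph $G$, I would first check whether $\emptyset$ is feasible (polynomial time, since the decision version of $\Pi$ is in \NP, so feasibility of any given subset is verifiable in polynomial time): if so, output $\emptyset$ and stop; otherwise, $\opt(G) \ge 1$. Next, search for $k_0 \in \{1, \ldots, n\}$, defined as the smallest positive integer $k$ such that $A(G, k)$ returns ``$\opt(G) \le f(k)$''. Since $\Pi$ is a vertex-minimization problem, $\opt(G) \le n$, hence at $k = n$ the algorithm must return that answer, so $k_0$ is well-defined and can be located in polynomial time via $n$ calls to $A$.

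By the minimality of $k_0$, if $k_0 \ge 2$ then $A(G, k_0 - 1)$ returned ``$\opt(G) > k_0 - 1$''; combined with $\opt(G) \ge 1$, we obtain $\opt(G) \ge k_0$ in all cases. The call $A(G, k_0)$ yields $\opt(G) \le f(k_0)$, and if $A$ is constructive it produces a feasible solution $s$ with $\val(G, s) \le f(k_0)$; since $s \subseteq V(G)$, automatically $|s| \le n$, so $\val(G, s) \le \min(f(k_0), n)$. In the value-approximation setting, I would simply return $\min(f(k_0), n)$.

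For \emph{Case 1}, where $f(k) = \Oh(k^c)$, I would split on the magnitude of $k_0$. If $k_0 \le n^{1/c}$, the ratio is at most $f(k_0)/k_0 = \Oh(k_0^{c-1}) = \Oh(n^{(c-1)/c})$. If $k_0 > n^{1/c}$, I use the bound $\val \le n$ to obtain ratio at most $n/k_0 < n^{(c-1)/c}$. For \emph{Case 2}, where $f(k) = \beta k$, the ratio is immediately $f(k_0)/k_0 = \beta$, with no $\varepsilon$ slack needed---unlike the maximization version, whose additive ``$+1$'' term in $f$ forced a ``$\beta + \varepsilon$'' loss.

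The only mildly subtle point, which I expect to be the main obstacle, is the regime $k_0 > n^{1/c}$ in Case 1, where one must bound $\val$ by $n$ rather than by $f(k_0)$; this is where the vertex-minimization hypothesis is essential, as it guarantees that any feasible solution (being a subset of $V(G)$) has size at most $n$. This mirrors the corresponding argument in the maximization proof, which instead uses $\opt \le n$ in its symmetric subcase.
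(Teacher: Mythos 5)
Your proof is correct and follows essentially the same route as the paper's: find the smallest $k_0$ for which the dual-approximation reports $\opt(G)\le f(k_0)$, deduce $\opt(G)\ge k_0$, and split Case~1 on whether $k_0$ exceeds $n^{1/c}$. The only (harmless) differences are that you handle the boundary by testing feasibility of $\emptyset$ where the paper brute-forces the $k_0=0$ case, and that you explicitly return $\min(f(k_0),n)$ to justify the $n/k_0$ bound in the large-$k_0$ subcase, a point the paper leaves implicit.
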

\begin{proof}
	Let $A$ be an $f$-dual-approximation algorithm for $\Pi$. We proceed to construct a polynomial-time approximation algorithm for $\Pi$ with the claimed ratio. We consider the two statements of the lemma separately.
	

        \paragraph*{Case 1: $f(k) = \Ocal(k^c)$.}
        Given an $n$-vertex graph $G$ as instance of $\Pi$, we find $k_0 \in \{0,\dots,n\}$ defined as the smallest positive integer $k$ such that algorithm~$A$ returns that $\opt(G) \le f(k)$.
	Note that $k_0$ can be found in polynomial time by performing at most $n+1$ calls to algorithm~$A$.
        Notice that $k_0$ always exists as we cannot have $\opt(G) > n$.
        If $k_0=0$, then $\opt(G) \le f(0) = \Ocal(1)$, and since the decision version of $\Pi$ is in \NP, we can find an optimal solution in polynomial time by verifying all vertex subsets of size at most $f(0)$. Otherwise, that is, if $k_0 \ge 1$, our approximation algorithm returns $f(k_0)$, or if is constructive it returns a solution $S_0$ (that is, a subset of vertices) such that $|S_0| \le f(k_0)$. By definition of $k_0$, we have that $\opt(G) > k_0 -1$, or equivalently $\opt(G) \ge k_0$.
	Let us prove that this algorithm provides the claimed approximation ratio.
        We distinguish two subcases depending on the value of $k_0$.
 	Suppose first that $k_0 \geq n^{1/c}$.
 	In this case we get that
	\begin{equation*}\label{eq:approx1min}
	\frac{f(k_0)}{\opt(G)}\ \le\ \frac{n}{k_0}\ \le \ \Ocal(n^{\frac{c-1}{c}}).
	\end{equation*}

 	Otherwise, it holds that $k_0 < n^{1/c}$.
 	In this case we get that
	 \begin{equation*}\label{eq:approx2min}
	\frac{f(k_0)}{\opt(G)}\ \le \ \frac{f(k_0)}{k_0}\ <\  \Ocal(k_0^{c-1})\ =\ \Ocal(n^{\frac{c-1}{c}}).
	\end{equation*}
	 Since in both cases we have a ratio of $\Ocal(n^{\frac{c-1}{c}})$, the lemma follows in Case 1.

         \paragraph*{Case 2: $f(k) = \beta k$.}
         As in Case 1, we start by finding $k_0$ defined as the smallest positive integer $k$ such that algorithm~$A$ returns that $\opt(G) \le f(k)$.
         If $k_0=0$ we proceed as in the first case.
         Otherwise, we return $f(k_0)$, or if the algorithm is constructive we return a solution $S_0$ (that is, a subset of vertices) such that $|S_0| \ge f(k_0)$.
         We have
         \begin{equation*}\label{eq:approx3min}
	\frac{f(k_0)}{\opt(G)}\ \le \ \frac{f(k_0)}{k_0}\ \le\ \beta,
	\end{equation*}
and the lemma follows in Case 2.
\end{proof}

As a vertex-minimization problem whose decision version is in \NP is a well-behaved problem, the hypothesis of \autoref{lem:dual-approx-min} is satisfied (by taking the identity function as upper bound function),
and thus the following theorem is immediate by pipelining \autoref{lem:lop-dualmin} and \autoref{lem:dual-approx-min}. The next theorem should be compared to \autoref{thm:lop-kernels}.

\begin{theorem}\label{thm:lop-kernelsmin}
  Let  $\Pi$ be a vertex-minimization problem whose decision version is in \NP.
  \begin{enumerate}
  \item For every real number $c > 1$, if $\Pi$  admits a \lop-kernel with $\Ocal(k^c)$ vertices, then it admits a polynomial-time value-approximation algorithm with ratio $\O(n^{\frac{c-1}{c}})$ on $n$-vertex graphs.
  \item For every real number $c > 1$, if $\Pi$  admits a \lop-kernel with at most $ck$ vertices, then it admits a polynomial-time value-approximation algorithm with ratio $(c+1)$.
  \end{enumerate}
  Moreover, if the \lop-kernel is constructive, then the corresponding approximation algorithm is also constructive.
\end{theorem}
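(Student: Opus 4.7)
The plan is to obtain \autoref{thm:lop-kernelsmin} as a direct corollary of \autoref{lem:lop-dualmin} and \autoref{lem:dual-approx-min}, in essentially the same pipelined fashion that \autoref{thm:lop-kernels} was derived from \autoref{lem:lop-dual} and \autoref{lem:dual-approx} in the maximization case. First, I would verify that a vertex-minimization problem $\Pi$ whose decision version lies in \NP is a well-behaved minimization problem with the identity as a valid upper bound function: any feasible solution is a subset of $V(G)$, so its value is at most $|V(G)| = n$, giving $\ub(n) = n$ (which is non-decreasing and polynomial-time computable); moreover, condition $C^{\min}$ is easily verified since $\opt(I) = 0$ can be recognized from \NP membership on $k=0$, and otherwise any single vertex (or indeed the whole vertex set) yields a trivially computable feasible solution.

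Next I would invoke \autoref{lem:lop-dualmin} with this choice of $\ub$: a \lop-kernel of size $s$ for $\Pi$ immediately produces an $f$-dual-approximation algorithm for $\Pi$ with $f(k) = \ub(s(k)) + k = s(k) + k$, with the constructive version guaranteed if the \lop-kernel is constructive. For item~(1), $s(k) = \Ocal(k^c)$ with $c > 1$ gives $f(k) = \Ocal(k^c) + k = \Ocal(k^c)$, so the first case of \autoref{lem:dual-approx-min} applies and yields a polynomial-time value-approximation algorithm of ratio $\Ocal(n^{\frac{c-1}{c}})$. For item~(2), $s(k) \leq ck$ gives $f(k) \leq (c+1) k$, so the second case of \autoref{lem:dual-approx-min} applies with $\beta = c+1 \geq 1$ and yields ratio $c+1$.

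The constructive half of the theorem follows by tracking the \emph{moreover} clauses through the chain: a constructive \lop-kernel yields a constructive $f$-dual-approximation via \autoref{lem:lop-dualmin}, which in turn yields a constructive approximation algorithm via \autoref{lem:dual-approx-min}.

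There is no real obstacle in this argument; it is purely a matter of aligning hypotheses. The only mildly delicate point is to notice that in item~(2) the bound $f(k) \leq (c+1)k$ is of the exact form $\beta k$ required by the second case of \autoref{lem:dual-approx-min}, and that $c > 1$ comfortably ensures $\beta \geq 1$. In particular, unlike the maximization version \autoref{thm:lop-kernels}, no additive ``$+\varepsilon$'' slack is needed, reflecting the asymmetry between the two frameworks that was already flagged in the surrounding discussion.
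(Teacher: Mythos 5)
Your proposal is correct and matches the paper's own derivation: \autoref{thm:lop-kernelsmin} is obtained exactly by pipelining \autoref{lem:lop-dualmin} (with the identity as upper bound function, valid since any vertex subset has size at most $n$) and \autoref{lem:dual-approx-min}, with $f(k)=s(k)+k$ giving $\Ocal(k^c)$ in item (1) and $(c+1)k$ in item (2), and the constructive clause tracked through both lemmas.
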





As the framework of \lop-kernels is mainly defined as a tool to get \lop-kernel lower bounds from inapproximability, let us explicitly formulate the contrapositive of \autoref{thm:lop-kernelsmin}.
Note again that, when applying it to a concrete problem $\Pi$, the inapproximability of $\Pi$ will rely on some complexity assumption, typically $\P \neq\NP$.

\begin{corollary}\label{cor:lop-kernelsmin}
  Let  $\Pi$ be a vertex-minimization problem whose decision version is in \NP.
  \begin{enumerate}
  \item For every real number $r \in (0,1)$, if $\Pi$ does not admit a polynomial-time value-approximation algorithm with ratio $\O(n^{r})$ on $n$-vertex graphs, then
    $\Pi$ parameterized by the solution size does not admit a \lop-kernel with $\Ocal(k^{\frac{1}{1-r}})$ vertices.
  \item For every real number $\beta > 1$, if $\Pi$ does not admit a polynomial-time value-approximation algorithm with ratio $\beta$, then
    $\Pi$ parameterized by the solution size does not admit a \lop-kernel with $(\beta-1-\varepsilon)k$ vertices  for any real number $\varepsilon > 0$.
  \end{enumerate}
  Moreover, if the non-existence of approximation algorithms only holds for constructive approximation algorithms, then the lower bound
  only holds for constructive \lop-kernels.
\end{corollary}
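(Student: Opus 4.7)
The plan is to derive Corollary~\ref{cor:lop-kernelsmin} as the direct contrapositive of Theorem~\ref{thm:lop-kernelsmin}, performing the necessary algebraic substitutions to express the lower bound on kernel size in terms of the inapproximability ratio. There is no new conceptual content beyond Theorem~\ref{thm:lop-kernelsmin}; the proof amounts to inverting the two parameter relations $c \leftrightarrow r$ and $c \leftrightarrow \beta$ between kernel exponent and approximation ratio.

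For the first item, I would argue by contradiction: assume that $\Pi$ admits a \lop-kernel with $\Ocal(k^{1/(1-r)})$ vertices, and set $c := 1/(1-r)$. Since $r \in (0,1)$, we have $c > 1$, so item~1 of Theorem~\ref{thm:lop-kernelsmin} yields a polynomial-time value-approximation algorithm with ratio $\Ocal(n^{(c-1)/c})$. The elementary identity
\[
\frac{c-1}{c} \ =\ 1 - \frac{1}{c} \ =\ 1 - (1-r) \ =\ r
\]
shows that this ratio is $\Ocal(n^{r})$, contradicting the hypothesis.

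For the second item, I would again proceed by contradiction and assume that $\Pi$ admits a \lop-kernel with $(\beta - 1 - \varepsilon)k$ vertices for some $\varepsilon > 0$. Setting $c := \beta - 1 - \varepsilon$, item~2 of Theorem~\ref{thm:lop-kernelsmin} produces a polynomial-time value-approximation algorithm with ratio $c+1 = \beta - \varepsilon < \beta$, contradicting the assumed inapproximability within ratio $\beta$. The only minor care point is the boundary case $c \leq 1$ (which occurs only for $\beta$ very close to $1$, and hence for kernels strictly smaller than a linear kernel with coefficient $1$): this regime is either vacuous or can be absorbed into the case $c > 1$ after increasing $\varepsilon$ slightly, which still yields a contradiction for any $\beta > 1$.

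Finally, the constructibility transfer is immediate: if the \lop-kernel is constructive, then Theorem~\ref{thm:lop-kernelsmin} already delivers a constructive approximation algorithm of the ratio computed above, so both contradictions hold verbatim in the constructive setting. The proof thus reduces to one arithmetic identity and one inequality, and I would not expect any genuine obstacle; the substantive work was already done in Lemma~\ref{lem:lop-dualmin} and Lemma~\ref{lem:dual-approx-min}, which are pipelined by Theorem~\ref{thm:lop-kernelsmin}.
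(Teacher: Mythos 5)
Your proposal is correct and is exactly the paper's argument: the corollary is stated there simply as the contrapositive of Theorem~\ref{thm:lop-kernelsmin}, with the same substitutions $c=1/(1-r)$ and $c=\beta-1-\varepsilon$. One small quibble on your boundary remark for item~2: when $c=\beta-1-\varepsilon\le 1$ you cannot reach the regime $c>1$ by \emph{increasing} $\varepsilon$ (that only shrinks $c$); rather, a \lop-kernel with at most $k$ vertices trivially decides every instance of a vertex-minimization problem (any instance on at most $k$ vertices is a \yes-instance), which yields an exact polynomial-time value-approximation and hence contradicts $\beta$-inapproximability for any $\beta>1$.
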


\subsection{Connection between \lop-kernels and approximation algorithms for well-behaved minimization problems}
\label{sec:res-min-gnl}

The following lemma and theorem are a more general version of  \autoref{lem:dual-approx-min} and \autoref{thm:lop-kernelsmin}, respectively, for well-behaved minimization problems. The next lemma should be compared to \autoref{lem:dual-approxG}, and note that the obtained approximation ratios in the second item of both lemmas are different.

\begin{lemma}\label{lem:dual-approxGmin}
  Let $\Pi$ be a well-behaved minimization problem, $a \in \mathds{R}^+$, $\ub: \mathds{N} \to \mathds{N}$, and $f: \mathds{N} \to \mathds{N}$ be functions such that $\ub(n) = \O(n^a)$,
    $\ub$ is polynomial-time computable,  and $f$ is computable. Suppose that $\Pi$ has \ub as upper bound function $\ub$ and that it admits an $f$-dual-approximation.
        \begin{enumerate}
          \item If $f(k) = \O(k^d)$ for some real number $d > 1$, then  $\Pi$ admits a polynomial-time value-approximation algorithm with ratio $\O(n^{\frac{a(d-1)}{d}})$, where $n$ is the size of the input.
	  \item If $f(k)=\lambda k^d+k$ for some real numbers $d \le 1$ and $\lambda > 0$, then $\Pi$ admits a polynomial-time value-approximation algorithm with ratio $\lambda + 1$.
        \end{enumerate}
	Moreover, if the dual-approximation algorithm is constructive, then the corresponding approximation algorithm is also constructive.
\end{lemma}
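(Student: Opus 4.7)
The plan is to closely mirror the proof of \autoref{lem:dual-approx-min}, with adjustments to accommodate the more general upper bound function $\ub(n) = \O(n^a)$ and the precise shapes of $f$ in each of the two cases. Let $A$ be the given $f$-dual-approximation algorithm for $\Pi$ and let $I$ be the input instance of size $n$. First, using the well-behaved property $C^{\min}$, I would check in polynomial time whether $\opt(I)=0$, and if so output the zero-valued solution supplied by $C^{\min}$, which is optimal. Henceforth assume $\opt(I) \ge 1$.

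Next, I would compute $k_0$ as the smallest positive integer in $\{1,\dots,\ub(n)\}$ for which $A(I,k)$ returns the conclusion $\opt(I) \le f(k)$. Since $\opt(I) \le \ub(n)$ by definition of the upper bound function and $A$ must return a true statement, the value $k=\ub(n)$ forces the answer $\opt(I) \le f(k)$, so $k_0$ exists and satisfies $k_0 \le \ub(n)$. Computing $k_0$ takes at most $\ub(n)$ calls to $A$, which is polynomial since $\ub$ is polynomial-time computable and $n$ is polynomially bounded in the bit-size. The approximation algorithm then returns $f(k_0)$, or, in the constructive case, the feasible solution of value at most $f(k_0)$ provided by $A$. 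By minimality of $k_0$, the call $A(I,k_0-1)$ (if $k_0\ge 2$) returned $\opt(I)>k_0-1$, hence $\opt(I)\ge k_0$; and if $k_0=1$ the inequality $\opt(I)\ge 1 = k_0$ follows from the preliminary step.

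For Case~1, with $f(k)=\O(k^d)$ and $d>1$, I would split on whether $k_0 \ge n^{a/d}$ or not. In the former subcase, $f(k_0)/\opt(I) \le \ub(n)/k_0 = \O(n^a)/n^{a/d} = \O(n^{a(d-1)/d})$. In the latter subcase, $k_0 < n^{a/d}$, so $f(k_0)/\opt(I) \le f(k_0)/k_0 = \O(k_0^{d-1}) = \O(n^{a(d-1)/d})$, matching the claimed ratio in both subcases. For Case~2, with $f(k)=\lambda k^d + k$ and $d\le 1$, the ratio is bounded as
$$\frac{f(k_0)}{\opt(I)}\ \le\ \frac{\lambda k_0^d + k_0}{k_0}\ =\ \lambda k_0^{d-1}+1\ \le\ \lambda+1,$$
where the last inequality uses $d-1\le 0$ together with $k_0\ge 1$.

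Finally, for the constructive assertion, note that whenever $A$ is constructive, the solution returned in the $(I,k_0)$-call has value at most $f(k_0)$, so the above inequalities apply verbatim to an explicit feasible solution. The main subtlety, rather than a genuine obstacle, is the careful handling of the boundary $k_0=1$ and of the degenerate case $\opt(I)=0$, which is why I would front-load the $C^{\min}$-based check so that the ratio calculations afterwards may divide by $\opt(I)\ge k_0\ge 1$ without issue; the rest is essentially bookkeeping with the two regimes of $f$.
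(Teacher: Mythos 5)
Your proposal is correct and follows essentially the same argument as the paper: locate the smallest $k_0$ with $A$ answering $\opt(I)\le f(k_0)$, use $\opt(I)\ge k_0$, and bound the ratio by the same two-regime case analysis in Case~1 and by $\lambda k_0^{d-1}+1\le\lambda+1$ in Case~2. The only (harmless) difference is that you dispatch the degenerate $\opt(I)=0$ situation up front via $C^{\min}$, whereas the paper handles it as the $k_0=0$ branch after the search.
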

\begin{proof}

	Let $A$ be an $f$-dual-approximation algorithm for $\Pi$. For both cases in the statement of the lemma, we proceed to construct a polynomial-time approximation algorithm for $\Pi$ with the claimed ratio.


        Given an instance $I$ of $\Pi$, we find $k_0 \in \{0,\dots,\ub(n)\}$ (recall that $n=|I|$) defined as the smallest positive integer $k$ such that algorithm~$A$ returns that $\opt(G) \leq f(k)$.
	Note that $k_0$ can be found in polynomial time, as $|I|$ is polynomial-time computable and its value $n$ is polynomially upper-bounded in the classical bit-size of the instance, and that $\ub(n)$ can be computed in polynomial time as well. Note that $k_0$ always exists, as we cannot have $\opt(I) > \ub(n)$.
        If $k_0=0$, then as by hypothesis $\Pi$ satisfies property $C^{\min}$,
         we can verify in polynomial time whether $\opt(I)=0$ and, if it is the case, we provide an optimal solution $s$ with $\val(I,s)=0$.
        Otherwise, we have $\opt(I) \ge 1$, and $A$ returns $f(k_0)$, or a solution $s$ such that $\val(I,s) \le f(k_0)$ if the dual-approximation algorithm is constructive.
        In the first case (that is, if $f(k) = \O(k^d)$), as $\opt(I) \ge 1$, we have a ratio $f(0)$, implying the claimed ratio.
        In the second case (that is, if $f(k)=\lambda k^d+k$), $f(0)=0$, so we even have an optimal solution.

        Let us now assume that $k_0 \geq 1$, and recall that $\opt(I) \ge k_0$.
        We distinguish the two cases claimed in the statement of the lemma.


        \paragraph*{Case 1: $f(k) = \O(k^d)$.}
        Suppose first that $k_0 \geq n^{a/d}$.
        In this case we have
	\begin{equation*}
	\frac{f(k_0)}{\opt(I)}\ \le \ \frac{\ub(n)}{k_0}\ \le \ \frac{\O(n^a)}{k_0}\ = \ \O\left(n^{\frac{a(d-1)}{d}}\right).
	\end{equation*}
	Otherwise, it holds that $k_0 < n^{a/d}$
        In this case we get that
	\begin{equation*}
	\frac{f(k_0)}{\opt(I)}\ =\ \frac{\Oh((k_0)^{d})}{k_0}\ = \ \Oh\left( (k_0) ^{d-1}\right)\ = \ \O\left(n^{\frac{a(d-1)}{d}}\right).
	\end{equation*}
        Since in both cases we have a ratio of $\Ocal(n^{\frac{a(d-1)}{d}})$, the lemma follows in Case 1.

         \paragraph*{Case 2: $f(k)=\lambda k^d+k$.}
         In this case  we have
         \begin{equation*}
	   \frac{f(k_0)}{\opt(I)}\ \le \ \frac{\lambda k_0^d+k_0}{k_0}\ = \ \lambda (k_0)^{d-1} + 1.
	 \end{equation*}
	 As $d \le 1$, the last expression in the above equation is decreasing in $k_0$, and as $k_0 \ge 1$, the maximum is reached for $k_0=1$, and the approximation ratio claimed in Case 2 follows.
\end{proof}

The next theorem, which should be compared to \autoref{thm:lop-kernelsG}, follows immediately by pipelining \autoref{lem:lop-dualmin} and \autoref{lem:dual-approxGmin}. Namely, starting with the hypothesis of \autoref{thm:lop-kernelsGmin}, we first apply \autoref{lem:lop-dualmin} and then \autoref{lem:dual-approxGmin} with $d = ac$ and $\lambda = \alpha \beta^a$.

\begin{theorem}\label{thm:lop-kernelsGmin}
  Let  $\Pi$ be a well-behaved minimization problem, $a, c \in \mathds{R}^+$, $\ub: \mathds{N} \to \mathds{N}$, and $s: \mathds{N} \to \mathds{N}$ be functions such that $\ub(n) = \O(n^a)$, $s(k) = \O(k^c)$, $\ub$ is non-decreasing, and $s$ and $\ub$ are polynomial-time computable.
Suppose that $\Pi$ has $\ub$  as upper bound function and that it admits a \lop-kernel of size $s$, according to the same size function $|\cdot|$  associated with $\Pi$.
\begin{enumerate}
 \item If $ac > 1$, then  $\Pi$ admits a polynomial-time value-approximation algorithm with ratio $\O(n^{\frac{ac-1}{c}})$.
 \item If $ac \le 1$ and $\alpha, \beta\in \mathds{R}^+$ are such that $\ub(n) \le \alpha n^a$ and $s(k) \le \beta k^c$, then $\Pi$ admits a polynomial-time value-approximation algorithm with ratio $\alpha \beta^a + 1$.
\end{enumerate}
Moreover, if the \lop-kernel is constructive, then the corresponding approximation algorithm is also constructive.
\end{theorem}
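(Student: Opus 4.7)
The plan is to prove Theorem~\ref{thm:lop-kernelsGmin} exactly as sketched in the paragraph preceding it, namely by chaining together Lemma~\ref{lem:lop-dualmin} (\lop-kernels yield dual approximations) and Lemma~\ref{lem:dual-approxGmin} (dual approximations yield value-approximations), and then verifying that the functions and constants line up as claimed.

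First, I would invoke Lemma~\ref{lem:lop-dualmin} on the given \lop-kernel of size $s$ for the well-behaved minimization problem $\Pi$ with non-decreasing, polynomial-time computable upper bound function \ub. This produces an $f$-dual-approximation algorithm with
\[
 f(k) \ = \ \ub(s(k)) + k.
\]
Moreover, Lemma~\ref{lem:lop-dualmin} guarantees that if the \lop-kernel is constructive, then so is the $f$-dual-approximation; this is the ingredient I will need at the end to obtain the constructibility clause of the theorem.

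Next, I would use the hypotheses $\ub(n) = \O(n^a)$ and $s(k) = \O(k^c)$ to rewrite $f$. Setting $d := ac$, we get $\ub(s(k)) = \O(k^{ac}) = \O(k^d)$, so $f(k) = \O(k^d) + k$. In the first case, $d = ac > 1$, so the linear term is absorbed and $f(k) = \O(k^d)$; I apply the first item of Lemma~\ref{lem:dual-approxGmin} and obtain a polynomial-time value-approximation algorithm of ratio
\[
 \O\!\bigl(n^{\frac{a(d-1)}{d}}\bigr) \ = \ \O\!\bigl(n^{\frac{ac-1}{c}}\bigr),
\]
which is precisely the ratio claimed in item~(1).

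For the second case, $ac \leq 1$ and we are given explicit constants $\alpha,\beta$ with $\ub(n) \leq \alpha n^a$ and $s(k) \leq \beta k^c$. Then
\[
 \ub(s(k)) \ \leq \ \alpha\,(\beta k^c)^a \ = \ \alpha \beta^a\, k^{ac},
\]
so setting $\lambda := \alpha \beta^a$ and again $d := ac$, the bound $f(k) \leq \lambda k^d + k$ holds with $d \leq 1$. I then apply the second item of Lemma~\ref{lem:dual-approxGmin}, which yields a polynomial-time value-approximation of ratio $\lambda + 1 = \alpha \beta^a + 1$, as required in item~(2). Finally, the constructibility clause follows because both Lemma~\ref{lem:lop-dualmin} and Lemma~\ref{lem:dual-approxGmin} preserve constructibility through the pipeline, so a constructive \lop-kernel yields a constructive value-approximation algorithm in either case. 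The main (and essentially only) obstacle is ensuring that the exponent and constant bookkeeping—particularly the identification $d = ac$ and $\lambda = \alpha \beta^a$ and the polynomial-time computability of $\ub(s(k))$ needed to run the dual-approximation step—matches the precise hypotheses of Lemma~\ref{lem:dual-approxGmin}.
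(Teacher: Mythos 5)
Your proposal is correct and is essentially identical to the paper's own argument, which likewise obtains the result by pipelining Lemma~\ref{lem:lop-dualmin} with Lemma~\ref{lem:dual-approxGmin} using the substitutions $d = ac$ and $\lambda = \alpha\beta^a$. The bookkeeping you carry out (absorbing the linear term when $ac>1$, and bounding $\ub(s(k)) \le \alpha\beta^a k^{ac}$ when $ac\le 1$) is exactly what the paper leaves implicit.
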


The discussion provided right after \autoref{thm:lop-kernelsG} also applies to the above theorem. The contrapositive of \autoref{thm:lop-kernelsGmin} yields the following corollary.


\begin{corollary}\label{cor:lop-kernelsminG}
  Let  $\Pi$ be a well-behaved minimization problem with a non-decreasing polynomial-time computable upper bound function $\ub(n) = \O(n^a)$ for $a \in \mathds{R}^+$.
  In what follows, the size of the instance, denoted by $n$, and the size of the kernel are defined according to the same size function $|\cdot|$ associated with $\Pi$.
  \begin{enumerate}
  \item For every real number $r \in (0,1)$, if $\Pi$ does not admit a polynomial-time value-approximation algorithm with ratio $\O(n^{r})$, then
    $\Pi$ parameterized by the solution size does not admit a \lop-kernel of size $\Ocal(k^{\frac{1}{a-r}})$.
  \item Suppose that $\ub(n) \le \alpha n^a$ for some $\alpha \in \mathds{R}^+$.
    For every real number $\beta > 1$, if $\Pi$ does not admit a polynomial-time value-approximation algorithm with ratio $\beta$, then
    $\Pi$ parameterized by the solution size does not admit a \lop-kernel of size $\beta' k^{c'}$ for any real numbers $\beta'$, $c'$ such that
    $ac' \le 1$ and $\alpha \beta^{'a}+1 \le \beta$.
  \end{enumerate}
  Moreover, if the non-existence of approximation algorithms only holds for constructive approximation algorithms, then the lower bound
  only holds for constructive \lop-kernels.
\end{corollary}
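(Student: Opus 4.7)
The plan is to prove the corollary as the direct contrapositive of Theorem~\ref{thm:lop-kernelsGmin}. Both items are pure contrapositive arguments, so the proof consists of assuming for contradiction that a \lop-kernel of the forbidden size exists, instantiating the parameters of Theorem~\ref{thm:lop-kernelsGmin}, and checking that the resulting approximation ratio contradicts the hypothesis.

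For item~1, I would assume that $\Pi$ admits a \lop-kernel of size $\O(k^{c})$ with $c = \tfrac{1}{a-r}$ (the hypothesis $r \in (0,1)$ and the implicit requirement $a > r$, needed for the exponent in the statement to be meaningful, guarantee that $c > 0$). Then $ac = \tfrac{a}{a-r} > 1$, so item~1 of Theorem~\ref{thm:lop-kernelsGmin} applies and yields a polynomial-time value-approximation algorithm with ratio
\[
\O\!\left(n^{\frac{ac-1}{c}}\right) \;=\; \O\!\left(n^{a-\frac{1}{c}}\right) \;=\; \O\!\left(n^{a-(a-r)}\right) \;=\; \O(n^{r}),
\]
contradicting the assumption that no such algorithm exists.

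For item~2, the argument is even more direct: assume $\Pi$ admits a \lop-kernel of size $\beta' k^{c'}$ with $ac' \le 1$ and $\alpha \beta'^{a} + 1 \le \beta$. Then the second item of Theorem~\ref{thm:lop-kernelsGmin}, applied with the same constants $\alpha, \beta'$, and $c'$, yields a polynomial-time value-approximation algorithm with ratio $\alpha \beta'^{a} + 1 \le \beta$, again contradicting the hypothesis. The ``moreover'' clause about constructibility is inherited verbatim from Theorem~\ref{thm:lop-kernelsGmin}, since a constructive \lop-kernel produces a constructive approximation algorithm along the same chain of implications.

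There is essentially no obstacle beyond bookkeeping: the only subtlety is making sure that the choice of $c$ in item~1 is a legitimate positive real, which requires $a > r$ (implicitly imposed by the form of the conclusion), and that in item~2 the constants $\beta',c'$ are chosen precisely so the hypotheses of Theorem~\ref{thm:lop-kernelsGmin} item~2 are met. Since the corollary is introduced explicitly as ``the contrapositive'' of the theorem, I expect the proof in the paper to be a one-line reference rather than a detailed derivation.
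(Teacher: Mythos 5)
Your proposal is correct and matches the paper, which gives no written-out proof but simply introduces the corollary as the contrapositive of Theorem~\ref{thm:lop-kernelsGmin}; your instantiation $c=\tfrac{1}{a-r}$ for item~1 and the direct parameter matching for item~2 are exactly the intended bookkeeping.
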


\section{Applications of the framework of \lop-kernels}
\label{sec:applications-lop}


In this section we provide several applications of the framework of \lop-kernels introduced in \autoref{sec:framework-max} and \autoref{sec:framework-min}.

\medskip

Our first application concerns the {\sc Maximum Minimal Vertex Cover} problem, defined in \autoref{sec:prelim}. Boria et al.~\cite{BoriaCP15} proved that {\sc Maximum Minimal Vertex Cover} does not admit a polynomial-time $\O(n^{\frac{1}{2}-\varepsilon})$-approximation algorithm for any $\varepsilon>0$, unless $\P = \NP$. Hence,
by applying \autoref{cor:lop-kernels} with $r=\frac{1}{2} - \varepsilon$ we obtain the following corollary, which matches the best known kernel having $\Ocal(k^2)$ vertices~\cite{FernauHDR}.

\begin{corollary}\label{cor:lop1}
{\sc Maximum Minimal Vertex Cover} parameterized by the solution size does not admit a \lop-kernel with $\Ocal(k^{2 - \varepsilon})$ vertices for any $\varepsilon > 0$, unless $\P = \NP$.
\end{corollary}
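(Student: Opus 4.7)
The plan is to derive \autoref{cor:lop1} as a direct application of the first item of \autoref{cor:lop-kernels}, using the inapproximability result of Boria et al.~\cite{BoriaCP15} as the hypothesis. First, I would verify that \textsc{MMVC} fits the framework: it is a vertex-maximization problem (one seeks a vertex subset $X \subseteq V(G)$ of maximum size satisfying the minimality condition of \autoref{obs:characterization-mvc}), and its decision version $\textsc{MMVC}_{\dec}$ is clearly in \NP, since given a candidate set $X$ one can verify in polynomial time whether $X$ is a minimal vertex cover of $G$. Hence the hypothesis of \autoref{cor:lop-kernels} is satisfied.

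Next, I would invoke the result of Boria et al.~\cite{BoriaCP15} stating that, unless $\P = \NP$, \textsc{MMVC} does not admit a polynomial-time approximation algorithm with ratio $\O(n^{1/2 - \varepsilon'})$ for any $\varepsilon' > 0$. Strictly speaking, the cited inapproximability is stated for a standard (constructive) approximation algorithm, but the same reduction rules out value-approximation algorithms with the same ratio, so the hypothesis of the first item of \autoref{cor:lop-kernels} is met with any $r = 1/2 - \varepsilon' \in (0,1)$.

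The final step is a small calculation to convert the inapproximability exponent into the kernel-size exponent. Applying the first item of \autoref{cor:lop-kernels} with $r = \frac{1}{2} - \varepsilon'$ yields that, unless $\P = \NP$, \textsc{MMVC} parameterized by the solution size does not admit a \lop-kernel with $\Ocal\bigl(k^{1/(1-r)}\bigr) = \Ocal\bigl(k^{1/(1/2 + \varepsilon')}\bigr)$ vertices. Given any $\varepsilon > 0$, choosing $\varepsilon' = \varepsilon/(2(2 - \varepsilon)) > 0$ gives
\[
\frac{1}{\tfrac{1}{2} + \varepsilon'} \;=\; \frac{1}{\tfrac{1}{2} + \tfrac{\varepsilon}{2(2-\varepsilon)}} \;=\; 2 - \varepsilon,
\]
so the nonexistence of a \lop-kernel with $\Ocal(k^{2-\varepsilon})$ vertices follows, as claimed.

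I do not expect any real obstacle: the whole argument is essentially a plug-in of the known inapproximability bound into \autoref{cor:lop-kernels}, plus the routine exponent matching above. The only mildly subtle point to check is the constructive-versus-value-approximation distinction, but since Boria et al.'s reduction rules out value-approximation (it only uses the numerical value of any purported approximate solution), the resulting lower bound applies to \lop-kernels in general, not just constructive ones, which is the stronger statement we want.
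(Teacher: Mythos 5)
Your proposal is correct and follows exactly the paper's argument: the paper likewise obtains \autoref{cor:lop1} by plugging the $\O(n^{1/2-\varepsilon})$-inapproximability of Boria et al.\ into the first item of \autoref{cor:lop-kernels} with $r=\tfrac{1}{2}-\varepsilon$. Your explicit exponent matching via $\varepsilon' = \varepsilon/(2(2-\varepsilon))$ and the remark on value- versus constructive approximation are just more detailed renderings of the same one-line derivation.
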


Our second application is similar to the first one. In the {\sc Maximum Minimal Feedback Vertex Set} problem, given an $n$-vertex graph $G$ and an integer $k$, the objective is to decide if there exists a minimal feedback vertex set $S \subseteq V(G)$ (i.e., a set $S$ such that
$G\setminus S$ is a forest) of size at least $k$.
Dublois et al.~\cite{DubloisHGLM20} recently proved that the problem does not admit a polynomial-time $\O(n^{\frac{2}{3}-\varepsilon})$-approximation algorithm for any $\varepsilon>0$, unless $\P = \NP$. Hence, by applying \autoref{cor:lop-kernels} with $r=\frac{2}{3} - \varepsilon$ we obtain the following corollary, which matches the best known kernel with $\Ocal(k^3)$ vertices also provided by Dublois et al.~\cite{DubloisHGLM20}.

\begin{corollary}\label{cor:lop2}
{\sc Maximum Minimal Feedback Vertex Set} parameterized by the solution size does not admit a \lop-kernel with $\Ocal(k^{3 - \varepsilon})$ vertices for any $\varepsilon > 0$, unless $\P = \NP$.
\end{corollary}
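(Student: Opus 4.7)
The plan is to obtain this corollary as a direct application of Corollary~\ref{cor:lop-kernels} to \textsc{Maximum Minimal Feedback Vertex Set} (MMFVS), exactly in parallel with how Corollary~\ref{cor:lop1} was derived for MMVC. First I would check that MMFVS fits into the framework: it is a vertex-maximization problem (the feasible solutions are vertex subsets, and we wish to maximize the size of a minimal feedback vertex set), and its decision version is in \NP, since given a candidate set $S$ one verifies in polynomial time that $G \setminus S$ is a forest and that, for every $v \in S$, the graph $G \setminus (S \setminus \{v\})$ contains a cycle (equivalently, $v$ has at least two neighbors in some tree of $G \setminus S$, or a neighbor in the same tree plus an edge creating a cycle). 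Hence the hypothesis of Corollary~\ref{cor:lop-kernels} is met.

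Next I would invoke the inapproximability result of Dublois et al.~\cite{DubloisHGLM20}: unless $\P = \NP$, MMFVS admits no polynomial-time $\Ocal(n^{\frac{2}{3} - \varepsilon})$-approximation algorithm for any $\varepsilon > 0$. Applying the first item of Corollary~\ref{cor:lop-kernels} with $r = \tfrac{2}{3} - \varepsilon$ yields that MMFVS parameterized by $k$ has no \lop-kernel with $\Ocal(k^{\frac{1}{1-r}}) = \Ocal(k^{\frac{1}{1/3 + \varepsilon}})$ vertices, unless $\P = \NP$.

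The final step is a purely arithmetic translation: I would argue by contrapositive. Suppose, for the sake of contradiction, that MMFVS admits a \lop-kernel with $\Ocal(k^{3 - \varepsilon'})$ vertices for some $\varepsilon' > 0$. Applying Theorem~\ref{thm:lop-kernels} with $c = 3 - \varepsilon'$ yields a polynomial-time value-approximation algorithm of ratio $\Ocal(n^{\frac{c-1}{c}}) = \Ocal(n^{\frac{2-\varepsilon'}{3-\varepsilon'}})$. A straightforward computation gives
\[
\frac{2-\varepsilon'}{3-\varepsilon'} \;=\; \frac{2}{3} - \frac{\varepsilon'}{3(3-\varepsilon')},
\]
so setting $\varepsilon := \varepsilon'/(3(3-\varepsilon')) > 0$ produces an $\Ocal(n^{\frac{2}{3} - \varepsilon})$-approximation algorithm, contradicting the inapproximability of Dublois et al. This yields the claimed bound. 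I do not expect any real obstacle: the only thing to be careful about is that the contrapositive must handle the quantifier ``for every $\varepsilon>0$'' correctly, which is done by the above explicit choice of $\varepsilon$ in terms of $\varepsilon'$.
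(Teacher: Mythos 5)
Your proposal is correct and follows exactly the paper's argument: the paper likewise obtains \autoref{cor:lop2} by applying \autoref{cor:lop-kernels} with $r=\frac{2}{3}-\varepsilon$ to the inapproximability bound of Dublois et al.~\cite{DubloisHGLM20}. Your additional verification of \NP-membership of the decision version and the explicit arithmetic translating the exponents are just careful spellings-out of steps the paper leaves implicit.
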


Our third application concerns a vertex-minimization problem. In the {\sc Tree Deletion Set} problem, given a graph $G$ and an integer $k$, the objective is to decide whether at most $k$ vertices can be deleted from an $n$-vertex graph $G$ in order to obtain a tree.
It is known that this problem does not admit a polynomial-time $\O(n^{1-\varepsilon})$-approximation for any $\varepsilon > 0$ unless $\P \neq\NP$~\cite{Yannakakis79}.
\autoref{cor:lop-kernelsmin} implies the following.

\begin{corollary}\label{cor:lop3}
{\sc Tree Deletion Set} parameterized by the solution size does not admit a polynomial \lop-kernel, unless $\P = \NP$.
\end{corollary}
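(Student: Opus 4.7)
The plan is to derive the claim as a direct consequence of \autoref{cor:lop-kernelsmin} applied to the inapproximability result of Yannakakis~\cite{Yannakakis79}. First I would verify that \textsc{Tree Deletion Set} satisfies the hypotheses of \autoref{cor:lop-kernelsmin}: it is a vertex-minimization problem, and its decision version lies in \NP (a certificate is the deletion set $S$, for which $|S| \le k$ and the acyclicity together with the connectedness of $G \setminus S$ can both be checked in polynomial time). Hence the corollary applies verbatim.

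Next I would argue by contradiction: suppose that \textsc{Tree Deletion Set} admits a \lop-kernel of size $\O(k^c)$ for some constant $c \ge 1$. Set $\varepsilon = \frac{1}{c+1}$ and $r = 1 - \varepsilon$, so that $r \in (0,1)$ and $\frac{1}{1-r} = \frac{1}{\varepsilon} = c+1 > c$. By Yannakakis~\cite{Yannakakis79}, \textsc{Tree Deletion Set} does not admit a polynomial-time $\O(n^{1-\varepsilon}) = \O(n^{r})$-approximation algorithm unless $\P = \NP$. Then the first item of \autoref{cor:lop-kernelsmin} yields that \textsc{Tree Deletion Set} parameterized by the solution size does not admit a \lop-kernel with $\O(k^{1/(1-r)}) = \O(k^{c+1})$ vertices, which contradicts the supposed kernel of size $\O(k^c)$. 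Since $c$ was arbitrary, no polynomial \lop-kernel can exist.

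The only mild subtlety is that \autoref{cor:lop-kernelsmin} asks for the absence of a \emph{value}-approximation algorithm, which is a weaker object than a constructive approximation algorithm, so a priori a stronger hypothesis than what Yannakakis's theorem directly provides. However, his hardness reduction is a gap-producing construction at the level of optimum values (the \yes- and \no-instances are distinguished by the value of the minimum deletion set itself), so it rules out the very existence of an integer $k$ within the relevant range, not merely the existence of a small solution. Thus value-approximation is ruled out as well, and the single-shot application of the framework goes through without further work; the entire argument is a straightforward instantiation of \autoref{cor:lop-kernelsmin}, with no substantial obstacle to overcome.
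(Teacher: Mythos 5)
Your proposal is correct and follows exactly the paper's route: the paper derives \autoref{cor:lop3} in one line by feeding Yannakakis's $\O(n^{1-\varepsilon})$-inapproximability into the first item of \autoref{cor:lop-kernelsmin}, and your instantiation with $r=1-\varepsilon$ and the contradiction argument for an arbitrary exponent $c$ is just a careful spelling-out of that same step. Your remark on value- versus constructive approximation is a sensible sanity check that the paper leaves implicit.
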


The interesting fact is that {\sc Tree Deletion Set} admits a kernel with $\O(k^4)$ vertices~\cite{GiannopoulouLSS16}.
This kernel is the only non-artificial example of \emph{non}-\lop-kernel that we are aware of so far. Thus, the algebraic reduction rule presented by Giannopoulou et al.~\cite{GiannopoulouLSS16}, which is based on identifying a subset of linear equations of appropriate size that captures all  solutions of size at most $k$, cannot be (even transformed to) a \lop-rule.

\medskip

Our last application deals with the {\sc Maximum Independent Set} problem restricted to $K_t$-free graphs, for an integer $t \geq 3$. Ramsey's theorem~\cite{Ramsey} implies that, given a $K_t$-free graph on $n$ vertices, it is always possible to find in polynomial time an independent set of size at least $n^{\frac{1}{t-1}}$.
This directly implies a polynomial-time $n^{\frac{t-2}{t-1}}$-approximation algorithm for {\sc Maximum Independent Set} on $K_t$-free graphs, and a constructive \lop-kernel of size $k^{t-1}$ (indeed, if the input graph has size al least $k^{t-1}$, we can safely declare it a \yes-instance).
To the best of our knowledge, improving this trivial approximation factor is still open, and the only known inapproximability result is the bound $\Ocal(n^{\frac{1}{4}-\varepsilon})$ on triangle-free graphs recently proved by Bonnet et al.~\cite{BonnetTTW20}, which relies on the hypothesis that $\NP \nsubseteq \BPP$. In the same paper~\cite{BonnetTTW20}, the authors state the following conjecture, called the ``Improved Approximation Conjecture'': for every fixed graph $H$, there exists a constant $\varepsilon > 0$ such that
{\sc Maximum Independent Set} admits a (randomized) polynomial-time $n^{1-\varepsilon}$-approximation algorithm on $H$-free $n$-vertex graphs. We state the following conjecture.

\begin{conjecture}\label{conj:lop}
For every fixed graph $H$, the {\sc Maximum Independent Set} problem restricted to $H$-free graphs admits a polynomial \lop-kernel.
\end{conjecture}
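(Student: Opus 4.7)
The plan is to attack \autoref{conj:lop} conditionally on the Erdős-Hajnal conjecture, proceeding in three stages.

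First, I would assume an algorithmic form of Erdős-Hajnal for $H$: there exists $\delta(H) > 0$ and a polynomial-time algorithm that, given an $H$-free graph on $n$ vertices, outputs a clique or an independent set of size at least $n^{\delta(H)}$. This is unconditionally known for several graphs $H$ (such as $P_4$, paw, bull, and $K_t$), and is a central open problem in structural graph theory for general $H$. Given an instance $(G,k)$ of \textsc{Maximum Independent Set} with $G$ an $H$-free graph on $n$ vertices, I would fix a threshold $n_0 = k^{1/\delta(H)}$ and branch. If $n \leq n_0$, then $(G,k)$ is already a kernel of polynomial size. Otherwise, the algorithmic Erdős-Hajnal produces a clique $Q$ or an independent set $I$ of size at least $n^{\delta(H)} \geq k$. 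In the latter case the algorithm declares a \yes-instance. The crucial and hardest situation is when we obtain a clique $Q$ with $|Q| \geq k+1$: here the approach for $K_t$-free graphs of \autoref{cor:lop4} breaks down, because an $H$-free graph with $H$ not a complete graph may contain arbitrarily large cliques, so a large clique alone does not yield a \yes-instance.

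The main obstacle is therefore the design of a \lop-rule that handles a large clique. A natural candidate is a \emph{dominance rule}: whenever there exist $u,v \in Q$ with $N(u) \setminus Q \subseteq N(v) \setminus Q$, the vertex $v$ may be safely deleted, since any independent set containing $v$ can be replaced by one of the same size containing $u$, so $\alpha(G \setminus \{v\}) = \alpha(G)$, yielding a \lop-rule that actually preserves all optimal values. After exhaustive application of this rule, the external neighborhoods of the vertices of $Q$ form an antichain under inclusion, and the core technical question becomes that of bounding the size of such an antichain polynomially in $k$ using only the $H$-free structure. In general this bound is exponential (Dilworth), so $H$-freeness must be exploited in a nontrivial way, for instance by recursively applying Erdős-Hajnal to auxiliary graphs derived from the bipartite incidence structure between $Q$ and $V(G) \setminus Q$, or to the comparability graph of the inclusion order on the external neighborhoods of $Q$.

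As an intermediate target I would first try to extend \autoref{thm:kernel-bull-free}, \autoref{thm:kernel-Kt-free}, and \autoref{thm:kernel_paw} from their current subquadratic kernels to fully polynomial \lop-kernels whose degree matches the Erdős-Hajnal exponent, and only then attempt the general case. It is also worth keeping in mind that, since a polynomial \lop-kernel implies a polynomial-time $n^{1-\varepsilon}$-approximation via \autoref{thm:lop-kernels}, \autoref{conj:lop} is formally at least as strong as the Improved Approximation Conjecture of Bonnet et al.~\cite{BonnetTTW20}, so any unconditional proof would in particular resolve that question.
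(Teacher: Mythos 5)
The statement you were given is \autoref{conj:lop}, which the paper states as an \emph{open conjecture}: there is no proof of it in the paper, only the observation (the third item of \autoref{cor:lop4}) that it implies the Improved Approximation Conjecture of Bonnet et al.~\cite{BonnetTTW20} --- an observation your last sentence restates. So there is no proof of the paper's to compare yours against, and your proposal is not a proof either; it is a research plan whose central step is missing. The parts that do work are the parts that are already trivial or already in the paper: the threshold step (output $(G,k)$ if $n\le k^{1/\delta}$, otherwise run constructive Erd\H{o}s--Hajnal and answer ``\yes'' if it returns an independent set of size $\ge k$) is exactly the argument that gives the $k^{t-1}$ \lop-kernel on $K_t$-free graphs in \autoref{sec:applications-lop}, and your dominance rule is indeed a valid \lop-rule falling into Case~(\ref{case1}) after \autoref{def:lop}, since for $u,v$ in a common clique $Q$ with $N(u)\setminus Q\subseteq N(v)\setminus Q$ one has $\alpha(G\setminus v)=\alpha(G)$.

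The genuine gap is the step you yourself label ``the core technical question'': after the dominance rule is exhausted, you must bound polynomially in $k$ the size of the surviving clique (whose external neighborhoods form an antichain), and then iterate this over all large cliques until the whole graph has $k^{\Ocal(1)}$ vertices. Nothing in the proposal does either. You offer only the suggestion of ``recursively applying Erd\H{o}s--Hajnal to auxiliary graphs,'' with no candidate auxiliary graph, no invariant that decreases, and no reason to believe $H$-freeness forces a polynomial antichain bound --- this is plausibly as hard as the conjecture itself, given that any polynomial \lop-kernel already yields an $n^{1-\varepsilon}$-approximation via \autoref{thm:lop-kernels} and hence resolves the Improved Approximation Conjecture. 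Two further problems: (i) the entire plan is conditional on the \emph{algorithmic} Erd\H{o}s--Hajnal property for $H$, itself open for general $H$, so even a complete execution would prove only a fragment of the conjecture; and (ii) your proposed ``intermediate target'' conflates two different problems. \autoref{thm:kernel-bull-free}, \autoref{thm:kernel-Kt-free}, and \autoref{thm:kernel_paw} are kernels for \textsc{MMVC}, not for \textsc{Maximum Independent Set}, and their proofs rest entirely on the degree bound $\Delta(G)\le k-1$ supplied by \autoref{lem:extension-nbh-is}. No analogous high-degree rule exists for \textsc{Maximum Independent Set} parameterized by the solution size, so those arguments do not transfer, and ``extending'' them is not a stepping stone toward \autoref{conj:lop}.
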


\autoref{cor:lop-kernels}, combined with the above discussion and the inapproximability result of Bonnet et al.~\cite{BonnetTTW20} on triangle-free graphs, imply the following.

\begin{corollary}\label{cor:lop4}
  The following claims hold:
  \begin{itemize}
  \item {\sc Maximum Independent Set} parameterized by the solution size does not admit a \lop-kernel with $\Ocal(k^{4 - \varepsilon})$ vertices on triangle-free graphs for any $\varepsilon > 0$, unless $\NP \subseteq \BPP$.
  \item For every real number $\varepsilon > 0$ and every integer $t \geq 3$, a \lop-kernel with $\Ocal(k^{t -1- \varepsilon})$ vertices for {\sc Maximum Independent Set} on $K_t$-free graphs would improve the best known approximation ratio $n^{\frac{t-2}{t-1}}$ that follows from Ramsey's theorem~\cite{Ramsey}.
  \item \autoref{conj:lop} implies the Improved Approximation Conjecture of Bonnet et al.~\cite{BonnetTTW20}.
  \end{itemize}
\end{corollary}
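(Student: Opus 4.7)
The proof of each of the three bullets is essentially a direct application of the framework developed in \autoref{sec:framework-max}, combined with the results cited in the statement. I sketch the plan for each part in turn.

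For the first bullet, the plan is to apply \autoref{cor:lop-kernels} to $\Pi = $ \textsc{Maximum Independent Set} restricted to triangle-free graphs. This is a vertex-maximization problem whose decision version is in \NP, so the hypothesis of the corollary is satisfied. Plugging in the inapproximability of Bonnet et al.~\cite{BonnetTTW20}, which rules out a polynomial-time value-approximation with ratio $\Ocal(n^{1/4-\varepsilon})$ for any $\varepsilon>0$ unless $\NP \subseteq \BPP$, immediately yields the claimed \lop-kernel lower bound. The complexity hypothesis degrades from $\P \neq \NP$ to $\NP \nsubseteq \BPP$ precisely because the underlying inapproximability is randomized, and by using the ``Moreover'' clause of \autoref{cor:lop-kernels} one ensures the lower bound applies to the intended class of \lop-kernels.

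For the second bullet, the plan is to argue by contradiction. Suppose, for some $\varepsilon > 0$ and integer $t \geq 3$, that \textsc{Maximum Independent Set} on $K_t$-free graphs admits a \lop-kernel with $\Ocal(k^{t-1-\varepsilon})$ vertices. Applying \autoref{thm:lop-kernels} with $c = t-1-\varepsilon$, this would yield a polynomial-time value-approximation algorithm of ratio $\Ocal(n^{(c-1)/c}) = \Ocal(n^{(t-2-\varepsilon)/(t-1-\varepsilon)})$. It then remains to verify the elementary inequality
\[
\frac{t-2-\varepsilon}{t-1-\varepsilon}\ <\ \frac{t-2}{t-1},
\]
which, upon cross-multiplication, reduces to $\varepsilon(t-2) < \varepsilon(t-1)$ and thus holds for all $\varepsilon > 0$. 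Hence such a kernel would yield a strict improvement over the Ramsey-based ratio $n^{(t-2)/(t-1)}$, as desired.

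For the third bullet, suppose \autoref{conj:lop} holds and fix any graph $H$. By assumption, \textsc{Maximum Independent Set} on $H$-free graphs admits a polynomial \lop-kernel of size $\Ocal(k^{c_H})$ for some constant $c_H$ depending only on $H$. \autoref{thm:lop-kernels} then produces a polynomial-time approximation algorithm of ratio $\Ocal(n^{(c_H - 1)/c_H}) = \Ocal(n^{1 - 1/c_H})$, which is precisely an $n^{1-\varepsilon_H}$-approximation for $\varepsilon_H := 1/c_H > 0$, as asserted by the Improved Approximation Conjecture of Bonnet et al.~\cite{BonnetTTW20}.

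All three parts are mechanical once \autoref{thm:lop-kernels} and \autoref{cor:lop-kernels} are in hand, so there is no significant technical obstacle. The most delicate point is the algebraic manipulation in the second bullet, which is needed to confirm that the $\varepsilon$ in the kernel-size exponent propagates to a \emph{strictly} better approximation ratio than Ramsey's; the remaining care is bookkeeping around the constructive versus non-constructive distinction and the underlying complexity hypothesis in the first bullet.
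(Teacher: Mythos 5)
Your reading of the paper's intent is correct: the paper offers no more proof than ``combine \autoref{cor:lop-kernels} with the preceding discussion,'' and for the second and third bullets your explicit verifications follow exactly that route and are sound. In the second bullet, the inequality $\frac{t-2-\varepsilon}{t-1-\varepsilon} < \frac{t-2}{t-1}$ is indeed the one point requiring a check, and your cross-multiplication is fine (modulo the cosmetic caveat that \autoref{thm:lop-kernels} needs $c=t-1-\varepsilon>1$; for larger $\varepsilon$ the kernel is linear and the second item of that theorem gives a constant ratio, which is an even stronger improvement). In the third bullet, a polynomial \lop-kernel with $\Ocal(k^{c_H})$ vertices yields a value-approximation with ratio $\Ocal(n^{1-1/c_H})$, hence an $n^{1-\varepsilon_H}$-approximation for any $0<\varepsilon_H<1/c_H$, and a deterministic algorithm is in particular a randomized one, so the Improved Approximation Conjecture follows.

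The first bullet, however, does not go through as you claim, and it is exactly the step where you skipped the arithmetic. The hardness of Bonnet et al.\ is for ratio $\Ocal(n^{1/4-\varepsilon})$, so \autoref{cor:lop-kernels} applies with $r=\frac{1}{4}-\varepsilon$ and rules out \lop-kernels with $\Ocal(k^{\frac{1}{1-r}})=\Ocal(k^{\frac{4}{3+4\varepsilon}})$ vertices, i.e., with $\Ocal(k^{4/3-\varepsilon'})$ vertices --- not $\Ocal(k^{4-\varepsilon})$. Obtaining the exponent $4$ would require an inapproximability ratio of $n^{3/4-\varepsilon}$, which is not available (and cannot hold, since Ramsey already gives an $n^{1/2}$-approximation on triangle-free graphs). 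In fact the target bound itself cannot be right as stated: taking $\varepsilon=2$, it would exclude \lop-kernels with $\Ocal(k^2)$ vertices, contradicting the constructive \lop-kernel of size $k^{t-1}=k^2$ on triangle-free graphs described just before \autoref{conj:lop}. So the exponent in the first bullet should be $4/3-\varepsilon$ rather than $4-\varepsilon$; with that correction, your one-line application of \autoref{cor:lop-kernels} is the intended argument. A careful write-up should have caught this, since the ``mechanical'' substitution is the entire content of that bullet.
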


\section{An attempt to obtain a linear kernel for \textsc{MMVC}}
 \label{ap:Fernau}

 In this section we briefly explain the flaw in the linear kernel for \textsc{Maximum Minimal Vertex Cover} (\textsc{MMVC}) claimed by Fernau~\cite[Corollary 4.25]{FernauHDR}, and that is based on joint unpublished work with Dehne,
Fellows, Prieto, and Rosamond. The kernelization algorithm is a small modification of a linear kernel for the \textsc{Nonblocker Set} problem presented by Ore~\cite{Ore96}. A set of vertices $S$ of a graph $G$ is a \emph{nonblocker} if its complement is a dominating set of $G$, that is, for every $u \in S$ there exists $v \notin S$ with $\{u,v\} \in E(G)$. In the \textsc{Nonblocker Set} problem, we are given a graph $G$ and an integer parameter $k$, and the goal is to decide whether $G$ contains a nonblocker of size at least $k$. Suppose for simplicity that $G$ is connected. The idea is to consider an arbitrary spanning tree $T$ of $G$, root it arbitrarily at a vertex $r$, and partition $V(G)=V_0 \uplus V_1$ such that the vertices in $V_0$ (resp. $V_1$) are within even (resp. odd) distance from $r$ in $T$. By construction, each of $V_0$ and $V_1$ is a nonblocker in $G$, so if one of them has size at least $k$, we can answer ``\yes'', and otherwise $|V(G)| \leq 2k$ and we are done.

Back to {\sc MMVC}, it is observed in~\cite[Reduction rule 24]{FernauHDR} that a simple reduction rule allows to assume that no connected component of $G$ is a clique (in particular, an isolated vertex). Assume again for simplicity that $G$ is connected. It is then claimed in~\cite{FernauHDR} that, using the same algorithm as for  \textsc{Nonblocker Set}, the largest of $V_0$ and $V_1$, say $V_0$, can be always completed into a minimal vertex cover of $G$, which would immediately yield a kernel of size at most $2k$ for \textsc{MMVC}. Unfortunately, this claim is not true: when adding new vertices to $V_0$ in order to make it a vertex cover of $G$, we may lose the minimality property, and some vertices may need to be removed. For instance, let $G$ be the graph obtained from a triangle on vertices $u,v,w$ by adding $p \geq 2$ pendant vertices to each of $u,v$, and $w$. Let $T$ be the spanning tree obtained from $G$ by removing the edge $\{v,w\}$, and root $T$ at vertex $u$. Then $|V_0| = 1 + 2p$ and $|V_1| = 2 + p$, so $|V_0| > |V_1|$, and note that the edge $\{v,w\}$ is the only edge of $G$ not covered by $V_0$. But adding either of $v$ or $w$ to $V_0$, say $v$, results in a non-minimal vertex cover of $G$, and therefore the $p$ pendant vertices adjacent to $v$ have to be removed from $V_0$, which yields a set of size $2 + p < \frac{|V(G)|}{2} = \frac{3+3p}{2}$, where we have used that $p \geq 2$. In fact, deciding whether a set $S \subseteq V(G)$ can the extended to a minimal vertex cover of $G$ is an \NP-complete problem~\cite{CaselFGMS19}.

\section{Subquadratic kernels for \textsc{MMVC} on particular graph classes}
\label{sec:subquadratic-kernels-MMVC}

In this section we present subquadratic kernels for \textsc{Maximum Minimal Vertex Cover} restricted to particular graph classes when the parameter is the solution size $k$. Namely, in \autoref{sec:kernels-EH} we provide kernels using the Erd\H{o}s-Hajnal property, and in \autoref{sec:MMVC-other-classes} we provide further observations about other graph classes.

\subsection{Kernels using the Erd\H{o}s-Hajnal property}
\label{sec:kernels-EH}


For a constant $\delta>0$, a graph $H$ is said to satisfy the \emph{Erd\H{o}s-Hajnal property with constant~$\delta$} if every $H$-free graph $G$ on $n$ vertices contains either a clique or an independent set of size $n^{\delta}$. The (still open) Erd\H{o}s-Hajnal conjecture~\cite{ErdosH89} states that every graph $H$ satisfies the Erd\H{o}s-Hajnal property. As reported by Chudnovsky~\cite{Chudnovsky14}, the Erd\H{o}s-Hajnal conjecture has been verified for only a small number of graphs,
namely all graphs on at most four vertices, the \emph{bull} (i.e., the graph obtained by adding a pendant vertex to two different vertices of a triangle), the complete graphs, and every graph that can be constructed from them using the so-called \emph{substitution operation}~\cite{AlonPS01}, which we define later.

Since our goal is to use the Erd\H{o}s-Hajnal property in order to obtain kernels for \textsc{Maximum Minimal Vertex Cover}, we need an algorithmic version of it.  As defined by Bonnet et al.~\cite{BonnetTTW20}, for a constant $\delta>0$, a graph $H$ is said to satisfy the  \emph{constructive  Erd\H{o}s-Hajnal property with constant~$\delta$} if there exists an algorithm that takes as input an $H$-free graph $G$ on $n$ vertices, and outputs in polynomial-time a clique or an independent set of $G$ of size at least $n^{\delta}$. Fortunately for our purposes, all the graphs $H$ shown to satisfy
the Erd\H{o}s-Hajnal property so far, also satisfy its constructive version~\cite{BonnetTTW20}.

In the following simple lemma we show that, if $H$ is a graph satisfying the constructive Erd\H{o}s-Hajnal property, then the vertex set of an $H$-free graph can be partitioned in polynomial time into ``few'' cliques or independent sets. This partition will then be used to obtain subquadratic kernels on $H$-free graphs for several graphs $H$.

\begin{lemma}\label{lem:EH-partition}
Let $H$ be a graph satisfying the constructive Erd\H{o}s-Hajnal property with constant~$\delta$. The vertex set of any $H$-free graph $G$ on $n$ vertices can be partitioned in polynomial time into a collection of cliques ${\cal C}$ and a collection  of independent sets  ${\cal I}$ such that
 $|{\cal C}|+|{\cal I}| \leq \left(\frac{1}{2^{(1-\delta)}-1} \right) \cdot n^{1-\delta}$.
\end{lemma}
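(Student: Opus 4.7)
My plan is to build the partition greedily by iteratively applying the constructive Erd\H{o}s-Hajnal property. Starting from $G_0 := G$, at step $i$ I feed the current graph $G_i$ (which is $H$-free by heredity) into the algorithm guaranteed by the constructive Erd\H{o}s-Hajnal property to obtain, in polynomial time, a clique or an independent set $Y_i$ of $G_i$ of size at least $|V(G_i)|^\delta$. I add $Y_i$ to $\mathcal{C}$ or $\mathcal{I}$ accordingly, set $G_{i+1} := G_i \setminus Y_i$, and repeat until $V(G_{i+1}) = \emptyset$. Since $H$-freeness is a hereditary property, $G_{i+1}$ is $H$-free, so the procedure can always be invoked. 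The total running time is polynomial because each iteration runs in polynomial time and removes at least one vertex.

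The core of the proof is then the bound on $|\mathcal{C}| + |\mathcal{I}|$. Let $f(n)$ denote an upper bound on the number of parts the algorithm produces on an $n$-vertex $H$-free graph. I group the iterations into \emph{phases}, where phase $i$ (for $i \geq 0$) consists of all iterations during which $|V(G_i)|$ lies in the range $(n/2^{i+1},\, n/2^{i}]$. Within phase $i$, every extracted clique or independent set has size at least $(n/2^{i+1})^\delta$, and the total number of vertices removed in phase $i$ is at most $n/2^{i+1}$. Hence, the number of iterations within phase $i$ is at most
\[
\frac{n/2^{i+1}}{(n/2^{i+1})^{\delta}}\ =\ \frac{n^{1-\delta}}{2^{(i+1)(1-\delta)}}.
\]

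Summing over all phases and recognising the resulting geometric series yields
\[
f(n)\ \leq\ \sum_{i=0}^{\infty} \frac{n^{1-\delta}}{2^{(i+1)(1-\delta)}}\ =\ \frac{n^{1-\delta}}{2^{1-\delta}}\cdot\frac{1}{1-2^{-(1-\delta)}}\ =\ \frac{n^{1-\delta}}{2^{1-\delta}-1},
\]
which matches the bound claimed in the statement. (Note that $2^{1-\delta}-1>0$ since we may assume $\delta \in (0,1)$; otherwise the whole graph is trivially a single clique or independent set.)

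The algorithmic part is essentially immediate from the hypothesis and the heredity of $H$-freeness, so I expect no obstacle there. The only delicate point is the clean accounting: a naive recurrence of the form $f(n) \leq 1 + f(n - n^{\delta})$ does not obviously telescope to the stated constant $\frac{1}{2^{1-\delta}-1}$, whereas grouping the iterations into geometric phases as above makes the bound transparent and yields precisely the advertised constant.
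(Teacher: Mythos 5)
Your proposal is correct and follows essentially the same route as the paper: the same greedy extraction of cliques/independent sets via the constructive Erd\H{o}s--Hajnal property, the same dyadic grouping of iterations (your ``phases'' are the paper's ``steps'', indexed from $0$ instead of $1$), and the same geometric-series summation yielding the constant $\frac{1}{2^{1-\delta}-1}$. The only cosmetic difference is that the paper explicitly excludes the last iteration of each step when bounding the vertices removed, whereas you state the removal bound for the whole phase; the resulting count of iterations per phase is the same.
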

\begin{proof}
Let $G$ be an $H$-free graph on $n$ vertices. We initialize $X_0 = V(G), {\cal C}={\cal I}= \emptyset$, and we run the following procedure as far as $|X_0| \geq 1$:

\smallskip

\begin{adjustwidth}{1cm}{0.75cm}
 Find in polynomial time a clique or an independent set $Y$ in $G[X_0]$ with $|Y| \geq |X_0|^{\delta}$. Note that this is possible since $G[X_0]$ is an $H$-free graph for any $X_0 \subseteq V(G)$.  Add $Y$ to ${\cal C}$ or to ${\cal I}$ depending on whether $Y$ is a clique or an independent set, respectively (if $|Y|=1$, choose ${\cal C}$ or ${\cal I}$ arbitrarily).
      Update $X_0 \leftarrow X_0 \setminus Y$.
\end{adjustwidth}
\smallskip

\noindent Clearly,  the above algorithm terminates in  polynomial time.
It remains to bound $|{\cal C}|+|{\cal I}|$, which is equal to the number of iterations of the algorithm.
To this end, for a positive integer $i$, we say that an iteration belongs to \emph{step~$i$} of the algorithm if the current set $X_0$ at the start of the iteration satisfies $ \frac{n}{2^{i}} <  |X_0 | \leq \frac{n}{2^{i-1}}$.
We denote by $t_i$ the number of iterations of the algorithm within step~$i$.
By definition, $|{\cal C}|+|{\cal I}| = \sum_{i=1}^{\infty}t_i$.
Let $Y$ be a clique or an independent set found by the algorithm within step~$i$.
Since the current set $X_0$ satisfies $|X_0 | > \frac{n}{2^{i}}$, we have that $|Y| > \left(\frac{n}{2^{i}}\right)^{\delta}$. And since the sum of the sizes of the sets found before the last iteration of step $i$ is at most $\frac{n}{2^{i}}$, it follows that $t_i \leq \left(\frac{n}{2^{i}}\right)^{1-\delta}$.
Note that, in particular, $t_i = 0$ for $i > \lceil\log n \rceil$. Therefore, we conclude that
\begin{eqnarray*}
|{\cal C}|+|{\cal I}|   &=&  \sum_{i=1}^{\infty}t_i  \ \leq \ \sum_{i=1}^{\infty}\left(\frac{n}{2^{i}}\right)^{1-\delta} \ = \  n^{1-\delta} \cdot \sum_{i=1}^{\infty}   \left(\frac{1}{2^{1-\delta}}\right)^{i}
 \ = \  n^{1-\delta} \cdot \left(\frac{1}{2^{(1-\delta)}-1} \right),
\end{eqnarray*}
 and the lemma follows.
\end{proof}

We are now ready to present the subquadratic kernel on bull-free graphs. Note that, since bipartite graphs are bull-free,  \textsc{MMVC}  restricted to bull-free graphs is \NP-hard by~\cite{BoliacL03} (or by \autoref{thm:no-poly-kernel}). In the kernels presented in this section, since we can easily obtain explicit constants, we decided not to use the big-O notation.

\begin{theorem}\label{thm:kernel-bull-free}
The {\sc Maximum Minimal Vertex Cover} problem parameterized by $k$ restricted to bull-free graphs admits a kernel with at most $c  (k-1)^{7/4} + k-1$ vertices, where $c= \frac{2}{2^{\frac{3}{4}}-1} < 3$.
\end{theorem}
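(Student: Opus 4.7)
My plan is to design a kernelization algorithm consisting of the high-degree rule, a greedy minimal vertex cover computation, and an analysis based on the constructive Erd\H{o}s--Hajnal partition from \autoref{lem:EH-partition}, and then bound the reduced instance by inspecting the partition class by class. First, I remove isolated vertices (they cannot appear in any minimal vertex cover, so this is safe). By \autoref{lem:extension-nbh-is}, if $\Delta(G)\ge k$ we answer \yes, so I may assume $\Delta(G)\le k-1$. I then greedily compute a minimal vertex cover $X$; if $|X|\ge k$ I answer \yes, so I may assume $|X|\le k-1$, and set $S:=V(G)\setminus X$, which is an independent set each of whose vertices has a neighbour in $X$.

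Second, since the bull satisfies the constructive Erd\H{o}s--Hajnal property with constant $\delta=1/4$ (a known result explicitly invoked in the same spirit as \autoref{lem:EH-partition}) and $G[X]$ is bull-free, I apply \autoref{lem:EH-partition} to $G[X]$ to obtain in polynomial time a partition $X=Y_1\uplus\cdots\uplus Y_r$ where each $Y_i$ is a clique or an independent set of $G[X]$, and
\[
r\ \le\ \frac{1}{2^{3/4}-1}\,(k-1)^{3/4}.
\]
Since every vertex $s\in S$ has a neighbour in $X$, it belongs to $N_S(Y_i)$ for some~$i$, so $|S|\le\sum_{i=1}^{r}|N_S(Y_i)|$. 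Combined with $|X|\le k-1$, it therefore suffices to show $|N_S(Y_i)|\le 2(k-1)$ for each~$i$ in order to reach the announced bound $c(k-1)^{7/4}+(k-1)$ with $c=\tfrac{2}{2^{3/4}-1}$.

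Third, I bound $|N_S(Y_i)|$ according to the type of $Y_i$. If $Y_i$ is an independent set, I apply \autoref{lem:extension-nbh-is} directly to $Y_i$: there is a minimal vertex cover of $G$ containing $N(Y_i)\supseteq N_S(Y_i)$, so if $|N_S(Y_i)|\ge k$ I answer \yes, and otherwise $|N_S(Y_i)|\le k-1\le 2(k-1)$. If $Y_i$ is a clique with $|Y_i|\le 2$, the degree bound $\Delta(G)\le k-1$ gives $|N_S(Y_i)|\le|Y_i|(k-1)\le 2(k-1)$.

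The crux of the proof is the clique case with $|Y_i|\ge 3$, which is where bull-freeness is used non-trivially. For $s\in N_S(Y_i)$, let $A(s):=N(s)\cap Y_i$. Since $S$ is independent, for any two distinct $s_1,s_2\in N_S(Y_i)$ the edge $\{s_1,s_2\}$ is absent, so if there exist $a\in A(s_1)\setminus A(s_2)$, $b\in A(s_2)\setminus A(s_1)$ and $c\in Y_i\setminus(A(s_1)\cup A(s_2))$, the set $\{a,b,c,s_1,s_2\}$ induces a bull in $G$, a contradiction. Hence any two sets $A(s_1),A(s_2)$ are nested or jointly cover $Y_i$. The plan is to leverage this dichotomy to find two vertices $y_1,y_2\in Y_i$ whose union hits every $A(s)$ (equivalently, such that no $Y_i\setminus A(s)$ contains both $y_1$ and $y_2$), which would give
\[
|N_S(Y_i)|\ \le\ |N_S(y_1)|+|N_S(y_2)|\ \le\ 2(k-1).
\]
The main obstacle is establishing this ``two-hitting'' property in full generality; I would first settle $|Y_i|=3$ by a short case analysis on the admissible families of subsets of a 3-element set (singletons $\{b\}$ and $\{c\}$ cannot coexist, forcing a common element $a$ that together with any further vertex hits all $A(s)$), and then extend to arbitrary $|Y_i|\ge 3$ by analysing the maximal elements of $\{A(s)\}$, which by the dichotomy either reduce to a single maximal set containing a vertex forced to lie in every~$A(s)$, or form a family whose complements in $Y_i$ are pairwise disjoint. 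Once this structural lemma is in place, summing $|N_S(Y_i)|\le 2(k-1)$ over the at most $\frac{1}{2^{3/4}-1}(k-1)^{3/4}$ parts gives $|S|\le c(k-1)^{7/4}$, and adding $|X|\le k-1$ yields the claimed kernel size.
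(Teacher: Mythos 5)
Your kernelization algorithm and most of the analysis coincide with the paper's proof: the high-degree rule, the greedy minimal vertex cover $X$ with $|X|\le k-1$, the Erd\H{o}s--Hajnal partition of $X$ with $\delta=\tfrac14$, the bound $k-1$ for independent parts via \autoref{lem:extension-nbh-is}, the target of $2(k-1)$ per clique part, and the derivation from bull-freeness of the dichotomy ``$A(s_1),A(s_2)$ are nested or jointly cover $Y_i$'' are all exactly what the paper does. The gap is in the step you yourself flag as the main obstacle, namely deducing the two-hitting property from that dichotomy, and the sketch you give does not close. In the branch with a single maximal trace, no common vertex is forced when that maximal trace is $Y_i$ itself (some $s$ complete to $Y_i$): the family $\bigl\{\{1,2,3,4\},\{1,2\},\{3,4\}\bigr\}$ on $Y_i=\{1,2,3,4\}$ satisfies the pairwise dichotomy and has empty intersection. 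In the branch with several maximal traces $M_1,\dots,M_q$ whose complements are pairwise disjoint, choosing $y_1\in Y_i\setminus M_1$ and $y_2\in Y_i\setminus M_2$ hits all maximal traces but not necessarily the non-maximal ones: a trace $A\subsetneq M_1$ with $A\subseteq M_2$ avoids both $y_1$ and $y_2$. So the case analysis as described does not yield the two hitting vertices.

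The paper closes this step with a greedy two-stage partition that you can adopt essentially verbatim. Let $S_C^{1}\subseteq N_S(C)$ be inclusion-wise maximal such that for any two (not necessarily distinct) $x,y\in S_C^{1}$ one has $N_C(x)\cup N_C(y)\neq C$, and let $S_C^{2}=N_S(C)\setminus S_C^{1}$. Within $S_C^{1}$ the ``jointly cover'' alternative of your dichotomy is excluded by construction, so the traces $\{N_C(x):x\in S_C^{1}\}$ form a chain under inclusion; hence (unless $S_C^{1}=\emptyset$) they share a common vertex $u$, giving $|S_C^{1}|\le\deg(u)\le k-1$, and the largest of them misses some $z\in C\setminus\bigcup_{y\in S_C^{1}}N_C(y)$. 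Any $x\in S_C^{2}$ could not be added to $S_C^{1}$, so $N_C(x)\cup N_C(y)=C$ for some $y\in S_C^{1}\cup\{x\}$; since $z\notin N_C(y)$ for every $y\in S_C^{1}$, in all cases $z\in N_C(x)$, whence $|S_C^{2}|\le\deg(z)\le k-1$ and $|N_S(C)|\le 2(k-1)$. Note that this argument needs no separate treatment of small cliques and uses only the pairwise dichotomy you already established; with it in place, the rest of your computation gives the claimed bound.
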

\begin{proof}
Let $(G,k)$ be an instance of the {\sc Maximum Minimal Vertex Cover} problem, where $G$ is a bull-free graph. Recall that by \autoref{lem:extension-nbh-is} we can assume that the maximum degree of $G$ is at most $k-1$. We start by finding greedily, starting from $V(G)$, a minimal vertex cover $X$ of $G$. Note that $X$ can be easily found in polynomial time by \autoref{obs:characterization-mvc}. If $|X| \geq k$, we conclude that  $(G,k)$ is a \yes-instance, so we can assume  that $|X| \leq k-1$.  Let $S= V(G) \setminus X$ and note that $S$ is an independent set.

Since the bull satisfies the constructive Erd\H{o}s-Hajnal property with constant~$\delta=\frac{1}{4}$~\cite{ChudnovskyS08d,BonnetTTW20}, we can apply \autoref{lem:EH-partition} to the bull-free graph $G[X]$ and obtain in polynomial time a partition of $X$ into a collection of cliques ${\cal C}$ and a collection  of independent sets  ${\cal I}$ such that $|{\cal C}|+|{\cal I}| \leq d \cdot |X|^{3/4} \leq d \cdot  (k-1)^{3/4}$, where $d= \frac{1}{2^{\frac{3}{4}}-1} < 1.47$. Since we can assume that $G$ has no isolated vertices, as they can be safely removed without affecting the type of the instance, it follows that
\begin{equation}\label{eq:bound-bull-1}
S = \bigcup_{C \in {\cal C}}N_{S}(C) \cup \bigcup_{I \in {\cal I}}N_{S}(I).
\end{equation}
Hence,  our objective is to bound $|N_{S}(Y)|$ for every $Y\in {\cal C} \cup {\cal I}$. Suppose first that $I\in  {\cal I}$ is an independent set. From \autoref{lem:extension-nbh-is}, if $|N_S(I)| \geq k$ we can conclude that $(G,k)$ is  a \yes-instance, so we can assume henceforth that
\begin{equation}\label{eq:bound-bull-2}
\text{for every independent set  $I\in {\cal I}$,  it holds }\ |N_S(I)| \leq k-1.
\end{equation}
Suppose now that $C\in  {\cal C}$ is a clique. We partition $N_S(C)=S_C^{1} \uplus S_C^{2}$ as follows. Let $S_C^{1}$ be an inclusion-wise maximal set of vertices in $N_S(C)$ such that for any two (not necessarily distinct) vertices $x,y \in S_C^{1}$, $|N_C(x) \cup N_C(y)| \leq |C|-1$. That is, $S_C^{1}$ is a maximal set in $N_S(C)$ such that the neighborhoods of its vertices pairwise do {\sl not} cover the whole clique $C$. We let  $S_C^{2} = N_S(C) \setminus S_C^{1}$. The following is the crucial property of the set $S_C^{1}$.
\begin{claim}\label{claim:bull-free-inclusion}
The vertices in $S_C^{1}$ can be ordered $x_1,\ldots,x_p$ so  that $N_C(x_i) \subseteq N_C(x_j)$ whenever $i \leq j$.
\end{claim}
\begin{proof}
In order to prove the claim, it is sufficient to prove that, for any two vertices $x,y \in S_C^{1}$, either $N_C(x) \subseteq N_C(y)$ or $N_C(y) \subseteq N_C(x)$. Suppose for the sake of contradiction that there exist two vertices $u \in N_C(x) \setminus N_C(y)$ and $v \in N_C(y) \setminus N_C(x)$. By definition of the set $S_C^{1}$, there exists a vertex $w \in C \setminus (N_C(x) \cup N_C(y))$. But then the vertices $x,y,u,v,w$ induce a bull as illustrated in \autoref{fig:bull}, contradicting the hypothesis that $G$ is bull-free.
\end{proof}
\begin{figure}[htb]
\begin{center}
\includegraphics[scale=.9]{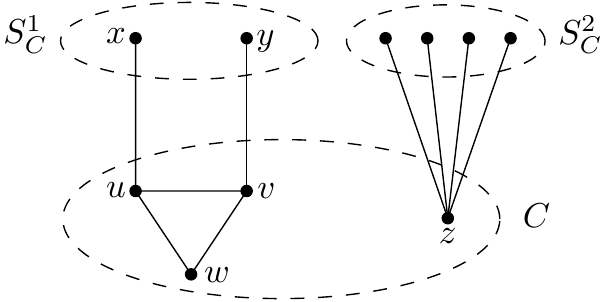}
\end{center}
\caption{Configuration considered in the proof of \autoref{claim:bull-free-inclusion} and a vertex $z \in \bigcap_{x \in S_C^{2}}N_C(x)$.}
\label{fig:bull}
\end{figure}
\autoref{claim:bull-free-inclusion} implies in particular that, unless $S_C^{1}=\emptyset$,  there exists a vertex $u \in \bigcap_{x \in S_C^{1}}N_C(x)$. Since $u$ has degree at most $k-1$ in $G$, and each vertex $x \in S_C^{1}$ is adjacent to $u$, it follows that $|S_C^{1}| \leq k-1$.

Let us now focus on the set $S_C^{2}$. The definition of the set $S_C^{1}$ together with \autoref{claim:bull-free-inclusion}  imply that
there exists a vertex
$z \in C \setminus \bigcup_{y \in S_C^{1}}N_C(y)$. Consider now an arbitrary vertex $x \in S_C^{2}$. Since $x$ could not be added to $S_C^{1}$, there exists a vertex $y \in S_C^{1}$ such that $N_C(x) \cup N_C(y) = C$. But since $z \in C \setminus \bigcup_{y \in S_C^{1}}N_C(y)$, necessarily $z \in N_C(x)$. It follows that $z \in \bigcap_{x \in S_C^{2}}N_C(x)$ (see \autoref{fig:bull}). Using again the fact that $z$ has degree at most $k-1$ in $G$, we obtain that $|S_C^{2}| \leq k-1$. Summarizing, we have that
\begin{equation}\label{eq:bound-bull-3}
\text{for every clique  $C\in {\cal C}$,  it holds }\ |N_S(C)| = |S_C^{1}| + |S_C^{2}| \leq 2(k-1).
\end{equation}
Putting all pieces together, Equations~(\ref{eq:bound-bull-1}),~(\ref{eq:bound-bull-2}), and~(\ref{eq:bound-bull-3}) and the fact that $|X| \leq k-1$ and $|{\cal C}|+|{\cal I}| \leq d \cdot |X|^{3/4}$ imply that, unless we have already concluded that $(G,k)$ is a \yes-instance, we have that
\begin{eqnarray*}
|V(G)| & = &  |X| + |S| \ = \ |X| + |\bigcup_{C \in {\cal C}}N_{S}(C)| + |\bigcup_{I \in {\cal I}}N_{S}(I)|\\
& \leq & |X| + (|{\cal C}|+|{\cal I}|) \cdot \max_{Y \in {\cal C}\cup{\cal I}}|N_S(Y)| \ \leq \ k-1 + d \cdot  (k-1)^{3/4} \cdot 2(k-1)\\
& = & 2d \cdot (k-1)^{7/4} + k-1,
\end{eqnarray*}
and the theorem follows.
\end{proof}

It is easy to prove that, for every integer $t\geq 2$, every $K_t$-free graph $G$ on $n$ vertices has an independent set of size $n^{\frac{1}{t-1}}$, by induction on $t$: for $t=2$ the statement is trivial, and if $t \geq 3$, then either $\Delta(G) < n^{\frac{t-2}{t-1}}$, and an independent set of size $n^{\frac{1}{t-1}}$ can be found greedily by adding any vertex to it and deleting its neighborhood, or there exists a vertex $v \in V(G)$ of degree at least $n^{\frac{t-2}{t-1}}$, in which case an independent set of size $n^{\frac{1}{t-1}}$ can be found applying the inductive hypothesis to the $K_{t-1}$-free graph $G[N(v)]$. Clearly, this proof can be translated to a polynomial-time algorithm to find an independent set of the appropriate size. Therefore, for any integer $t \geq 2$, $K_t$ satisfies the constructive Erd\H{o}s-Hajnal property with constant~$\delta = \frac{1}{t-1}$. The proof of the following theorem is a simplified version of that of \autoref{thm:kernel-bull-free}. Note that, since bipartite graphs are $K_t$-free for every $t \geq 3$, {\sc MMVC} is \NP-hard on $K_t$-free graphs~\cite{BoliacL03}.

\begin{theorem}\label{thm:kernel-Kt-free}
For every integer $t\geq 3$, the {\sc Maximum Minimal Vertex Cover} problem parameterized by $k$ restricted to $K_t$-free graphs admits a kernel with at most $c_{t} (k-1)^{\frac{2t-3}{t-1}} + k-1$ vertices, where $c_t =\frac{t-1}{2^{\frac{t-2}{t-1}} -1}$.
\end{theorem}
\begin{proof}
As in the proof of \autoref{thm:kernel-bull-free}, given an instance $(G,k)$ of
{\sc Maximum Minimal Vertex Cover}, where $G$ is a $K_t$-free graph, we partition $V(G) = X \uplus S$, where $X$ is a minimal vertex cover of $G$ with $|X| \leq k-1$, and we use \autoref{lem:EH-partition} to partition $X$ into two collections $\mathcal{C}$ and $\mathcal{I}$ of cliques and independent sets, respectively, with $|{\cal C}|+|{\cal I}| \leq d_{t} \cdot |X|^{\frac{t-2}{t-1}}$, where $d_{t} = \frac{1}{2^{\frac{t-2}{t-1}} -1}$. Equations~(\ref{eq:bound-bull-1}) and~(\ref{eq:bound-bull-2}) still hold, but now we have a much simpler version of Equation~(\ref{eq:bound-bull-3}): if $C \in {\cal C}$ is a clique then, since $G$ is $K_t$-free, necessarily $|C| \leq t-1$, which together with the fact that $\Delta(G) \leq k-1$ yield
\begin{equation}\label{eq:Kt-free-3}
\text{for every clique  $C\in {\cal C}$,  it holds }\ |N_S(C)| =  (t-1)(k-1).
\end{equation}
Combining Equations~(\ref{eq:bound-bull-1}),~(\ref{eq:bound-bull-2}), and~(\ref{eq:Kt-free-3}) we get
\begin{eqnarray*}
|V(G)| & \leq &  |X| + (|{\cal C}|+|{\cal I}|) \cdot \max_{Y \in {\cal C}\cup{\cal I}}|N_S(Y)| \ \leq \ k-1 + d_t \cdot (k-1)^{\frac{t-2}{t-1}} \cdot (t-1)(k-1),
\end{eqnarray*}
and the theorem follows.
\end{proof}

We now extend the results of \autoref{thm:kernel-bull-free} and \autoref{thm:kernel-Kt-free} to more general excluded induced graphs $H$, by making use of the aforementioned substitution operation.
As defined by Alon et al.~\cite{AlonPS01}, for two graphs $H_1$ and $H_2$ on disjoint vertex sets, we say that $H$ is \emph{obtained from $H_1$ by substituting $H_2$ for $v \in V(H_1)$} (or just \emph{obtained from $H_1$ by substituting $H_2$} if the vertex $v$ in question is not important) if
\begin{itemize}
  \item $V(H)=(V(H) \setminus \{v\}) \cup V(H_2)$,
  \item $H[V(H_2)]=H_2$,
  \item $H[V(H_1) \setminus \{v\}] = H_1 \setminus v$, and
  \item $u \in V(H_1)$ is adjacent in $H$ to $w \in V(H_2)$ if and only if $u$ is adjacent in $H_1$ to $v$.
\end{itemize}

Alon et al.~\cite{AlonPS01} proved that if two graphs $H_1$ and $H_2$ satisfy Erd\H{o}s-Hajnal property and $H$ is obtained from $H_1$ by substituting $H_2$, then  $H$ satisfies the Erd\H{o}s-Hajnal property as well. More precisely, by following the details in the proof of~\cite[Theorem 2.1]{AlonPS01}, we can derive that if  $H_1$ and $H_2$ satisfy Erd\H{o}s-Hajnal property with constants $\delta_1$ and $\delta_2$, respectively, then $H$ satisfies the Erd\H{o}s-Hajnal property with constant $\delta = \frac{\delta_2}{\delta_1 + |V(H_1)| \cdot \delta_2}$. The same applies to the constructive version of the Erd\H{o}s-Hajnal property.

For an integer $t \geq 2$, we define the \emph{$t$-bull} as the graph obtained from $K_t$ by adding a pendant vertex to two different vertices of the clique. Note that the 2-bull is equal to $P_4$ and that the 3-bull is equal to the bull. 
Note also that, for every $t \geq 3$, the $t$-bull is obtained from the bull by substituting $K_{t-2}$ for the vertex of degree two of the bull. Therefore, by the discussion in the above paragraph, since the bull and $K_{t-2}$ satisfy the constructive Erd\H{o}s-Hajnal property with constants $\frac{1}{4}$ and $\frac{1}{t-3}$, respectively,  it follows that, for every $t \geq 4$, the $t$-bull satisfies the constructive Erd\H{o}s-Hajnal property with constant
$$
\delta_t\ =\ \frac{\frac{1}{t-3}}{\frac{1}{4} + \frac{5}{t-3}} \ = \ \frac{4}{t+17}.
$$

The proof of the next theorem follows again (and generalizes) that of \autoref{thm:kernel-bull-free}.
Note that \autoref{thm:kernel-fat-bull-free} corresponds to the particular case $t=3$ of \autoref{thm:kernel-Kt-free}. Note also that bipartite graphs are $t$-bull-free for $t \geq 3$, hence {\sc MMVC} is \NP-hard on $t$-bull-free graphs for $t \geq 3$~\cite{BoliacL03}.
On the other hand, $2$-bull-free graphs are exactly $P_4$-free graphs, also called cographs, which have cliquewidth at most two, hence by \autoref{obs:kernel_MSO} (proved later in \autoref{sec:MMVC-other-classes}) {\sc MMVC} can be solved in polynomial time on this class.

\begin{theorem}\label{thm:kernel-fat-bull-free}
For every integer $t\geq 3$, the {\sc Maximum Minimal Vertex Cover} problem parameterized by $k$ restricted to $t$-bull-free graphs admits a kernel with at most $c_t (k-1)^{2-\delta_t} + k-1$ vertices, where  $\delta_3= \frac{1}{4}$ and $\delta_t = \frac{4}{t+17}$ for $t \geq 4$, and $c_{t} = \frac{t-1}{2^{(1-\delta_t)}-1}$.
\end{theorem}
\begin{proof}
As in the proof of \autoref{thm:kernel-bull-free}, given an instance $(G,k)$ of
{\sc Maximum Minimal Vertex Cover}, where $G$ is a $t$-bull-free graph, we partition $V(G) = X \uplus S$, where $X$ is a minimal vertex cover of $G$ with $|X| \leq k-1$, and we use \autoref{lem:EH-partition} to partition $X$ into two collections $\mathcal{C}$ and $\mathcal{I}$ of cliques and independent sets, respectively, with $|{\cal C}|+|{\cal I}| \leq d_{t} \cdot |X|^{1-\delta_t}$, where $\delta_3= \frac{1}{4}$ and $\delta_t = \frac{4}{t+17}$ for $t \geq 4$, and $d_{t} = \frac{1}{2^{(1-\delta_t)}-1}$ for every $t\geq 3$. Equations~(\ref{eq:bound-bull-1}) and~(\ref{eq:bound-bull-2}) still hold for every integer $t \geq 3$, but now we need slightly more involved arguments to obtain an appropriate version of Equation~(\ref{eq:bound-bull-3}) for every $t \geq 3$.

To this end, suppose that $C\in  {\cal C}$ is a clique. We partition $N_S(C)=S_C^{1} \uplus S_C^{2}$ as follows. Let $S_C^{1}$ be an inclusion-wise maximal set of vertices in $N_S(C)$ such that for any two (not necessarily distinct) vertices $x,y \in S_C^{1}$, $|N_C(x) \cup N_C(y)| \leq |C|-(t-2)$. That is, $S_C^{1}$ is a maximal set in $N_S(C)$ such that the neighborhoods of its vertices pairwise leave at least $t-2$ uncovered vertices in the clique $C$. We let  $S_C^{2} = N_S(C) \setminus S_C^{1}$. The set $S_C^{1}$ satisfies exactly the same crucial property as for the case $t=3$ (see \autoref{claim:bull-free-inclusion}).
\begin{claim}\label{claim:t-bull-free-inclusion}
For every integer $t \geq 3$, the vertices in $S_C^{1}$ can be ordered $x_1,\ldots,x_p$ so  that $N_C(x_i) \subseteq N_C(x_j)$ whenever $i \leq j$.
\end{claim}
\begin{proof}
In order to prove the claim, it is sufficient to prove that, for any two vertices $x,y \in S_C^{1}$, either $N_C(x) \subseteq N_C(y)$ or $N_C(y) \subseteq N_C(x)$. Suppose for the sake of contradiction that there exist two vertices $u \in N_C(x) \setminus N_C(y)$ and $w \in N_C(y) \setminus N_C(x)$. By definition of the set $S_C^{1}$, there exist $t-2$ vertices $w_1,\ldots,w_{t-2} \in C \setminus (N_C(x) \cup N_C(y))$. But then the vertices $x,y,u,v,w_1,\ldots,w_{t-2}$ induce a $t$-bull, contradicting the hypothesis that $G$ is $t$-bull-free.
\end{proof}
\autoref{claim:t-bull-free-inclusion} implies in particular that, unless $S_C^{1}=\emptyset$,  there exists a vertex $u \in \bigcap_{x \in S_C^{1}}N_C(x)$. Since $u$ has degree at most $k-1$ in $G$, and each vertex $x \in S_C^{1}$ is adjacent to $u$, it follows that $|S_C^{1}| \leq k-1$.

Let us now focus on the set $S_C^{2}$. The definition of the set $S_C^{1}$ together with \autoref{claim:t-bull-free-inclusion}  imply that
there exist at least $t-2$
vertices $z_1,\ldots,z_{t-2} \in C \setminus \bigcup_{y \in S_C^{1}}N_C(y)$.  Consider now an arbitrary vertex $x \in S_C^{2}$. Since $x$ could not be added to $S_C^{1}$, there exists a vertex $y \in S_C^{1}$ such that $|N_C(x) \cup N_C(y)| \geq |C|-(t-1)$. But since $z_1,\ldots,z_{t-2} \in C \setminus \bigcup_{y \in S_C^{1}}N_C(y)$, there exists an index $j \in [t-2]$ such that $z_j \in N_C(x)$. That is, every vertex $x \in S_C^{2}$ is adjacent to at least one of the vertices
$z_1,\ldots,z_{t-2}$. Using again the fact that each of the vertices $z_1,\ldots,z_{t-2}$ has degree at most $k-1$ in $G$, we obtain that $|S_C^{2}| \leq (t-2)(k-1)$. Summarizing, we have that
\begin{equation}\label{eq:bound-t-bull-3}
\text{for every clique  $C\in {\cal C}$,  it holds }\ |N_S(C)| = |S_C^{1}| + |S_C^{2}| \leq (t-1)(k-1).
\end{equation}
Putting all pieces together, Equations~(\ref{eq:bound-bull-1}),~(\ref{eq:bound-bull-2}), and~(\ref{eq:bound-t-bull-3}) and the fact that $|X| \leq k-1$ and $|{\cal C}|+|{\cal I}| \leq d_{t} \cdot |X|^{1-\delta_t}$ imply that, unless we have already concluded that $(G,k)$ is a \yes-instance, we have that
\begin{eqnarray*}
|V(G)| & \leq &  |X| + (|{\cal C}|+|{\cal I}|) \cdot \max_{Y \in {\cal C}\cup{\cal I}}|N_S(Y)| \ \leq \ k-1 + d_{t} \cdot (k-1)^{1-\delta_t} \cdot (t-1)(k-1),
\end{eqnarray*}
and the theorem follows.
\end{proof}

Let the \emph{paw} be the graph obtained from a triangle by adding a pendant edge.
Gy{\'{a}}rf{\'{a}}s~\cite{Gyarfas13} showed that the paw satisfies the constructive Erd\H{o}s-Hajnal property with constant $\delta = \frac{1}{3}$.
Note that bipartite graphs are paw-free, hence {\sc MMVC} is \NP-hard on paw-free graphs~\cite{BoliacL03}.


\begin{theorem}\label{thm:kernel_paw}
The {\sc Maximum Minimal Vertex Cover} problem parameterized by $k$ restricted to paw-free graphs admits a kernel with at most $c  (k-1)^{5/3} + k-1$ vertices, where $c=\frac{2}{2^{2/3}-1} < 3.41$.
\end{theorem}
\begin{proof}
Given an instance $(G,k)$ of
{\sc Maximum Minimal Vertex Cover}, where $G$ is a paw-free graph, we again partition $V(G) = X \uplus S$, where $X$ is a minimal vertex cover of $G$ with $|X| \leq k-1$,
and we use \autoref{lem:EH-partition} to partition $X$ into two collections $\mathcal{C}$ and $\mathcal{I}$ of cliques and independent sets, respectively, with $|{\cal C}|+|{\cal I}| \leq d \cdot |X|^{2/3}$, where $d=\frac{1}{2^{2/3}-1}$.
Equations~(\ref{eq:bound-bull-1}) and~(\ref{eq:bound-bull-2}) still hold, and we can again obtain in a simpler way an appropriate version of Equation~(\ref{eq:bound-bull-3}). Indeed, let  $C \in {\cal C}$ be a clique, and our goal is to bound $|N_S(C)|$. If $|C|=1$ then by the fact that $\Delta(G) \leq k-1$ we get that $|N_S(C)| \leq k-1$, so assume that $|C|\geq 2$. Suppose for the sake of contradiction that there exists a vertex $v \in N_S(C)$ such that $|N_C(v)| \leq |C|-2$. Let $w \in N_C(v)$ and let $z_1,z_2$ be two vertices in $C \setminus N_C(v)$. Then the vertices $v,w,z_1,z_2$ induce a paw, contradicting the hypothesis that $G$ is paw-free. Therefore,  for every vertex $v \in N_S(C)$ it holds that $|N_C(v)| \geq |C|-1$. Hence, the number of edges in $G$ between $C$ and $N_S(C)$ is at least $|N_S(C)| \cdot (|C|-1)$ and, since $\Delta(G) \leq k-1$, at most $|C| \cdot (k-1)$. Using that $|C| \geq 2$, it follows that
\begin{equation}\label{eq:bound-paw-free-3}
\text{for every clique  $C\in {\cal C}$,  it holds }\ |N_S(C)| \leq \frac{|C|}{|C|-1} \cdot (k-1)  \leq  2(k-1).
\end{equation}
Putting all pieces together, Equations~(\ref{eq:bound-bull-1}),~(\ref{eq:bound-bull-2}), and~(\ref{eq:bound-paw-free-3}) and the fact that $|X| \leq k-1$ and $|{\cal C}|+|{\cal I}| \leq d \cdot |X|^{2/3}$ imply that, unless we have already concluded that $(G,k)$ is a \yes-instance, we have that
\begin{eqnarray*}
|V(G)| & \leq &  |X| + (|{\cal C}|+|{\cal I}|) \cdot \max_{Y \in {\cal C}\cup{\cal I}}|N_S(Y)| \ \leq \ k-1 + d \cdot (k-1)^{2/3} \cdot 2(k-1),
\end{eqnarray*}
and the theorem follows.
%
%
\end{proof}

\subsection{Remarks on other graph classes}
\label{sec:MMVC-other-classes}

In this subsection we provide additional observations about the complexity of
the {\sc Maximum Minimal Vertex Cover} problem restricted to special graph classes.

\begin{lemma}\label{lem:kernel_K1t}
For every integer $t\geq 1$, the {\sc Maximum Minimal Vertex Cover} problem parameterized by $k$ restricted to $K_{1,t}$-free graphs admits a kernel with at most $t(k-1)$ vertices.
\end{lemma}
\begin{proof}
Given an instance $(G,k)$ of
{\sc Maximum Minimal Vertex Cover}, where $G$ is a $K_{1,t}$-free graph, we again partition $V(G) = X \uplus S$, where $X$ is a minimal vertex cover of $G$ with $|X| \leq k-1$. Since $G$ is $K_{1,t}$-free and $S$ is an independent set, it holds that for every $v \in X$, $|N_S(v)| \leq t-1$, and since we can assume that $G$ contains no isolated vertex, we obtain that $|V(G)| = |X| + |\bigcup_{v \in X}|N_S(v)| \leq k-1 + (t-1)(k-1) = t(k-1)$.
\end{proof}

Let ${\cal C}$ be a graph class such that there exists a polynomial-time algorithm that, given a graph $G \in {\cal C}$, outputs a proper coloring of the vertices of $G$ using at most $c$ colors, for some integer $c \geq 1$. We say that such a graph class ${\cal C}$ is \emph{{\sf poly}-$\chi$-$c$-bounded}.
Examples of {\sf poly}-$\chi$-$c$-bounded classes are planar graphs, minor-free graphs, or, more generally, graphs of bounded expansion. We note that Fernau~\cite[Corollary 4.14]{FernauHDR}  provides a similar observation for the particular case of planar graphs.

\begin{lemma}\label{lem:polyChi}
For every integer $c \geq 1$, the {\sc Maximum Minimal Vertex Cover} problem parameterized by $k$ restricted to the class of {\sf poly}-$\chi$-$c$-bounded graphs admits a kernel with at most $c(k-1)$ vertices.
\end{lemma}
\begin{proof}
Given an instance $(G,k)$ of {\sc MMVC}, where $G$ belongs to a {\sf poly}-$\chi$-$c$-bounded class, we first compute in polynomial time a proper vertex coloring of $G$ using at most $c$ colors. We may clearly assume that $G$ has no isolated vertices, as such vertices can be safely removed. Let $V(G) = S_1 \uplus \cdots \uplus S_c$ be the corresponding partition of $V(G)$ into independent sets. By \autoref{lem:extension-nbh-is}, for every $i \in [c]$ there exists a minimal vertex cover of $G$ that contains $N(S_i)$. Hence, if for some $i \in [c]$ we have that $|N(S_i)| \geq k$, we can safely answer ``\yes'', so we may assume that, for every $i \in [c]$, $|N(S_i)| \leq k-1$. Since $G$ has no isolated vertices and every set $S_i$ is an independent set, it follows that $V(G) = \bigcup_{i \in [c]} N(S_i)$, so we have that $|V(G)| \leq \sum_{i \in [c]}|N(S_i)|\leq c(k-1)$.
\end{proof}

Another graph class ${\cal K}$ that allows for linear kernels is defined such that, for every graph $G \in {\cal K}$, the minimum size of a dominating set of $G$ is equal to the size of a minimum {\sl independent} dominating set of $G$. We furthermore ask ${\cal K}$ to be hereditary. Such graphs have been studied, for instance, in~\cite{ToppV91,AllanL78}, and include in particular $K_{1,3}$-free graphs (note that a generalization to $K_{1,t}$-graphs is given in \autoref{lem:kernel_K1t}). Let us see why the class ${\cal K}$ allows for a linear kernel. As discussed at the end of \autoref{sec:nopolykernel}, the complement of a dominating set is called a nonblocker, and the \textsc{Nonblocker Set} problem admits a linear kernel~\cite{FernauHDR}. On the other hand, the complement of an independent dominating set is a minimal vertex cover. Hence, if $G \in {\cal K}$, an instance $(G,k)$ of \textsc{Nonblocker Set} is positive if and only if $(G,k)$ is a positive instance of {\sc MMVC}. Note the linear kernel for the \textsc{Nonblocker Set} problem discussed at the end of \autoref{sec:nopolykernel} outputs a subgraph $G'$ of $G$, and we have that $G' \in {\cal K}$ since ${\cal K}$ is hereditary. Hence, the equivalence between \textsc{Nonblocker Set} and {\sc MMVC} also holds for $G'$, and it follows that this kernel is also a linear kernel for {\sc MMVC} restricted to graphs in ${\cal K}$.

%
%

Our last contribution in this section concerns graph classes of bounded \emph{cliquewidth}. Cliquewidth, which we do not need to define here, is a graph parameter that is ``smaller'' than treewidth in the sense that graph classes of bounded treewidth have also bounded cliquewidth (the opposite is not true, as cliques have cliquewidth one but unbounded treewidth); see~\cite{CourcelleMR00} for the formal definition.

The variation of \emph{monadic second order} logic of graphs called \MSOone is defined by a syntax that includes the logical connectives $\vee$, $\wedge$, $\neg$, variables for vertices, edges, sets of vertices (but {\sl not} sets of edges), the quantifiers $\forall, \exists$ that can be applied to these variables, and the binary relations expressing whether a vertex  belongs to a set, whether an edge is incident to vertex, whether two vertices are adjacent, and whether two sets are equal. It is well-known that
finding a minimum or maximum weight vertex set that satisfies a given graph property expressed in \MSOone can be solved in linear time on graphs of cliquewidth bounded by a constant~\cite{CourcelleMR00,ArnborgLS91}.

\begin{observation}\label{obs:kernel_MSO}
The {\sc Maximum Minimal Vertex Cover} problem can be expressed in \MSOone, and therefore it can be solved in linear time when restricted to any graph class of cliquewidth bounded by a constant.
\end{observation}
\begin{proof}
Given a graph $G$, we can express the property of a vertex set $S$ being a minimal vertex cover of $G$ in the syntax of \MSOone as follows: for every pair of vertices $u,v$ such that $u$ is adjacent to $v$, $u \in S$ or $v \in S$ (this guarantees that $S$ is a vertex cover of $G$), and for every vertex $v \in V(G)$,  $v \notin S$ or there exists a vertex $u$ adjacent to $v$ such that $u \notin S$ (this guarantees, by \autoref{obs:characterization-mvc}, that $S$ is minimal).
\end{proof}



Let the \emph{diamond} be the graph obtained from $K_4$ by removing an edge.
Since Brandst{\"{a}}dt~\cite{Brandstadt04} proved that $\{P_5,\text{diamond}\}$-free graphs have bounded cliquewidth, from \autoref{obs:kernel_MSO} we immediately get the following corollary.

\begin{corollary}\label{cor:diamond-P5}
The {\sc Maximum Minimal Vertex Cover} problem restricted to $\{P_5,\text{diamond}\}$-free graphs can be solved in linear time.
\end{corollary}




\section{Ruling out polynomial kernels for \textsc{MMVC} for smaller parameters}
\label{sec:nopolykernel}

In this section we rule out, assuming that ${\sf NP} \nsubseteq {\sf coNP} / {\sf poly}$, the existence of polynomial kernels for {\sc MMVC} parameterized by the size of a minimum vertex cover of the input graph. As mentioned in the introduction, the reduction given in \autoref{thm:no-poly-kernel} also provides an alternative proof of the \NP-completeness of {\sc MMVC} on bipartite graphs, which also follows from~\cite{BoliacL03}. We note that the existing \NP-hardness reductions for {\sc MMVC}, such as the one in~\cite{BoliacL03},  do {\sl not} seem to be easily modifiable so to yield the non-existence of polynomial kernels.



\begin{theorem}\label{thm:no-poly-kernel}
The {\sc Maximum Minimal Vertex Cover} problem parameterized by the size of a minimum vertex cover (or of a maximum matching) of the input graph does not admit a polynomial kernel unless ${\sf NP} \subseteq {\sf coNP} / {\sf poly}$, even restricted to bipartite graphs.
\end{theorem}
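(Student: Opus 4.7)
The plan is a polynomial parameter transformation (PPT) from \textsc{Monotone SAT} parameterized by the number of variables $n$, a source problem that admits no polynomial kernel unless $\mathsf{NP}\subseteq \mathsf{coNP}/\mathsf{poly}$. This source hardness I would deduce from Dell and van Melkebeek's lower bound for \textsc{CNF-SAT}/$n$ via the standard variable-doubling PPT: for each $x_i$ introduce a fresh $y_i$ together with the two monotone binary clauses $(x_i\vee y_i)$ and $(\bar x_i\vee \bar y_i)$ enforcing $y_i\equiv\bar x_i$, then replace every occurrence of $\bar x_i$ in the original clauses by $y_i$; every clause then becomes monotone and the new formula has $2n$ variables.

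Given a monotone formula $\phi$ with variables $x_1,\dots,x_n$ and $m$ clauses, I would build a bipartite graph $G$ as follows. For each variable $x_i$, introduce a $P_4$-gadget on vertices $p_i,x_i,\bar x_i,q_i$ with edges $\{p_i,x_i\},\{x_i,\bar x_i\},\{\bar x_i,q_i\}$ (the ``spine'' edge $\{x_i,\bar x_i\}$ flanked by two pendants). For each positive clause $C_j^+$ add a vertex $c_j^+$ adjacent to every $x_i\in C_j^+$; symmetrically, for each negative clause $C_j^-$ add a vertex $c_j^-$ adjacent to every $\bar x_i\in C_j^-$. The bipartition $A=\{x_i,q_i\}_i\cup\{c_j^-\}_j$ versus $B=\{\bar x_i,p_i\}_i\cup\{c_j^+\}_j$ shows that $G$ is bipartite. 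I would then ask whether $G$ has a minimal vertex cover of size at least $k:=2n+m$, and observe that $\{x_i,\bar x_i\}_{i\in[n]}$ is already a vertex cover of $G$, so $\tau(G)\le 2n$, and $\nu(G)=\tau(G)$ by K\"onig's theorem; both parameters are polynomial in~$n$, as required for a PPT.

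Correctness is cleanest via the duality $\mathsf{mmvc}(G)=|V(G)|-\mathsf{mids}(G)=4n+m-\mathsf{mids}(G)$, so $(G,k)$ is a yes-instance iff $G$ admits an independent dominating set of size at most $2n$. The key local step to establish is that every $P_4$-gadget forces any IDS $D$ to contain at least two of its four vertices (its size-$2$ IDSs are precisely $\{x_i,q_i\}$, $\{p_i,\bar x_i\}$, and $\{p_i,q_i\}$), so $|D|\ge 2n$ unconditionally. Given a satisfying assignment $\alpha$, picking $\{x_i,q_i\}$ when $\alpha(x_i)=\true$ and $\{p_i,\bar x_i\}$ otherwise produces an IDS of size exactly $2n$: every clause vertex $c_j^\pm$ is dominated by a literal of $D$ witnessing the satisfaction of $C_j^\pm$, and $\alpha$ being a genuine assignment prevents any clause vertex from being forced into $D$. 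Conversely, if $|D|\le 2n$, then each gadget contributes exactly $2$ to $D$ and no clause vertex lies in $D$; defining $\alpha(x_i)=\true$ iff $x_i\in D$ (and arbitrarily in the $\{p_i,q_i\}$ case) yields a satisfying assignment, since the domination of each $c_j^\pm$ then forces one of its literals into $D$.

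The main obstacle I expect is securing the lower bound $|D|\ge 2n$: I need to rule out the possibility that placing a clause vertex in $D$ could substitute for the two vertices contributed by a variable gadget. The point is that the pendants $p_i, q_i$ have only $x_i, \bar x_i$ respectively as outside neighbors, so each must be dominated from within its own gadget, forcing the gadget's internal contribution to any IDS regardless of what happens with clause vertices. Once this local analysis is settled, bipartiteness is immediate from the partition described, the bound $\tau(G)\le 2n$ makes the PPT valid and yields the kernel lower bound for the vertex-cover parameterization, and the matching-number version follows at no extra cost via K\"onig's theorem; as a byproduct the reduction also recovers NP-hardness of \textsc{MMVC} on bipartite graphs.
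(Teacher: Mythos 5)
Your proposal is correct and is essentially the paper's proof: the same \PPT from \textsc{Monotone Sat} parameterized by the number of variables (obtained by the standard variable-doubling from \textsc{Sat}), the same $P_4$ variable gadgets with pendant vertices, the same clause vertices attached to literal vertices, the same threshold $k=2n+m$, and the same observation that the vertex cover and matching numbers are $2n$. The only difference is cosmetic: you verify correctness in the complementary world of independent dominating sets (asking for an IDS of size at most $2n$), whereas the paper argues directly with minimal vertex covers via the characterization that every vertex of a minimal cover has a neighbor outside it; the two analyses are dual and the gadget-by-gadget counting is identical.
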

\begin{proof}
We present a \PPT from \textsc{Monotone Sat}  parameterized by the number of variables, which is also an \NP-completeness reduction. The \textsc{Monotone Sat} problem is the restriction of the \textsc{Sat} problem to formulas in which the literals in each clause are either all positive or all negative. This problem is well-known to be \NP-complete~\cite{GareyJ79}, and it is easy to see that, when parameterized by the number of variables, it does not admit a polynomial kernel unless ${\sf NP} \subseteq {\sf coNP} / {\sf poly}$. Indeed,  Fortnow and Santhanam~\cite{FortnowS11} proved that
the \textsc{Sat} problem parameterized by the number of variables does not admit a polynomial kernel unless ${\sf NP} \subseteq {\sf coNP} / {\sf poly}$, and the classical reduction from \textsc{Sat} to \textsc{Monotone Sat} that replaces each variable with a ``positive'' and a ``negative'' variable and adds extra clauses appropriately~\cite{GareyJ79} is in fact a \PPT when the parameter is the number of variables.

Given an instance $\phi$ of \textsc{Monotone Sat}, where the formula $\phi$ contains $n$ variables and $m$ clauses, we construct in polynomial time an instance $(G,k)$ of
{\sc Maximum Minimal Vertex Cover} as follows. For each variable $x_i$ of $\phi$, $i \in [n]$, we add to $G$ four vertices $\ell_i,x_i^+,x_i^-,r_i$ and three edges $\{\ell_i,x_i^+\}, \{x_i^+,x_i^-\},\{x_i^-,r_i\}$, hence inducing a $P_4$. We call the vertex $x_i^+$ (resp. $x_i^-$) a \emph{positive} (resp. a \emph{negative}) vertex of $G$. For each clause $C_j$ of $\phi$, $j \in [m]$, we add to $G$ a vertex $c_j$, which we connect to the positive or negative vertices corresponding to the literals contained in $C_j$. This concludes the construction of $G$, which is illustrated in \autoref{fig:reduction}(a). Note that, since $\phi$ is a monotone formula, $G$ is a bipartite graph. Note also that the set of vertices $\{x_i^+,x_i^- \mid i \in [n]\}$ is a minimum vertex cover of $G$ of size $2n$, and that the set of edges $\{\{\ell_i,x_i^+\},\{x_i^-,r_i\} \mid i \in [n]\}$ is a maximum matching of $G$ of size $2n$.  We claim that $\phi$ is satisfiable if and only if $G$ contains a minimal vertex cover of size $k:=2n+m$.

\begin{figure}[h]
\begin{center}
\vspace{-.25cm}
\includegraphics[scale=.93]{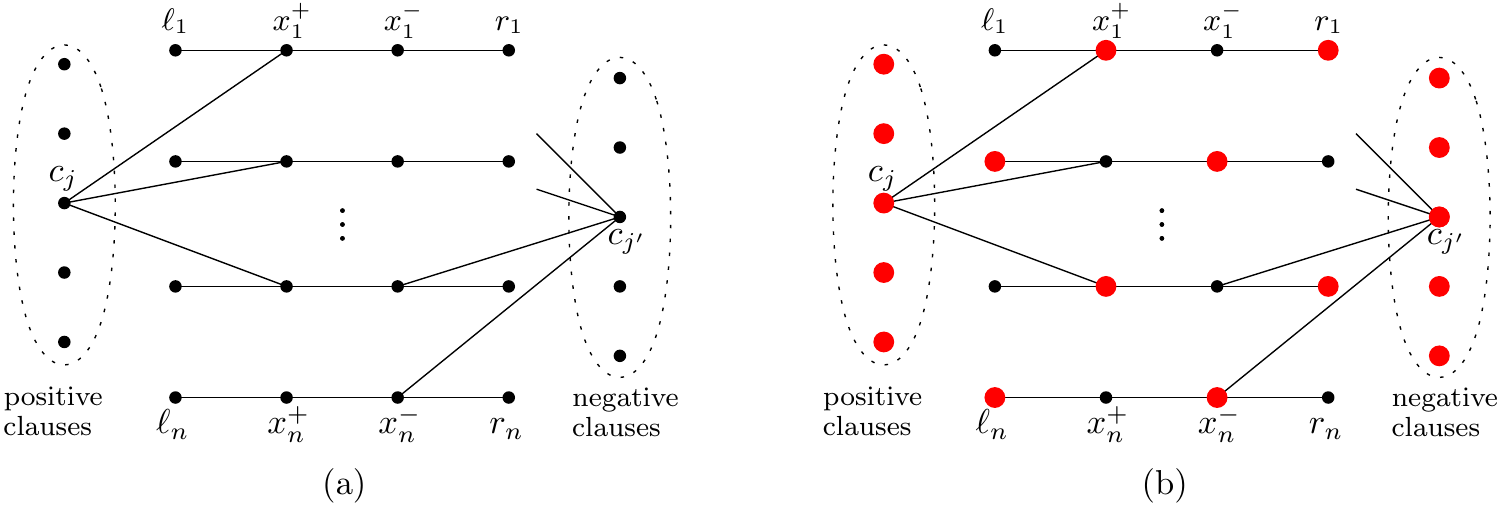}
\end{center}
\vspace{-.25cm}
\caption{(a) Illustration of the graph $G$ built from the formula $\phi$ in the proof of \autoref{thm:no-poly-kernel}. (b)~A minimal vertex cover $X$ of $G$ is shown with larger red vertices.}
\label{fig:reduction}
\vspace{-.15cm}
\end{figure}

Suppose first that $\phi$ is satisfiable, and let $\sigma$ be an assignment of the variables that satisfies all the clauses in $\phi$. We proceed to define a minimal vertex cover $X$ of $G$ of size $k$. First, add to $X$ all the clause vertices $\{c_j \mid j \in [m]\}$. For every $i \in [n]$, if $\sigma(x_i)={\sf true}$ (resp. $\sigma(x_i)={\sf false}$), add to $X$ vertices $x_i^-$ and $\ell_i$ (resp. $x_i^+$ and $r_i$). See \autoref{fig:reduction}(b) for an illustration, where the set $X$ is shown with larger red vertices. Clearly, $X$ is a vertex cover of $G$. To see that it is minimal, by \autoref{obs:characterization-mvc} it is enough to verify that, for every vertex $v \in X$, $N[v] \nsubseteq X$. This condition holds easily for all vertices in $X$ that are in the $P_4$'s, since for each $P_4$ its vertices in $X$ are not adjacent. Let $c_j$ be a clause vertex.
Since $\sigma$ is a satisfying assignment of the variables, there exists a variable $x_i$ such that if $\sigma(x_i)={\sf true}$ (resp. $\sigma(x_i)={\sf false}$) then $x_i \in C_j$ (resp. $\bar{x}_i \in C_j$). By definition of $X$, if $\sigma(x_i)={\sf true}$ (resp. $\sigma(x_i)={\sf false}$) then $x_i^+ \notin X$ (resp. $x_i^- \notin X$), and by construction of $G$ we have that $x_i^+ \in N(c_j)$ (resp. $x_i^- \in N(c_j)$), so in both cases $N[c_j] \nsubseteq X$.

Conversely, suppose that $G$ contains a minimal vertex cover $X$ of size $k$, and we proceed to define a variable assignment $\sigma$ as follows. For $i \in [n]$, as $\{x_i^+,x_i^-\} \in E(G)$ we have that $X$ contains one or two vertices in the set  $\{x_i^+,x_i^-\}$. If $x_i^+ \notin X$ (resp. $x_i^- \notin X$) we set $\sigma(x_i) = {\sf true}$ (resp. $\sigma(x_i) = {\sf false}$), and if both $x_i^+$ and $x_i^-$ belong to $X$ we set $\sigma(x_i)$ to {\sf true} or to {\sf false} arbitrarily. We claim that $\sigma$ satisfies all the clauses in $\phi$. For $i \in [n]$, let $P^i$ be the $P_4$ of $G$ induced by the vertices $\ell_i,x_i^+,x_i^-,r_i$. Since $X$ is a vertex cover, clearly $|X \cap V(P^i)| \geq 2$. We claim that $|X \cap V(P^i)| = 2$. Indeed, if $|X \cap V(P^i)| \geq 3$, then $\{\ell_i,x_i^+\}\subseteq X$ or $\{x_i^-,r_i\} \subseteq X$ (or both). But then $N[\ell_i] \subseteq X$ or $N[r_i] \subseteq X$ (or both), contradicting \autoref{obs:characterization-mvc}. Thus, $|X \cap V(P^i)| = 2$, which implies that $|X \cap \bigcup_{i \in [n]} V(P^i)| = 2n$, hence necessarily $X$ contains the whole set $\{c_j \mid j \in [m]\}$ of clause vertices. Consider an arbitrary clause vertex $c_j$. Since $X$ is minimal and $c_j \in X$, by \autoref{obs:characterization-mvc} there exists a neighbor of $c_j$ in $G$ that is {\sl not} in $X$, and by definition of $\sigma$ it follows that the literal corresponding to that neighbor of $c_j$ satisfies clause $C_j$. Thus, $\sigma$ is a satisfying assignment and the proof is complete.

Finally, note that the above reduction is also an \NP-completeness reduction from \textsc{Monotone Sat} to {\sc Maximum Minimal Vertex Cover} on bipartite graphs.
\end{proof}

\section{Conclusions and further research}
\label{sec:concl}

Motivated by the existence of subquadratic kernels for the \textsc{Maximum Minimal Vertex Cover} problem parameterized by the solution size, we introduced a general framework to rule out certain types of kernels, which we called \lop-kernels, for optimization problems.
This ``\lop'' assumption does not seem to be very restrictive, as the vast majority of known kernels are in fact \lop-kernels~\cite{Book-kernels}. For instance, the classical kernels for \textsc{Vertex Cover},  such as those using the high-degree rule, the crown decomposition rule, or the Nemhauser-Trotter rule~\cite{Book-kernels}, are \lop-kernels. More involved kernels, such as those based on protrusion replacement~\cite{BodlaenderFLPST16}, are also \lop-kernels. However, we discussed in \autoref{sec:applications-lop} an example of a polynomial kernel for a vertex-minimization problem, namely \textsc{Tree Deletion Set}~\cite{GiannopoulouLSS16}, which is {\sl not} a \lop-kernel. We still do not know of a similar example that is a vertex-maximization problem.


For several technical reasons, we think that the framework of \lop-kernels seems to be more suited for maximization problems. In this direction, we showed that a direct application of our general result for vertex-maximization problems (\autoref{cor:lop-kernels}) yields kernelization lower bounds for {\sc MMVC} (\autoref{cor:lop1}) and {\sc MMFVS} (\autoref{cor:lop2}), matching the sizes of the best known kernels for these problems. We also presented consequences of our results for the \textsc{Maximum Independent Set} problem restricted to $K_t$-free graphs (\autoref{cor:lop4}) and conjectured (\autoref{conj:lop}) that, for every fixed graph $H$, the {\sc Maximum Independent Set} problem restricted to $H$-free graphs admits a polynomial \lop-kernel. For (vertex-)minimization problems, the only application that we were able to find concerns the \textsc{Tree Deletion Set} problem (\autoref{cor:lop3}).

We believe that our results could be applied to other vertex-maximization problems, in particular to the ``max-min'' version of other vertex-minimization problems, as they seem to be quite hard to approximate. It would  be interesting to find other examples of vertex-minimization problems, other than \textsc{Tree Deletion Set}, where our results  could be applied. Here, the natural candidates seem to be the ``min-max'' version of vertex-maximization problems, which seem to have been almost unexplored so far.

Our general results for maximization (\autoref{thm:lop-kernelsG}) and minimization (\autoref{thm:lop-kernelsGmin}) problems take into account an upper bound function $\ub(n) = \O(n^a)$ that upper-bounds the size of optimal solutions of the considered problems. All our applications presented in \autoref{sec:applications-lop} correspond to vertex problems, that is, to the case $a=1$. We leave for further research to find applications of our results for problems with superlinear upper bound functions, such as edge problems, for which $a=2$.

Our results are also able to derive lower bounds on the multiplicative constants of the existing kernels (cf. for instance the second item of \autoref{cor:lop-kernels} and \autoref{cor:lop-kernelsmin}). We still do not have any relevant application of this type for a concrete problem. For instance, if we apply \autoref{cor:lop-kernelsmin} to the \textsc{Vertex Cover} problem, relying on the non-existence of a $(2- \varepsilon)$-approximation under the Unique Games Conjecture~\cite{KhotR08}, we rule out the existence of a \lop-kernel with $(1-\varepsilon)k$ vertices, which is not particularly  interesting.

\medskip



%

We presented (\autoref{sec:subquadratic-kernels-MMVC}) subquadratic kernels for \textsc{Maximum Minimal Vertex Cover} on $H$-free graphs for some graphs $H$ satisfying the (constructive) Erd\H{o}s-Hajnal property, such as the bull, the complete graphs, or the paw. It would be interesting to obtain subquadratic kernels for other graphs $H$ satisfying the Erd\H{o}s-Hajnal property, such as $C_4$, the diamond, $P_5$, or $C_5$.  Note that, from~\cite{Gyarfas13}, $C_4$ and the diamond satisfy the constructive Erd\H{o}s-Hajnal property with constant $\delta \geq 1/3$. Note also that the graphs constructed in the reduction of \autoref{thm:no-poly-kernel} are $\{C_5,\text{diamond}\}$-free, as they are bipartite, hence
{\sc MMVC} is \NP-hard on this class, in contrast to the fact (\autoref{cor:diamond-P5}) that {\sc MMVC} can be solved in linear time on $\{P_5,\text{diamond}\}$-free graphs.  To the best of our knowledge, the complexity on $P_5$-free graphs is open, as well as on $K_{1,t}$ graphs for $t \geq 3$ (see \autoref{lem:kernel_K1t}). It is worth mentioning that $P_5$-free graphs have unbounded cliquewidth, because co-bipartite graphs, which are $P_5$-free, have unbounded cliquewidth.

As defined in \autoref{sec:prelim}, for a graph $G$  let ${\sf mmvc}(G)$ be the maximum size of a minimal vertex cover of $G$. Boria et al.~\cite{BoriaCP15} proved that if $G$ is an $n$-vertex graph without isolated vertices, then  ${\sf mmvc}(G) \geq \lfloor n^{1/2} \rfloor$. Note that this immediately yields a quadratic kernel for {\sc MMVC}: if $k \leq \lfloor n^{1/2} \rfloor$ we answer ``\yes'', otherwise $n  \leq k^2$. By the same argument, if ${\cal C}$ is a graph class such that every $n$-vertex graph $G \in {\cal C}$ without isolated vertices satisfies ${\sf mmvc}(G) \geq n^{1/2 + \varepsilon}$, for some $\varepsilon > 0$, then {\sc MMVC} restricted to ${\cal C}$ admits a (subquadratic) kernel with at most $k^{\frac{2}{1+2\varepsilon}}$ vertices. It might be possible that this is the case for some of the $H$-free graph classes for which we provided subquadratic kernels in \autoref{sec:subquadratic-kernels-MMVC}: we were not able to find any counterexample, that is, a family of $n$-vertex $H$-free graphs $G$ for which ${\sf mmvc}(G) = \Theta(n^{1/2})$. In particular, the case of triangle-free graphs seems particularly interesting. Haviland~\cite{Haviland08}
and Goddard and Lyle~\cite{GoddardL12} established upper bounds on the size of a minimum independent dominating set (that is, the complement of a minimal vertex cover) of triangle-free graphs. It follows from their results~\cite{Haviland08,GoddardL12} that there exist $n$-vertex triangle-free graphs $G$ with ${\sf mmvc}(G) = \Theta(n^{2/3} \cdot \log n)$, hence if such a constant $\varepsilon > 0 $ as discussed above exists for triangle-free graphs, necessarily $\varepsilon \leq \frac{2}{3} - \frac{1}{2} = \frac{1}{6}$. Therefore, the smallest kernel that we may obtain in this way on triangle-free graphs would have $k^{\frac{2}{1+2\varepsilon}} \leq k^{3/2}$ vertices, which matches the size of the kernel that we obtained in \autoref{thm:kernel-Kt-free} for the particular case $t=3$, disregarding lower-order terms and  multiplicative constants. Finding such a constant $\varepsilon > 0$ on $H$-free graphs for small graphs $H$, in particular on triangle-free graphs, looks like a challenging problem, having  interesting connections with the Ramsey numbers~\cite{Haviland08,GoddardL12}.

\medskip

\noindent \textbf{Acknowledgement}.  We would like to thank Michael Lampis (resp. Magnus Wahlstr{\"{o}}m, Venkatesh Raman) for pointing us to reference~\cite{DubloisHGLM20} (resp. reference~\cite{GiannopoulouLSS16}, references~\cite{BiswasRS20,KratschPRR14}).

\bibliography{Biblio-Ignasi}

\end{document}